\let\origfootnote\footnote
\renewcommand{\footnote}[1]{%
	\unskip%
	\linespread{1}%
	\origfootnote{#1}%
	\linespread{1.2}%
}
\let\emph\relax 
\DeclareTextFontCommand{\emph}{\bfseries}
\def\bC{{\mathbb C}}       
\def\bM{{\mathbb M}}  
\def\bR{{\mathbb R}} 
\def\bN{{\mathbb N}}  
\def\bH{{\mathbb H}}
\def\mcd{{\mathcal{D}}}
\def\dd{{\mathrm{d}}}
\def\AdS{\mathrm{AdS}}
\def\PAdS{\mathrm{PAdS}}
\newtheorem{theorem}{Theorem}[section]
\newtheorem{lemma}[theorem]{Lemma}
\newtheorem{proposition}[theorem]{Proposition}
\newtheorem{corollary}[theorem]{Corollary}
\theoremstyle{definition}
\newtheorem{definition}[theorem]{Definition}
\newtheorem{remark}[theorem]{Remark}
\numberwithin{equation}{section}
\numberwithin{theorem}{section}
\begin{document}

 	
 	\author{Claudio Dappiaggi}
 	\address{Dipartimento di Fisica, Universit\`a degli Studi di Pavia, Via Bassi, 6, I-27100 Pavia, Italy \\
 		Istituto Nazionale di Fisica Nucleare -- Sezione di Pavia, Via Bassi, 6, I-27100 Pavia, Italy}
 	\email{claudio.dappiaggi@unipv.it}

 	\author{Hugo R. C. Ferreira}
 	\address{Istituto Nazionale di Fisica Nucleare -- Sezione di Pavia, Via Bassi, 6, I-27100 Pavia, Italy}
 	\email{hugo.ferreira@pv.infn.it}

 	
 	\title[Hadamard states for a scalar field in AdS]{On the algebraic quantization of a massive scalar field in anti-de-Sitter spacetime.}

 	
 	\date{\today}

 	
 	\begin{abstract}
 	We discuss the algebraic quantization of a real, massive scalar field in the Poincar\'e patch of the $(d+1)$-dimensional anti-de Sitter spacetime, with arbitrary boundary conditions. By using the functional formalism, we show that it is always possible to associate to such system an algebra of observables enjoying the standard properties of causality, time-slice axiom and F-locality. In addition, we characterize the wavefront set of the ground state associated to the system under investigation. As a consequence, we are able to generalize the definition of Hadamard states and construct a global algebra of Wick polynomials.
 	\end{abstract}

 	
 	\keywords{Anti de-Sitter, algebraic quantization, Hadamard state}

 	\maketitle


\section{Introduction}

Algebraic quantum field theory is a mathematically rigorous, axiomatic approach  which has evolved vastly since its formulation due to Haag and Kastler \cite{Haag:1963dh}. Striking successes have come from its application to the quantization of field theories in curved backgrounds. In this context, most notable has been the characterization of the class of physically acceptable states in terms of the singular structure of the truncated two-point function, the construction of the algebra of Wick polynomials and the development of perturbation theory, including a full-fledged analysis of renormalization and of the related freedoms. Several reviews on these topics have been written, e.g.~\cite{Brunetti:2015vmh,Benini:2013fia,Hollands:2014eia,Rejzner:2016hdj}.

All these works rely on rather mild assumptions and they encompass almost all scenarios of physical interest. In particular, at a geometric level, the only requirement, which is always made, is to consider a globally hyperbolic spacetime of arbitrary dimension. In this way, on the one hand, no pathology, such as closed timelike curves, can occur at the level of causal structure, whereas, on the other hand, it is possible to analyze the dynamics of most field theories in terms of an initial value problem. 

Although the necessity and the effectiveness of such assumption is at first glance unquestionable, it is not hard to realize that we are discarding many scenarios of physical interest whose underlying model does not fit in this scheme. An example, which has been recently investigated by one of us, is the Casimir effect \cite{Dappiaggi:2014gea}. Another important example is the case of field theories in anti-de Sitter (AdS) spacetime. This is a maximally symmetric solution of the Einstein equations with negative cosmological constant which is the arena for the renown AdS/CFT correspondence --- see for example the recent monograph \cite{Ammon:2015wua} ---, which has also been investigated rather extensively from the viewpoint of algebraic quantum field theory \cite{Duetsch:2002hc,Duetsch:2002wy,Rehren:2000tp,Ribeiro:2007hv}. 

Although the AdS/CFT correspondence offers a clear reason for the relevance of studying quantum field theories in AdS spacetime, we feel that, prior to addressing any issue related to this very important conjecture, it is worth filling another open gap. As a matter of fact, up to now there is no full-fledged and systematic study of the quantization in the algebraic approach of free fields on AdS. Several works which address specific points are available in the literature, notable examples being the seminal paper \cite{Avis:1977yn} and \cite{Allen:1985wd}. Yet several other questions remain open and algebraic quantum field theory seems to offer to best framework the answer them. 

In this paper we will be working in $\PAdS_{d+1}$, the so-called Poincar\'e patch of a $(d+1)$-dimensional AdS spacetime, which is the usual framework for the AdS/CFT correspondence, and we will be considering a real, massive scalar field. Via a conformal rescaling we will reformulate the problem in $\mathring{\bH}^{d+1}$, which is nothing but Minkowski spacetime with one Euclidean space coordinate, say $z$, constrained to have domain in the half line $(0,\infty)$. The dynamics becomes that of the wave equation with, in addition, a potential $\frac{m^2}{z^2}$ singular at $z=0$, $m$ being the mass of the field.

Since the underlying background is not globally hyperbolic, the dynamics cannot be solved in terms of an initial value problem, and a boundary condition at $z=0$ has to be imposed. This opens the question of which is the class of admissible boundary conditions of Robin type and on the correct mathematical method to implement them \footnote{A more general approach, known as Wentzell boundary conditions, could be considered. A preliminary investigation in this direction is available in \cite{Zahn:2015due}.}. Since the potential and consequently the generic solution to the equation of motion are expected to be singular at $z=0$, the standard paradigm of consider a linear combination of the field and of its normal derivative to the boundary is not applicable. A first investigation of this problem can be found in \cite{Ishibashi:2004wx}. Further elaborating this avenue of research, in \cite{Dappiaggi:2016fwc}, using the theory of Sturm-Liouville operators, the class of all possible ground states compatible with the boundary conditions were constructed. More precisely we built positive bidistributions in $\mathring{\bH}^{d+1}$ which are weak solutions of the equations of motion, implementing a chosen, admissible boundary condition and moreover invariant under the action of the isometry group of the background. 

Here, we start from these results, further elaborating on the algebraic quantization of a real, massive scalar field in $\PAdS_{d+1}$ by means of its equivalent realization in $\mathring{\bH}^{d+1}$. The first point that we address is the construction of all causal propagators, one for each admissible boundary condition, realizing them as the antisymmetric part of the states built in \cite{Dappiaggi:2016fwc}. Since the underlying spacetime is not globally hyperbolic, we cannot rely a priori on any of the standard properties of the propagators \cite{BGP}, having to prove them anew. Particularly interesting is the singular structure, encoded in the wavefront set, since, in addition to the same contribution as for wavelike operators on globally hyperbolic spacetimes, it includes a novel bit, namely, in the singular support, there are also those pairs of points which can be connected by a null geodesic reflected at the boundary $z=0$. 

It is important to stress that our analysis relies strongly on the relatively simple form both of the underlying metric and of the equation of motion, which allows for writing explicit formulae. Yet, especially when dealing with the study of the wavefront set of the distributions involved or when thinking about a generalization to a wider class of spacetimes, it is more effective to resort to more specific techniques and tools, such as those going under the name of b-calculus. A very interesting work in this direction has recently appeared \cite{Wrochna:2016ruq}, though only one boundary condition (that of Dirichlet type) has been considered. We remark that, though with different methods, our findings are in agreement with those of \cite{Wrochna:2016ruq}.  It is also worth mentioning that a special case of our model is the one with vanishing mass, which corresponds in $\mathring{\bH}^{d+1}$ to the so-called Casimir-Polder system, investigated for Dirichlet boundary conditions in \cite{Dappiaggi:2014gea}.

The identification and the analysis of the structural properties of all possible causal propagators can be seen as the starting point of the algebraic quantization procedure. As a matter of fact, using the functional formalism we construct the algebra of observables for the system under investigation, proving not only causality but also the time-slice axiom and the so-called {\em F-locality}. While the first stems from the structural properties of the causal propagator, the second is often ascribed to the existence of a Cauchy problem. We show that, by a suitable extension of the observables beyond those smooth and compactly supported, it is possible to make sense of the time-slice axiom also on a non globally hyperbolic spacetime, regardless of the choice of boundary condition. The third property was first proposed in \cite{Kay:1992es} and it dictates that the restriction of the algebra of observables to any globally hyperbolic subregion of the underlying spacetime should be $*$-isomorphic to the one constructed intrinsically in such region using the usual procedure. 

As the last point, we focus once more on the ground states built in \cite{Dappiaggi:2016fwc}, realizing them as full-fledged states on the algebra of observables. The first step of our analysis consists of characterizing their wavefront set, showing on the one hand that it has a richer singular structure, since the singular support includes pair of points connected by null geodesics reflected at the boundary, while on the other hand that its restriction to any globally hyperbolic subregion coincides with the wavefront set of Hadamard states as per \cite{Radzikowski:1996pa,Radzikowski:1996ei}. 

This result prompts us to call of Hadamard form all those states in $\PAdS_{d+1}$ whose associated truncated two-point function has the same wavefront set of the ground state or, equivalently, has an integral kernel differing from the one of the ground state by a smooth term. To strengthen our definition, we show how to construct the extended algebra of Wick polynomials. 

The synopsis of the paper is the following. In Section~\ref{sec:AdS} we review the geometric setting, in particular the Poincar\'e domain of an anti-de Sitter spacetime as well as the associated notion of chordal distance. In Section~\ref{sec:KG} we consider a real, massive scalar field in $\PAdS_{d+1}$ and we construct the associated equation in $\mathring{\bH}^{d+1}$. Eventually we review the choice of possible boundary conditions and their implementation in the language of a Sturm-Liouville problem. Section~\ref{sec:AQFT} is the core of the paper. In subsection~\ref{sec:causalpropagator}, we review the construction of the possible causal propagator, proving their structural properties and we analyze the associated wavefront set. These results are used in subsections~\ref{sec:offshellalgebra} and \ref{sec:onshellalgebra} to construct respectively the off-shell and the on-shell algebras of observables, proving in addition their structural properties, in particular the time-slice axiom and F-locality. Subsequently, in subsections~\ref{sec:hadamardstates} and \ref{sec:wickordering} we focus on the ground states of the theory under investigation, studying their wavefront set and formulating accordingly an extended notion of Hadamard states. Finally, we show how to construct a global algebra of Wick polynomials. Open problems are discussed succinctly in the conclusions.


\section{Anti-de Sitter and the Poincar\'e domain}
\label{sec:AdS}

Anti-de Sitter spacetime, $\AdS_{d+1}$ ($d\geq 2$), is the maximally symmetric solution of the $(d+1)$-dimensional Einstein's equation  with a negative cosmological constant $\Lambda$. It is defined in the embedding space $\bR^{d+2}$ endowed with line element
$$\dd s^2 = -\dd X^2_0 - \dd X^2_1 + \sum\limits_{i=2}^{d+1} \dd X^2_i \, , $$ 
where $(X_0,...,X_{d+1})$ are the standard Cartesian coordinates, as the region
\begin{equation} \label{eq:covering_space}
	-X^2_0-X^2_1+\sum_{i=2}^{d+1}X^2_i=-\ell^2 \, , \qquad \ell^2 \doteq -\frac{d(d-1)}{\Lambda} \, .
\end{equation}

The \emph{Poincar\'e fundamental domain}, $\PAdS_{d+1}$, is identified via the coordinate transformation
\begin{equation} \label{eq:Poincare_chart}
	\arraycolsep=1.4pt\def\arraystretch{2}  
	\left\{\begin{array}{l}
		X_0 = \dfrac{\ell}{z}t \, , \\
		X_i = \dfrac{\ell}{z} x_i \, , \quad i=1,...,d-1,\\
		X_d=\ell\left(\dfrac{1-z^2}{2z}+\dfrac{-t^2+\delta^{ij}x_i\,x_j}{2z}\right) \, , \\
		X_{d+1}=\ell\left(\dfrac{1+z^2}{2z}-\dfrac{-t^2+\delta^{ij}x_i\,x_j}{2z}\right) \, ,
	\end{array}\right.
\end{equation}
where $t, \, x_i \in \bR$ and $z\in \bR_{>0}$. This translates the constraint which descends from the identity $X_d+X_{d+1}=\frac{\ell}{z}$, hence  $\PAdS_{d+1}$ covers only half of the full $\AdS_{d+1}$ (see Fig.~\ref{fig:AdS}). In addition, the line element of the Poincar\'e domain becomes
\begin{equation} \label{eq:Poincare_metric}
	\dd s^2 = \frac{\ell^2}{z^2} \left(-\dd t^2+\dd z^2+\delta^{ij} \dd x_i \dd x_j\right) \, , \qquad i,j=1,...,d-1 \, ,
\end{equation}
where $\delta^{ij}$ is the Kronecker delta.
%
%
\begin{figure}
	\centering
	\includegraphics[scale=0.4]{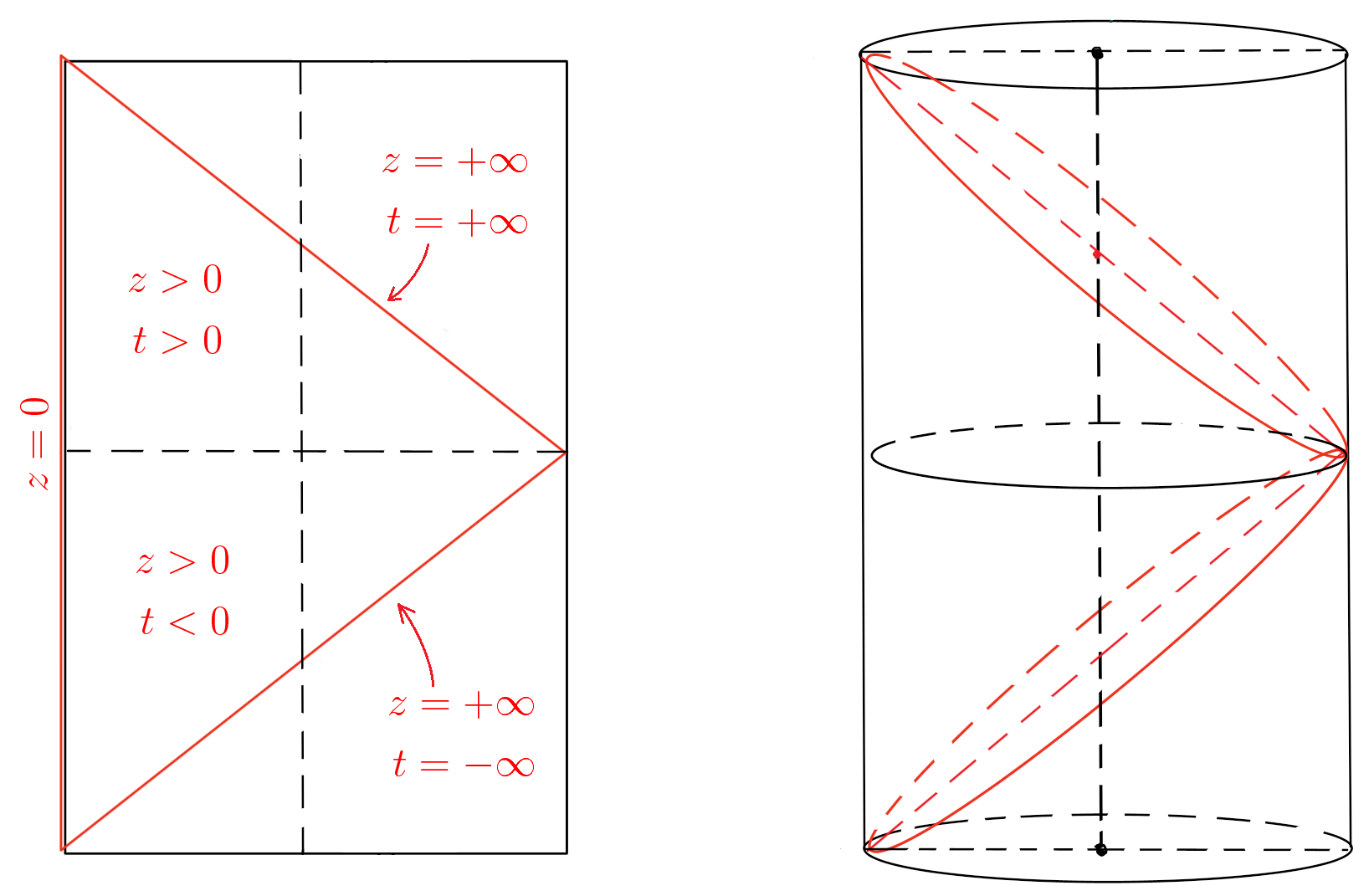}
	\caption{\label{fig:AdS}Conformal diagram of $\AdS_{d+1}$ and of the Poincar\'e domain $\PAdS_{d+1}$ with $(d-1)$ and $(d-2)$ spatial dimensions removed. 
		\vspace*{1ex}
	}
\end{figure}

Observe that $\PAdS_{d+1}$ is conformal to a portion of Minkowski spacetime, the ``upper-half plane''
$$\mathring{\bH}^{d+1}\doteq\{(t,x_1, \ldots, x_{d-1},z)\in\bR^{d+1}\;|\;z>0\} \, , $$ 
where the same Cartesian coordinates were adopted as in \eqref{eq:Poincare_metric}. Endowing $\mathring{\bH}^{d+1}$ with the standard Minkowskian metric $\eta$, then $\eta=\Omega^2 g=\frac{z^2}{\ell^2}g$ where $g$ is the metric \eqref{eq:Poincare_metric} of $\PAdS_{d+1}$ and $\Omega=\frac{z}{\ell}$ is a conformal factor. For later convenience, we observe that \eqref{eq:Poincare_metric} is meaningful also for negative values of $z$, the only singularity occurring at $z=0$. Therefore we can introduce the map which reflects points along the $z=0$ hyperplane:
\begin{equation}\label{inversion}
\iota_z:\bR^{d+1}\to\bR^{d+1},\quad (\underline{x},z)\mapsto\iota_z(\underline{x},z)= (\underline{x},-z) \, ,
\end{equation} 
where $\underline{x}=(t,x_1,...,x_{d-1})$.

An important concept, namely when discussing maximally symmetric quantum states, is that of invariant distance in AdS. Intrinsically, one can define the geodesic distance $s$ on $\PAdS_{d+1}$ between two arbitrary points $x$ and $x'$ and the Synge's world function $\sigma$ given by $\sigma(x,x') \doteq \frac{1}{2}s(x,x')^2$. Alternatively, one can define the chordal distance $s_{\rm e}$ between $x$ and $x'$ through the embedding space $\bR^{d+2}$ and the Synge's world function defined on $\bR^{d+2}$ as
\begin{equation} \label{eq:sigmae}
	\sigma_{\rm e}(x,x') \doteq \frac{1}{2}s_{\rm e}(x,x')^2 = \frac{1}{2} \eta^{AB} (X_A - X'_A) (X_B - X'_B) \, ,
\end{equation}
with $x$ and $x^\prime$ constrained by \eqref{eq:covering_space}. The intrinsic and chordal distances defined in this way are related by 
\begin{equation} \label{eq:relationsigmas}
	\cosh \left(\frac{s}{\ell}\right) = 1 + \frac{s_{\rm e}^2}{2 \ell^2} \, , \qquad
	\cosh \left(\frac{\sqrt{2\sigma}}{\ell}\right) = 1 + \frac{\sigma_{\rm e}}{\ell^2}
\end{equation}
(see e.g.~Section 2.4 of \cite{Kent:2013}). Finally, in the Poincar\'e fundamental domain, the Synge's world function $\sigma$ in $\PAdS_{d+1}$ and the Synge's world function $\sigma_{\bM}$ in $\mathring{\bH}^{d+1}$ are related by
\begin{equation} \label{eq:sigmasigmaM}
	\cosh^2 \left(\frac{\sqrt{2\sigma}}{\ell}\right) = 1 + \frac{\sigma_{\bM}}{2zz'} = \frac{\sigma_{\bM}^{(-)}}{2zz'} \, ,
\end{equation}
where, with $i,j=1,\ldots,d-1$,
\begin{align*}
	\sigma_{\bM} \doteq \frac{1}{2} \left[ - (t-t')^2 + \delta^{ij} (x_i - x'_i)(x_j - x'_j) + (z-z')^2 \right] \, ,
\end{align*}
and $\sigma_{\bM}^{(-)} \doteq \iota_z(\sigma_{\bM})$.

\vspace*{0.5ex}

\begin{remark}
	In the rest of the paper, we set $\ell \equiv 1$.
\end{remark}

\vspace*{1.5ex}


\section{Massive scalar field in AdS}
\label{sec:KG}


\subsection{Klein-Gordon equation}
\label{sec:KG_equation}

We consider a real, massive scalar field in the Poincar\'e domain $\phi: \PAdS_{d+1}\to\bR$ such that 
\begin{equation}\label{eq:dynamicsP}
	P\phi=\left(\Box_g - m_0^2 - \xi R \right)\phi=0 \, ,
\end{equation}
where $\Box_g$ is the D'Alembert wave operator built out of the metric \eqref{eq:Poincare_metric}, $m_0$ is the mass of the scalar field, $\xi$ is the scalar-curvature coupling constant and $R = - d(d+1)$ is the Ricci scalar. 

In order to study the solutions of this equation, we recall that $\PAdS_{d+1}$ is conformal to $\mathring{\bH}^{d+1}$ and translate \eqref{eq:dynamicsP} into a partial differential equation intrinsically defined in $\mathring{\bH}^{d+1}$. This is a standard procedure, see e.g.~Appendix D of \cite{Wald}. Let $\phi:\PAdS_{d+1}\to\bR$ be any solution of \eqref{eq:dynamicsP} and let $\Phi\doteq\Omega^{\frac{1-d}{2}}\phi$. The latter can be read as a scalar field $\Phi:\mathring{\bH}^{d+1}\to\bR$, solution of the equation
\begin{equation} \label{eq:conformally_rescaled_dynamics}
	P_\eta \Phi = \left(\Box_\eta-\frac{m^2}{z^2}\right) \Phi=0 \, ,
\end{equation}
in which $\Box_\eta$ is the standard wave operator built out of the Minkowski metric $\eta$ and
\footnote{Note that $m^2$ differs from the ``effective mass'' $m_0^2+\xi R$ used in other references.}
$m^2 \doteq m_0^2+(\xi-\frac{d-1}{4d})R$. In other words, the Klein-Gordon equation in $\PAdS_{d+1}$ is transformed to a wave equation on $\mathring{\bH}^{d+1}$ with a potential, singular at $z=0$. 

The first step in the algebraic quantization scheme consists of characterizing the set of smooth solutions of \eqref{eq:conformally_rescaled_dynamics}. To this end, it is necessary to construct the retarded-minus-advanced fundamental solution $G_{\bH} \in \mathcal{D}'(\mathring{\bH}^{d+1} \times \mathring{\bH}^{d+1})$ associated to $P_\eta$, also known as the causal propagator. Although its principal symbol is normally hyperbolic and we can consider $\mathring{\bH}^{d+1}$ as embedded in Minkowski spacetime, the presence of the singular potential $\frac{m^2}{z^2}$ does not allow us to use the standard theorems of existence and uniqueness for the advanced and retarded Green operators \cite{BGP}.

Hence, one needs to resort to a detailed analysis of the case in hand to construct explicitly the causal propagator. This is a lengthy calculation discussed thoroughly in \cite{Dappiaggi:2016fwc}. In the following we report succinctly the main tools and results of this paper.


\subsection{Boundary conditions}

Following \cite{Dappiaggi:2016fwc}, in view of the invariance of the metric \eqref{eq:Poincare_metric} under translations along the directions orthogonal to $z$, we take the Fourier transform,
\begin{equation} \label{eq:Fouriertransf}
\Phi(\underline{x},z) = \int_{\bR^d} \frac{\dd^d\underline{k}}{(2\pi)^{\frac{d}{2}}} \, e^{i\underline{k}\cdot \underline{x}} \, \widehat{\Phi}_{\underline{k}}(z) \, ,
\end{equation}
with $\underline{x} \doteq (t, x_1, \ldots, x_{d-1})$, $\underline{k} \doteq (\omega, k_1, \ldots, k_{d-1})$. The modes $\widehat{\Phi}_{\underline{k}}$ are solutions of
\begin{equation} \label{eq:STeq}
L \, \widehat{\Phi}_{\underline{k}} = \lambda \, \widehat{\Phi}_{\underline{k}}(z) \, , \qquad
\lambda \equiv q^2 \doteq \omega^2 - \displaystyle\sum_{i=1}^{d-1} k_i^2 \, ,
\end{equation}
where we introduce the singular Sturm-Liouville operator 
$$ L \doteq -\frac{\dd^2}{\dd z^2}+\frac{m^2}{z^2} \, , \qquad z\in(0,\infty) \, , $$
and $m^2$ is the parameter in \eqref{eq:conformally_rescaled_dynamics}. Following \cite[\S 10]{Zettl:2005}, we introduce the following definition.

\begin{definition}
	For any $z_0\in (0,\infty)$ we call \emph{maximal domain} associated to $L$
	\begin{equation*}
	D_{\rm max}(L;z_0) \doteq \left\{\Psi:(0,z_0)\to\bC \;|\; \Psi, \tfrac{\dd\Psi}{\dd z} \in AC_{\rm loc}(0,z_0) \; \textrm{and} \; \Psi,L(\Psi)\in L^2(0,z_0) \right\},
	\end{equation*}
	where $AC_{\rm loc}(0,z_0)$ stands for the collection of all complex-valued, locally absolutely continuous functions on $(0,z_0)$.
\end{definition}

One of the key advantages of the notion of maximal domain is the possibility of using it as the starting point to implement boundary conditions at $z=0$. Still following \cite{Zettl:2005}, we introduce the fundamental pair of solutions of $L \Phi = q^2 \Phi$ as 
\begin{subequations}
\begin{align}
\Phi_1(z) &= \sqrt{\dfrac{\pi}{2}} q^{-\nu}\sqrt{z} \, J_\nu(qz) \, ,\label{sol1} \\
\Phi_2(z) &= \begin{cases}
- \sqrt{\dfrac{\pi}{2}} \, q^{\nu} \sqrt{z}  \, J_{-\nu}(qz) \, , & \nu \in (0,1) \, , \\
- \sqrt{\dfrac{\pi}{2}} \sqrt{z} \left[ Y_{0}(qz) - \dfrac{2}{\pi} \log(q) \right] \, , & \nu = 0 \, ,
\end{cases}\label{sol2}
\end{align}
\end{subequations}
where $\nu \doteq \frac{1}{2}\sqrt{1+4m^2}$, while $J_{\nu}$ and $Y_{\nu}$ stand for the Bessel functions of first and second kind, respectively. Since, per definition,  fundamental solutions ought to be square-integrable in $(0,\infty)$, only $\Phi_1$ exists for $\nu\geq 1$.

\begin{definition} \label{def:boundary_condition}
	We say that $\Psi_{\alpha} : (0,\infty) \to \bC$ satisfies an \emph{$\alpha$-boundary condition} (compatible with $L$) at the endpoint 0, or equivalently that $\Psi_{\alpha} \in D_{\rm max}(L;z_0)$, if the following two conditions hold true:
	\begin{itemize}
		\item[(i)] there exists $z_0\in(0,\infty)$ such that $\Psi_{\alpha} \in D_{\rm max}(L;z_0)$
		\item[(ii)] there exists $\alpha\in (0,\pi]$ such that
		$$\lim_{z\to 0} \big\{ {\cos(\alpha) \, W_z[\Psi_{\alpha},\Phi_1]+\sin(\alpha) \, W_z[\Psi_{\alpha},\Phi_2]} \big\} = 0 \, , $$
		where $W_z[\Psi_{\alpha},\Phi_i] = \Psi_{\alpha}\frac{\dd \Phi_i}{\dd z} - \Phi_i\frac{\dd\Psi_{\alpha}}{\dd z}$, $i=1,2$, is the Wronskian.
	\end{itemize}
\end{definition}

\begin{remark}
	The $\alpha$-boundary condition is more commonly known as {\em Robin} boundary condition. In particular, if $\alpha=\pi$ we refer to it as being a {\em Dirichlet} or {\em Friedrichs} boundary condition, while $\alpha=\frac{\pi}{2}$ is sometimes referred to as {\em Neumann} boundary condition. More details may be found in \cite{Dappiaggi:2016fwc}. Notice that, up to an irrelevant sign, we could have replaced $\alpha=\pi$ with $\alpha=0$ and these two values can be interchanged without altering any result. Our choice is dictated by later notational convenience when dealing with the case $\nu\in(0,1)$.
\end{remark}

\begin{remark}
	In the following sections many quantities, such as the field solutions and the Green operators, depend explicitly not only on the boundary condition, parametrized by $\alpha$, but also on $\nu$ and the spacetime dimension $d+1$. For notation simplicity, we only make the $\alpha$ dependency explicit.
\end{remark}


\section{Algebraic quantum field theory in AdS}
\label{sec:AQFT}


\subsection{Causal propagator}
\label{sec:causalpropagator}

Following the algebraic scheme of quantization, see for example \cite[\S 3]{Brunetti:2015vmh} or \cite{Benini:2013fia}, the building block necessary to construct both the space of smooth solutions of \eqref{eq:dynamicsP} and to realize in a covariant way the canonical commutation relations is the \emph{causal propagator} or \emph{retarded-minus-advanced fundamental solution}, $G_{\alpha} \in \mcd^\prime(\PAdS_{d+1}\times\PAdS_{d+1})$. Here, the $\alpha$ subscript stands for the $\alpha$-boundary condition that is applied at $z=0$, c.f.~Definition~\ref{def:boundary_condition}. From a structural point of view we realize it as $G_{\alpha} = G_{\alpha}^+-G_{\alpha}^-$, the difference between the retarded $(+)$ and the advanced $(-)$ fundamental solutions of $P$. In other words, we look for $G_{\alpha}^\pm \in \mcd^\prime(\PAdS_{d+1}\times\PAdS_{d+1})$ such that 
\begin{equation} \label{advanced_minus_retarded}
P\circ G_{\alpha}^\pm = G_{\alpha}^\pm\circ P = \mathbb{I} \, , \quad 
\textrm{supp}(G^\pm_\alpha(f)) \subseteq J^\pm(\textrm{supp}(f)) \, ,
\end{equation}
where $\mathbb{I}$ is the identity on $C^\infty_0(\PAdS_{d+1})$, of which $f$ is an arbitrary element. The symbols $J^\pm$ indicate the causal future $(+)$ and the causal past $(-)$ in $\PAdS_{d+1}$. Observe that \eqref{advanced_minus_retarded} entails that $P \circ G_{\alpha} = 0$ and, equivalently, $G_{\alpha} \circ P = 0$. 

In \cite{Dappiaggi:2016fwc}, the commutator function is constructed for all admissible boundary conditions. More precisely, starting from \eqref{eq:conformally_rescaled_dynamics}, we look for  $G_{\bH,\alpha}^\pm \in \mathcal{D}'\big(\mathring{\bH}^{d+1} \times \mathring{\bH}^{d+1}\big)$ fulfilling the same properties of \eqref{advanced_minus_retarded} with $P$ replaced by $P_\eta$, the operator defined in \eqref{eq:conformally_rescaled_dynamics}. The propagators in $\PAdS_{d+1}$ can be reconstructed via the rescaling
\begin{equation}\label{rescaling}
G_{\alpha}^\pm = (zz^\prime)^{\frac{d-1}{2}} G^\pm_{\bH,\alpha} \, .
\end{equation}
The integral kernel of $G^\pm_{\bH,\alpha}$ is of the form \cite{Fulling:1989}
\begin{align}
	G_{\bH,\alpha}^+(\underline{x}-\underline{x}',z,z') &= -\Theta(t-t^\prime) \, G_{\bH,\alpha}(\underline{x}-\underline{x}',z,z') \, ,\label{retarded} \\ 
	G_{\bH,\alpha}^-(\underline{x}-\underline{x}',z,z') &= \Theta(t^\prime-t) \, G_{\bH,\alpha}(\underline{x}-\underline{x}',z,z') \, , \label{advanced}
\end{align}
where $\underline{x}=(t,x_1,\ldots,x_{d-1})$. Here, $G_{\bH,\alpha}(\underline{x}-\underline{x}^\prime,z,z^\prime)$ enjoys the following properties:
\begin{gather}
\left(P_\eta\otimes\mathbb{I}\right)G_{\bH,\alpha} = \left(\mathbb{I}\otimes P_\eta\right)G_{\bH,\alpha} =0 \, , \label{defining_Green} \\
G_{\bH,\alpha}(f,f') = - G_{\bH,\alpha}(f',f) \quad \forall f,f' \in C^\infty_0\big(\mathring{\bH}^{d+1}\big) \, , \notag \\
\iota^*_{t=t'} G_{\bH,\alpha}=0 \, , \quad \textrm{and} \quad 
\iota^*_{t=t'} \partial_{t}G_{\bH,\alpha} = \iota^*_{t=t'} \partial_{t'}G_{\bH,\alpha} = \prod_{i=1}^{d-1} \delta(x_i-x_i') \delta(z-z') \, , \notag
\end{gather}
where $\iota_{t=t'}: \mathring{\bH}^d\times\mathring{\bH}^d \to \mathring{\bH}^{d+1}\times\mathring{\bH}^{d+1}$ is the diagonal map which embeds each factor $\mathring{\bH}^d$ in $\mathring{\bH}^{d+1}$ as the locus $\{t\}\times\mathring{\bH}^d$. In order to construct an explicit expression for the integral kernel of $G_{\bH,\alpha}$, we resort to a mode expansion
\begin{gather*}
G_{\bH,\alpha}(\underline{x}-\underline{x}^\prime,z,z^\prime) = \int_{\bR^d} \frac{\dd^d\underline{k}}{(2\pi)^{\frac{d}{2}}} e^{i\underline{k}\cdot_d(\underline{x}-\underline{x}^\prime)} \, \widehat{G}_{\underline{k},\alpha}(z,z^\prime)
\end{gather*}
where $\underline{k} = (\omega, k_1, \ldots, k_{d-1})$ and $\cdot_d$ stands for the scalar product on the $d$-dimensional Minkowski spacetime. The remaining unknown $\widehat{G}_{\underline{k},\alpha}(z,z^\prime)$ is a symmetric solution of the following set of equations
\begin{equation*}
	(L\otimes\mathbb{I}) \widehat{G}_{\underline{k},\alpha}(z,z^\prime) = (\mathbb{I}\otimes L)\widehat{G}_{\underline{k},\alpha}(z,z^\prime) = q^2 \, \widehat{G}_{\underline{k},\alpha}(z,z^\prime) \, , \quad 
	L \doteq -\frac{\dd^2}{\dd z^2}+\frac{m^2}{z^2} \, ,
\end{equation*}
where $q^2 = \underline{k}\cdot_d\underline{k}$. 

In the following proposition we recollect some results proven in \cite{Dappiaggi:2016fwc}. Letting 
\begin{equation}
	I_{\epsilon}(q,r,t,t') \doteq \int_0^\infty \dd k \, k \left(\frac{k}{r}\right)^{\frac{d-3}{2}} \! J_{\frac{d-3}{2}}(kr) \, q \, \frac{\sin\left(\sqrt{k^2+q^2}(t-t'-i\epsilon)\right)}{\sqrt{2\pi(k^2+q^2)}} \, ,
\end{equation}
where $r^2 \doteq \sum_{i=1}^{d-1} \big(x^i-{x'}^i\big)^2$, the following holds true.

\begin{proposition} \label{prop:GHmodes}
	The causal propagator $G_{\bH,\alpha} \in \mathcal{D}^\prime\big(\mathring{\bH}^{d+1}\times\mathring{\bH}^{d+1}\big)$ for different values of $\nu \in [0,\infty)$ has integral kernel given by the following expressions.
	\begin{enumerate}
		\item If $\nu \in [1, \infty)$,
		\begin{equation*} \label{eq:2pfnularge}
			G_{\bH,\pi}(x,x')
			= \lim_{\epsilon \to 0^+} \sqrt{zz'} \int_0^\infty \dd q \, I_{\epsilon}(q,r,t,t') \, J_\nu(qz) J_\nu(qz') \, .
		\end{equation*}
		\item If $\nu \in (0,1)$ and $c_{\alpha} \doteq \cot(\alpha) \leq 0$, that is, $\alpha\in [\frac{\pi}{2},\pi]$,
		\begin{align*}
		G_{\bH,\alpha}(x,x')
		&= \lim_{\epsilon \to 0^+} \sqrt{zz'}
		\int_0^\infty \dd q \, I_{\epsilon}(q,r,t,t')  \, 
		\frac{\psi_{c_{\alpha}}(z) \psi_{c_{\alpha}}(z')}{c_{\alpha}^2-2c_{\alpha}q^{2\nu}\cos(\nu\pi)+q^{4\nu}} \, ,
		\end{align*}
		where $\psi_{c_{\alpha}}(z) = c_{\alpha}J_\nu(qz)-q^{2\nu}J_{-\nu}(qz)$.
	\end{enumerate}
\end{proposition}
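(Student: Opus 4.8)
The plan is to construct $\widehat{G}_{\underline{k},\alpha}(z,z')$ directly from the spectral resolution of the singular Sturm--Liouville operator $L=-\frac{\dd^2}{\dd z^2}+\frac{m^2}{z^2}$ on $L^2(0,\infty)$, and then to reassemble the full kernel by undoing the transverse Fourier transform. The guiding principle is that, after the substitution \eqref{eq:STeq} reduces the dynamics to $L\,\widehat\Phi_{\underline k}=q^2\,\widehat\Phi_{\underline k}$, the causal propagator is entirely fixed by the generalized eigenfunction expansion of $L$ together with the canonical initial data in \eqref{defining_Green}: a mode labelled by the transverse momenta $k_i$ and by the spectral parameter $\lambda=q^2$ of $L$ oscillates in time with frequency $\Omega=\sqrt{\lambda+\sum_i k_i^2}$, and its contribution to the commutator is the familiar $\Omega^{-1}\sin(\Omega(t-t'))$. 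The whole problem therefore reduces to (i) determining the self-adjoint realizations of $L$ compatible with the $\alpha$-boundary condition of Definition~\ref{def:boundary_condition}, and (ii) computing the associated spectral measure $\dd\rho_\alpha$ and normalized generalized eigenfunctions $u_q^\alpha$.

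First I would classify the endpoints of $L$ in Weyl's sense. At $z=\infty$ the potential is negligible, the solutions are oscillatory and neither is square integrable there, so one is in the limit-point case; no condition is needed and the resolvent is fixed by the decaying solution off the real axis, namely (after the $i\epsilon$ prescription) the outgoing Hankel solution $\sqrt{z}\,H^{(1)}_\nu(qz)$. At $z=0$ a direct inspection of $\Phi_1,\Phi_2$ in \eqref{sol1}--\eqref{sol2} gives the limit-point case for $\nu\geq 1$ (only $\Phi_1$ is $L^2$ near $0$) and the limit-circle case for $\nu\in[0,1)$ (both are $L^2$). Consequently, for $\nu\geq 1$ there is a unique self-adjoint operator, the Friedrichs extension, whose spectral expansion is the Hankel transform with measure $q\,\dd q$ and eigenfunction $\sqrt{z}\,J_\nu(qz)$; this forces $\alpha=\pi$ and yields case (1), the extra factor $q$ being supplied by $I_\epsilon$. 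For $\nu\in(0,1)$ the one-parameter family of extensions is labelled exactly by $\alpha$, the eigenfunction realizing the boundary condition being $\sqrt{z}\,\psi_{c_\alpha}(z)$ with $\psi_{c_\alpha}=c_\alpha J_\nu(qz)-q^{2\nu}J_{-\nu}(qz)$.

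The spectral density is then obtained from the Weyl--Titchmarsh $m$-function: I would form the resolvent kernel $G_\lambda(z,z')=W^{-1}\,\phi_\alpha(z_<)\chi_\lambda(z_>)$, with $\phi_\alpha$ realizing the boundary condition at $0$ and $\chi_\lambda$ decaying at infinity, and apply Stone's formula $\dd E_\lambda=(2\pi i)^{-1}[G_{\lambda+i0}-G_{\lambda-i0}]\,\dd\lambda$ to read off $\dd\rho_\alpha$. Using the small- and large-argument asymptotics of the Bessel functions together with the identity $c_\alpha^2-2c_\alpha q^{2\nu}\cos(\nu\pi)+q^{4\nu}=|c_\alpha-q^{2\nu}e^{i\nu\pi}|^2$, the normalization factor collapses precisely to the denominator in case (2); the hypothesis $c_\alpha\leq 0$ guarantees this denominator never vanishes, i.e.\ the realization has no bound state and purely continuous spectrum $[0,\infty)$. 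Finally I would insert $\delta(z-z')=\int_0^\infty \dd\rho_\alpha(q)\,u_q^\alpha(z)u_q^\alpha(z')$ into the oscillator formula, perform the angular integration over the $(d-1)$ transverse momenta (which produces the radial kernel $(k/r)^{\frac{d-3}{2}}J_{\frac{d-3}{2}}(kr)$ and hence the factor $I_\epsilon(q,r,t,t')$), and verify \eqref{defining_Green}: antisymmetry and the equation of motion in each entry are manifest, while the $\delta$-jump of $\partial_t$ at $t=t'$ follows from completeness of $u_q^\alpha$.

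The main obstacle is the limit-circle computation of $\dd\rho_\alpha$ for $\nu\in(0,1)$: one must track the analytic continuation of $\chi_\lambda$ across the cut $\lambda\in[0,\infty)$, control the competing $z^{\frac12\pm\nu}$ behaviours of $\Phi_1,\Phi_2$ near $0$ so as to identify the extension with the correct $\alpha$, and extract the imaginary part of the $m$-function without spurious contributions. Verifying the absence of a discrete eigenvalue when $c_\alpha\leq 0$ — and, conversely, understanding the extra bound-state term that would appear for $c_\alpha>0$ — is the delicate point that both singles out the stated range of $\alpha$ and explains why case (2) takes the compact form above.
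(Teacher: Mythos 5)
Your proposal is correct and follows essentially the same route as the paper, which does not reprove this proposition but recollects it from the companion work \cite{Dappiaggi:2016fwc}, where precisely this Sturm--Liouville/Weyl--Titchmarsh analysis (endpoint classification, resolvent and $m$-function, Stone's formula for the spectral measure, and reassembly via the transverse Fourier transform) is carried out. Your identification of $c_\alpha\leq 0$ as the condition excluding a bound state also matches the Remark following the proposition, which records the extra $K_\nu$ contributions for $c_\alpha>0$.
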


\begin{remark}
	If $\nu=0$, the causal propagator is
	\begin{align*}
	G_{\bH,\alpha}(x,x')
	&= \lim_{\epsilon \to 0^+} \sqrt{zz'}
	\int_0^\infty \dd q \, \left\{ I_{\epsilon}(q,r,t,t') \, 
	\frac{\psi_{c_{\alpha}}(z) \psi_{c_{\alpha}}(z')}{(c_{\alpha}+\frac{2}{\pi}\log(q))^2+1} \right\} \\
	&\quad + 2 I_{\epsilon}\big({-e^{-\pi c_{\alpha}/2}},r,t,t'\big) \, K_0\big(e^{-\pi c_{\alpha}/2}z\big) K_0\big(e^{-\pi c_{\alpha}/2}z'\big) \, ,
	\end{align*}
	where $\psi_{c_{\alpha}}(z) = (c_{\alpha}+\frac{2}{\pi}\log(q))J_0(qz)-Y_0(qz)$, and if $\nu \in (0,1)$ with $c_{\alpha}>0$, it is
	\begin{align*}
	G_{\bH,\alpha}(x,x')
	&= \lim_{\epsilon \to 0^+} \sqrt{zz'}
	\int_0^\infty \dd q \, \left\{ I_{\epsilon}(q,r,t,t') \, 
	\frac{\psi_{c_{\alpha}}(z) \psi_{c_{\alpha}}(z')}{c_{\alpha}^2-2c_{\alpha}q^{2\nu}\cos(\nu\pi)+q^{4\nu}} \right\} \\
	&\quad + 2 I_{\epsilon}\big({-c_{\alpha}^{1/(2\nu)}},r,t,t'\big) \, K_{\nu}\big(c_{\alpha}^{1/(2\nu)}z\big) K_{\nu}\big(c_{\alpha}^{1/(2\nu)}z'\big) \, ,
	\end{align*}
	where $\psi_{c_{\alpha}}(z) = cJ_\nu(qz)-q^{2\nu}J_{-\nu}(qz)$. The second line of each expression gives the contribution of the ``bound state'' mode solutions, as first noted in \cite{Dappiaggi:2016fwc}. As we shall show, there is no ground state in these cases, and hence they will not be considered further in this paper.
\end{remark}

\subsubsection{Structural Properties of the causal propagator}

The explicit expression of the integral kernel of the causal propagator established in Proposition~\ref{prop:GHmodes} does not suffice. Since neither $\PAdS_{d+1}$ nor $\mathring{\bH}^{d+1}$ are globally hyperbolic, we cannot conclude a priori from the standard properties of normally hyperbolic operators that the associated advanced and retarded fundamental solutions possess the support properties required in \eqref{advanced_minus_retarded}, nor that $G_{\bH,\alpha}$ maps smooth, compactly supported initial data into smooth solutions, regardless of the value of $\nu$. Therefore, we investigate the structural properties of $G_{\bH,\alpha}$. We shall focus mainly on the scenario in which $\nu\in(0,1)$ as several boundary conditions are allowed. If $\nu\geq 1$ or if we consider $\alpha=\pi$ for $\nu\in(0,1)$ (corresponding to Dirichlet boundary conditions), our findings agree with those of \cite{Wrochna:2016ruq}. As we have stressed in the introduction, we will not make use of the tools proper of b-calculus, although they are certainly applicable to the case in hand. Special mention should go to the case $\nu=\frac{1}{2}$, which generalize the results of \cite{Dappiaggi:2014gea} to arbitrary boundary conditions. 

Our starting point is still \cite{Dappiaggi:2016fwc}, in which, using Proposition \ref{prop:GHmodes} and \eqref{rescaling},  we show\footnote{More precisely the results of \cite{Dappiaggi:2016fwc} have been proven for the two-point function of the ground state, of which the causal propagator is the antisymmetric part.} that the causal propagator $G_{\alpha} \in \mathcal{D}^\prime(\PAdS_{d+1}\times\PAdS_{d+1})$ for $\nu \in (0,1)$ and $\alpha\in[\frac{\pi}{2},\pi]$ may be written as 
\begin{equation}\label{causal_propagator}
G_{\alpha}(u) = \mathcal{N}_{\alpha} \left[ \cos(\alpha) \, G^{({\rm D})}(u) + \sin(\alpha) \, G^{({\rm N})}(u) \right] \, ,
\end{equation}
where $\mathcal{N}_{\alpha}$ is a normalization constant,
\begin{equation*}
\mathcal{N}_{\alpha} \doteq \frac{\Gamma \left(\frac{d-1}{2}\right)}{2^{d+1}\pi^{\frac{d+1}{2}} \left[\cos(\alpha)+\sin(\alpha)\right]} \, ,
\end{equation*}
and
\begin{subequations} 
	\begin{align}
	G^{({\rm D})}(u) &= \lim_{\epsilon \to 0^+} \left[ u_{\epsilon}^{-\frac{d}{2}-\nu} \, \frac{F \big(\tfrac{d}{2}+\nu, \tfrac{1}{2}+\nu; 1+2\nu; u_{\epsilon}^{-1}\big)}{\Gamma(1+2\nu)} - (\epsilon \leftrightarrow - \epsilon) \right] \, ,\label{GD} \\
	G^{({\rm N})}(u) &=  \lim_{\epsilon \to 0^+} \left[ u_{\epsilon}^{-\frac{d}{2}+\nu} \, \frac{F \big(\tfrac{d}{2}-\nu, \tfrac{1}{2}-\nu; 1-2\nu; u_{\epsilon}^{-1}\big)}{\Gamma(1-2\nu)}  - (\epsilon \leftrightarrow - \epsilon) \right] \, ,\label{GN}
	\end{align}
\end{subequations}
$F$ is the Gaussian hypergeometric function and $u_{\epsilon} \doteq u(\sigma + 2i \epsilon (t-t') + \epsilon^2)$, with
\begin{equation}\label{u-function}
u = u(\sigma) \doteq \cosh^2 \left(\frac{\sqrt{2\sigma}}{2}\right) \, ,
\end{equation}	
where $\sigma$ is the Synge's world function on $\PAdS_{d+1}$ as defined in Section~\ref{sec:AdS}. 

\begin{remark}
	Observe that, if $\nu\geq 1$, \eqref{causal_propagator} is still valid but $\alpha$ must be fixed to $\alpha=\pi$ (or equivalently to $\alpha=0$). The superscripts (D) and (N), respectively in \eqref{GD} and in \eqref{GN}, refer to the fact that, up to a multiplicative constant, the causal propagator coincides with $G^{({\rm D})}$ and with $G^{({\rm N})}$ when we choose either Dirichlet or Neumann boundary conditions.
\end{remark}

The first question that we wish to answer concerns the singular structure of the causal propagator and its dependence on the choice of boundary conditions. To this end, we need a convenient decomposition of $G_\alpha$.

\begin{lemma}\label{lem:convenient_rewriting}
	The causal propagator $G_{\alpha} \in \mathcal{D}^\prime(\PAdS_{d+1}\times\PAdS_{d+1})$, in the case $\nu \in (0,1) \setminus \{\frac{1}{2}\}$, can be decomposed as
	\begin{equation} \label{eq:GG0}
	G_{\alpha}(u) = \mathcal{N}_{\alpha} \left[A_{\alpha} \tilde{G}(u) + B_{\alpha} \iota_z \tilde{G}(u) \right]
	\end{equation}
	where $\iota_z$ is the map \eqref{inversion}, while
	\begin{equation*}
	\tilde{G}(u) = \lim_{\epsilon \to 0^+} \left[F \big(\tfrac{d}{2}+\nu, \tfrac{d}{2}-\nu; \tfrac{d+1}{2}; u_{\epsilon}\big) - (\epsilon \leftrightarrow - \epsilon) \right] \, ,
	\end{equation*}
	and 
	\begin{align*}
	A_{\alpha} &= \cos(\alpha) \Upsilon_A(\nu) + \sin(\alpha) \Upsilon_A(-\nu) \, , \\
	B_{\alpha} &= \cos(\alpha) \Upsilon_B(\nu) + \sin(\alpha) \Upsilon_B(-\nu) \, ,
	\end{align*}
	and
	\begin{align*}
	\Upsilon_A(\nu) \doteq \frac{(-1)^{\frac{d}{2}}}{2^{\frac{d-1}{2}}} 
	\frac{\Gamma \left(\frac{d}{2}+\nu\right) \Gamma \left(\frac{d}{2}-\nu\right) \Gamma \left(1+2\nu\right)}{\Gamma \left(\frac{d+1}{2}\right) \Gamma \left(\frac{d-1}{2}\right)} 
	= - (-1)^{\frac{d}{2}-\nu} \, \Upsilon_B(\nu) \, .
	\end{align*}
\end{lemma}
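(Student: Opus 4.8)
The plan is to reduce the statement to a single scalar identity relating $G^{(\mathrm D)}$ to $\tilde G$ and its reflection, and then to recover \eqref{eq:GG0} by linearity from \eqref{causal_propagator}. Writing $a=\tfrac{d}{2}+\nu$, $b=\tfrac{d}{2}-\nu$, $c=\tfrac{d+1}{2}$, I would first establish
\[
G^{(\mathrm D)}(u)=\Upsilon_A(\nu)\,\tilde G(u)+\Upsilon_B(\nu)\,\iota_z\tilde G(u),
\]
after which the analogous identity for $G^{(\mathrm N)}$ follows by the substitution $\nu\mapsto-\nu$, which exchanges $a\leftrightarrow b$ and turns \eqref{GD} into \eqref{GN}. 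Inserting both expansions into \eqref{causal_propagator} and collecting the $\cos(\alpha)$ and $\sin(\alpha)$ contributions then reproduces $A_\alpha$ and $B_\alpha$ exactly.

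The geometric input I would isolate first is the action of the reflection on the variable $u$. A short computation using \eqref{eq:sigmasigmaM}, together with the fact that the map \eqref{inversion} exchanges $\sigma_{\bM}\leftrightarrow\sigma_{\bM}^{(-)}$ while sending $zz'\mapsto-zz'$, shows that $\iota_z$ acts on \eqref{u-function} as $u\mapsto 1-u$; hence $\iota_z\tilde G(u)=\tilde G(1-u)$, and the claim becomes a statement purely about the Gaussian hypergeometric function $F(a,b;c;\cdot)$ evaluated at $u$ and at $1-u$. The analytic core is then the connection formula for $F$. Up to the $i\epsilon$-antisymmetrization and the explicit factor $\Gamma(1+2\nu)^{-1}$, the bracketed term in \eqref{GD} is the solution $u^{-a}F(a,a-c+1;a-b+1;u^{-1})$ of the hypergeometric equation distinguished at $u=\infty$, whose parameters are readily checked to be exactly $(a,b,c)$. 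I would analytically continue this solution to the other singular points via the standard Gauss--Kummer relations. The decisive simplification is the identity $c=\tfrac{a+b+1}{2}$, equivalently $a+b-c+1=c$, which holds here because $a+b=d$: it forces $F(a,b;c;1-u)$ to be precisely the solution regular at $u=1$, so that $\tilde G(u)$ and $\iota_z\tilde G(u)$ span the relevant two-dimensional solution space. Expressing the $u=\infty$ solution in this basis yields coefficients that are ratios of Gamma functions; matching them against \eqref{GD} produces $\Upsilon_A(\nu)$ and $\Upsilon_B(\nu)$, and Euler's reflection formula $\Gamma(x)\Gamma(1-x)=\pi/\sin(\pi x)$ converts one into the other, giving the stated relation $\Upsilon_A(\nu)=-(-1)^{\frac{d}{2}-\nu}\,\Upsilon_B(\nu)$.

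The step I expect to be most delicate is the $i\epsilon$-bookkeeping in the antisymmetrization $\lim_{\epsilon\to0^+}\!\big[\,\cdot(u_\epsilon)-\cdot(u_{-\epsilon})\big]$. The $u=\infty$ solution is a priori a combination not only of the regular solutions at $u=0$ and $u=1$ but also of the two ``second'' solutions carrying the fractional powers $u^{1-c}$ and $(1-u)^{c-a-b}$, both with exponent $\tfrac{1-d}{2}$. I must verify that, with the prescription $u_\epsilon=u(\sigma+2i\epsilon(t-t')+\epsilon^2)$, these extra branches do not contribute to the antisymmetrized distribution, leaving only the two desired terms. This is also where the hypothesis $\nu\in(0,1)\setminus\{\tfrac12\}$ is used: for $\nu=\tfrac12$ one has $a-b=2\nu\in\bN$, the factor $\Gamma(1-2\nu)$ in \eqref{GN} diverges, the two $u=\infty$ solutions degenerate logarithmically, and the connection coefficients would have to be treated separately. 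Correctly tracking the phases $(-1)^{d/2}$ arising from these continuations, and hence the distinction between even and odd $d$, is the remaining bookkeeping that pins down the overall constants.
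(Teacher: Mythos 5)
Your proposal is correct and follows essentially the same route as the paper: the paper also identifies $\iota_z$ with $u\mapsto 1-u$ and invokes the Gauss--Kummer connection formulae (Eq.~(9.132) of \cite{Gradshteyn}) to relate $\tilde G$ and $\iota_z\tilde G$ to the two solutions at $u=\infty$ appearing in \eqref{GD} and \eqref{GN}, excluding $\nu=\tfrac12$ for exactly the degeneracy you note. The only cosmetic difference is direction: the paper writes $F(a,b;c;u_\epsilon)$ and $F(a,b;c;1-u_\epsilon)$ as combinations of the $u=\infty$ solutions and then inverts the resulting $2\times 2$ system, rather than expanding the $u=\infty$ solution in the $\{u=0,u=1\}$ basis directly, so the extra-branch bookkeeping you flag never arises.
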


\begin{proof}
	Using Eq.~(9.132) of \cite{Gradshteyn}, when $\nu \neq \frac{1}{2}$, it is possible to write
	\begin{align*}
	F \big(\tfrac{d}{2}+\nu, \tfrac{d}{2}-\nu; \tfrac{d+1}{2}; & u_{\epsilon}\big)  = A_{\nu} \, (-1)^{-\frac{d}{2}-\nu} \, u_{\epsilon}^{-\frac{d}{2}-\nu} \, \frac{F \big(\tfrac{d}{2}+\nu, \tfrac{1}{2}+\nu; 1+2\nu; u_{\epsilon}^{-1}\big)}{\Gamma(1+2\nu)} \\
	&\quad + A_{-\nu} \, (-1)^{-\frac{d}{2}+\nu} \, u_{\epsilon}^{-\frac{d}{2}+\nu} \, \frac{F \big(\tfrac{d}{2}-\nu, \tfrac{1}{2}-\nu; 1-2\nu; u_{\epsilon}^{-1}\big)}{\Gamma(1-2\nu)} \, , 
\end{align*}
\begin{align*}
	F \big(\tfrac{d}{2}+\nu, \tfrac{d}{2}-\nu; \tfrac{d+1}{2}; 1-u_{\epsilon}\big) &= A_{\nu} \, u_{\epsilon}^{-\frac{d}{2}-\nu} \, \frac{F \big(\tfrac{d}{2}+\nu, \tfrac{1}{2}+\nu; 1+2\nu; u_{\epsilon}^{-1}\big)}{\Gamma(1+2\nu)} \\
	&\quad + A_{-\nu} \, u_{\epsilon}^{-\frac{d}{2}+\nu} \, \frac{F \big(\tfrac{d}{2}-\nu, \tfrac{1}{2}-\nu; 1-2\nu; u_{\epsilon}^{-1}\big)}{\Gamma(1-2\nu)} \, ,
	\end{align*}
	where
	\begin{equation*}
	A_{\nu} \doteq \frac{\Gamma \left(\frac{d+1}{2}\right) \Gamma \left(-2\nu\right)}{\Gamma \left(\frac{d}{2}-\nu\right) \Gamma \left(\frac{1}{2}-\nu\right)} \, .
	\end{equation*}
	Note that if $\nu = \frac{1}{2}$, we can use Eq.~(9.154) of \cite{Gradshteyn} in a similar way and obtain analogous expressions. Inverting these relations, we obtain the desired result.
\end{proof}

As a consequence of this lemma, we can infer the wavefront set of $G_{\alpha}$. We recall that $\PAdS_{d+1}$ together with \eqref{eq:Poincare_metric} is conformal to $\mathring{\bH}^{d+1}\subset\bR^{d+1}$ endowed with the Minkowski metric.

\begin{corollary}[Wavefront set of the causal propagator]\label{WF_G_H}
	Let $G_{\alpha}$ be the causal propagator \eqref{causal_propagator} of the Klein-Gordon operator \eqref{eq:dynamicsP} and let $G_{\bH,\alpha}$ be the one of $P_\eta$ so that, according to \eqref{rescaling}, $G_{\alpha} = (zz^\prime)^{\frac{d-1}{2}} G_{\bH,\alpha}$. Then $WF(G_{\alpha}) = WF(G_{\bH,\alpha})$ and
	\begin{align*}
	WF(G_{\bH,\alpha}) = \left\{ (x,k;x',k') \in T^*\big(\mathring{\bH}^{d+1}\times\mathring{\bH}^{d+1}\big) \setminus \{0\} : 
	(x,k) \sim_{\pm} (x',k') \right\} \, ,
	\end{align*}
	where $\sim_{\pm}$ means that there exist null geodesics with respect to the Minkowski metric $\gamma, \gamma^{(-)} : [0,1] \to\bR^{d+1}$ such that either
	$\gamma(0) = x = (\underline{x}, z)$ and $\gamma(1)=x'$ or 
	$\gamma^{(-)}(0) = x^{(-)} = (\underline{x}, -z)$ and $\gamma^{(-)}(1) = x'$. Moreover,
	$k = (k_{\underline{x}}, k_z)$ ($k^{(-)} = (k_{\underline{x}}, -k_z)$) is coparallel to $\gamma$ ($\gamma^{(-)}$) at 0; and
	$-k'$ is the parallel transport of $k$ ($k^{(-)}$) along $\gamma$ ($\gamma^{(-)}$) at 1 (see Fig.~\ref{fig:wavefrontset}).
\end{corollary}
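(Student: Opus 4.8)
The plan is to read off the wavefront set from the decomposition of Lemma~\ref{lem:convenient_rewriting} together with the elementary branch-point structure of the hypergeometric kernel $\tilde G$, and then to transport the result by the reflection $\iota_z$. The identity $WF(G_\alpha)=WF(G_{\bH,\alpha})$ is immediate: by \eqref{rescaling} the two differ by the factor $(zz')^{\frac{d-1}{2}}$, which is smooth and nowhere vanishing on $\mathring{\bH}^{d+1}\times\mathring{\bH}^{d+1}$, and multiplication by such a factor preserves the wavefront set. I may therefore work with $G_\alpha$ and, for $\nu\in(0,1)\setminus\{\tfrac12\}$, use Lemma~\ref{lem:convenient_rewriting} to write $G_\alpha=\mathcal{N}_\alpha[A_\alpha\tilde G+B_\alpha\,\iota_z\tilde G]$, so that $WF(G_\alpha)\subseteq WF(\tilde G)\cup WF(\iota_z\tilde G)$; the remaining ranges $\nu\geq1$ and $\nu=\tfrac12$ are treated from the analogous explicit forms.

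The heart of the argument is the identification of $WF(\tilde G)$. The Gaussian hypergeometric function $F(\tfrac d2+\nu,\tfrac d2-\nu;\tfrac{d+1}2;w)$ is holomorphic on $\bC\setminus[1,\infty)$ with a single finite branch point at $w=1$, and the prescription $u_\epsilon=u(\sigma+2i\epsilon(t-t')+\epsilon^2)$ realizes $\tilde G$ as the difference of the two boundary values, from $\pm\epsilon$, of a function holomorphic in $\sigma$ away from the branch locus. Since $u=1\Leftrightarrow\sigma=0$ and, by \eqref{eq:sigmasigmaM} together with the conformal invariance of the null relation, $\sigma=0\Leftrightarrow\sigma_{\bM}=0$, the distribution $\tilde G$ is singular precisely on the Minkowski null set $\{\sigma_{\bM}=0\}$, i.e.\ where $x$ and $x'$ are joined by a null geodesic $\gamma$ of $\eta$. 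Microlocally $\tilde G$ is a conormal distribution to this hypersurface, whence $WF(\tilde G)\subseteq N^*\{\sigma_{\bM}=0\}$; because $d_x\sigma_{\bM}$ is the null covector tangent to $\gamma$, the conormal directions are exactly the pairs $(x,k;x',k')$ with $k$ coparallel to $\gamma$ and $-k'$ its parallel transport along $\gamma$. The same geodesic and parallel-transport content also follows from the fact that $\tilde G$ is, away from $z=0$, a weak bisolution of $P_\eta$ (an invertible combination of $G^{(\mathrm D)}$ and $G^{(\mathrm N)}$, each a bisolution by \eqref{defining_Green}) together with propagation of singularities. The tube direction $\propto(t-t')$ fixes the orientation, while the antisymmetrization over $\pm\epsilon$ produces both the future- and past-directed null covectors; the nonvanishing of the leading coefficient of $F$ at $w=1$ shows the inclusion is an equality on the whole null set. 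This is precisely the $\sim_+$ part of the asserted wavefront set.

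For the reflected term I regard $\tilde G$ as a bidistribution on $\{z\neq0\}\times\{z\neq0\}$, which is legitimate because \eqref{eq:Poincare_metric}, and hence $u$ and $P_\eta$, extend smoothly across $z=0$; its wavefront set is the direct null relation for all signs of $z,z'$. The map $\iota_z$ of \eqref{inversion} is a diffeomorphism of $\{z\neq0\}$ exchanging the two half-spaces, and its cotangent lift in the first slot sends $x\mapsto x^{(-)}$ and flips the $z$-component of the covector. By covariance of the wavefront set under diffeomorphisms, $(x,k;x',k')\in WF(\iota_z\tilde G)$ iff $(x^{(-)},k^{(-)};x',k')\in WF(\tilde G)$, that is, iff $x^{(-)}$ and $x'$ are joined by a direct null geodesic $\gamma^{(-)}$ with $k^{(-)}=(k_{\underline{x}},-k_z)$ coparallel to it, which is exactly the $\sim_-$ part. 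Finally, for $z,z'>0$ the loci $\{\sigma_{\bM}=0\}$ and $\{\sigma_{\bM}^{(-)}=0\}$ are disjoint, so $WF(\tilde G)$ and $WF(\iota_z\tilde G)$ do not meet; with $A_\alpha,B_\alpha$ generically nonzero no cancellation occurs and $WF(G_{\bH,\alpha})$ is exactly the union of the two pieces, as claimed.

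The main obstacle is the microlocal analysis of $\tilde G$: upgrading the containment furnished by the conormal/holomorphic-tube picture to an exact identification requires verifying that the leading singularity of $F$ at $u=1$ never degenerates along the null set and extracting the correct sign correlations (both orientations of the null covectors) from the antisymmetric $i\epsilon$-prescription. A secondary difficulty is justifying rigorously that $\tilde G$ extends as a bidistribution across $z=0$, so that the diffeomorphism argument for $\iota_z$ applies, together with dispatching the excluded value $\nu=\tfrac12$ and the range $\nu\geq1$ by the corresponding explicit formulae.
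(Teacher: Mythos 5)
Your proposal is correct and follows the same architecture as the paper's proof: the decomposition of Lemma~\ref{lem:convenient_rewriting} into $\tilde G$ and $\iota_z\tilde G$, the identification of the singular supports with $u=1$ (direct null relation) and $u=0$ (reflected null relation), the transport of the wavefront set of the reflected piece by the $z\mapsto -z$ symmetry, and the final observation that the conformal rescaling $(zz')^{\frac{d-1}{2}}$ is smooth and nonvanishing so that $WF(G_\alpha)=WF(G_{\bH,\alpha})$. The one step you treat differently is the identification of $WF(\tilde G)$ itself. The paper observes that every pair of points in the singular support of $\tilde G$ is joined by a null geodesic entirely contained in a globally hyperbolic subregion, and then simply invokes the standard theorem on the wavefront set of causal propagators of normally hyperbolic operators on globally hyperbolic spacetimes (Th.~16 of \cite{Bar:2009zzb}), which delivers both the inclusion and the equality, including the coparallel/parallel-transport structure, in one stroke. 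You instead extract the wavefront set by hand from the branch point of the hypergeometric kernel at $u=1$, the conormal structure over $\{\sigma_{\bM}=0\}$, the $i\epsilon$-prescription and propagation of singularities. This is a legitimate and more self-contained route, but it is also where all the work you flag as the ``main obstacle'' lives --- verifying non-degeneracy of the leading singularity along the whole light cone and pinning down both covector orientations from the antisymmetrized boundary values --- and the paper's citation sidesteps exactly that. Your handling of the reflected term via the cotangent lift of $\iota_z$ is equivalent to the paper's appeal to the invariance of $P_\eta$ under $\iota_z$ together with the standard behaviour of wavefront sets under diffeomorphisms, and your closing remark that the two singular loci are disjoint for $z,z'>0$ (so no cancellation can occur in the linear combination) makes explicit a point the paper leaves implicit.
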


%
\begin{figure}
	\centering
		{\small
			\begin{tikzpicture}[scale=1]
			
			\draw (-2,-0.12) -- (-2,0.12);
			\draw (2,-0.12) -- (2,0.12);
			\draw[-Latex] (-4,0) -- (6,0)
			node[pos=0.2, below, yshift=-1ex]{$-z_0$}
			node[pos=0.2, above, yshift=1ex]{$x^{(-)}$}
			node[pos=0.6, below, yshift=-1ex]{$z_0$}
			node[pos=0.6, above, yshift=1ex]{$x$}
			node[pos=1, below, xshift=-0.5ex, yshift=-1ex]{$z$}
			;
			\draw[-Latex] (0,-0.6) -- (0,3.6)
			node[pos=1, left, xshift=-0.5ex, yshift=-1ex]{$t$}	
			;
			\draw [fill] (2,0) circle [radius=0.05]; 
			\draw [fill] (-2,0) circle [radius=0.05]; 
			\draw [fill] (1,3) circle [radius=0.05]; 
			\draw [dashed] (4,2) -- (2,0) -- (0,2) -- (-2,0);
			\draw [dashed] (0,2) -- (1,3)
			node[pos=1, above, yshift=0.7ex]{$x'$}	
			;
			
			\draw[-Latex] [thick] (2,0) -- (1.5,0.5)
			node[pos=1, left, xshift=-0.5ex, yshift=-0.5ex]{$k$}
			;
			\draw[-Latex] [thick] (-2,0) -- (-1.5,0.5)
			node[pos=1, right, xshift=0.5ex, yshift=-0.5ex]{$k^{(-)}$}
			;
			\draw[-Latex] [thick] (1,3) -- (0.5,2.5)
			node[pos=1, right, xshift=0.5ex, yshift=-0.2ex]{$-k'$}
			;
			
			\end{tikzpicture}
		}
	\caption{\label{fig:wavefrontset}Wavefront set of the causal propagator.}
\end{figure}
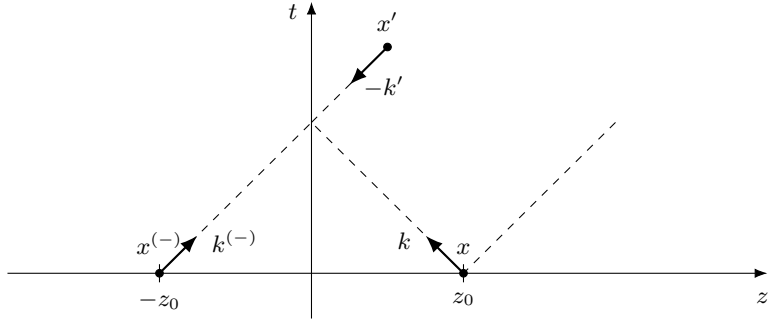

\begin{proof}
	Consider the causal propagator $G_{\alpha}$ written as in \eqref{eq:GG0}. The singular support of $\tilde{G}$ is at $u=1$, which corresponds to $\sigma=0$ or $\sigma_{\bH} = 0$, c.f.~Eq.~\eqref{eq:sigmasigmaM}, whereas the singular support of $\iota_z \tilde{G}$ is at $u=0$, which corresponds to $\sigma^{(-)}=0$ or $\sigma^{(-)}_{\bH}=0$. Every pair of points in the singular support of $\tilde{G}$ is thus connected by a null geodesic and such geodesic, including the endpoints, is contained in a globally hyperbolic subregion of $\PAdS_{d+1}$. An equivalent statement may be made for $\tilde{G}_{\bH} \doteq (zz^\prime)^{\frac{1-d}{2}} \tilde{G}$. Hence, the standard result, c.f.~\cite[Th. 16]{Bar:2009zzb}, applies,
	\begin{align*}
	WF(\tilde{G}_{\bH}) = \left\{ (x,k;x',k') \in T^*\big(\mathring{\bH}^{d+1}\times\mathring{\bH}^{d+1}\big) \setminus \{0\} : 
	(x,k) \sim_+ (x',k') \right\} \, ,
	\end{align*}
	where $\sim_+$ means that there exist null geodesics $\gamma : [0,1] \to \bR^{d+1}$ with 
	$\gamma(0) = x = (\underline{x}, z)$ and $\gamma(1) = x'$; 
	$k = (k_{\underline{x}}, k_z)$ is coparallel to $\gamma$ at 0; and	$-k'$ is the parallel transport of $k$ along $\gamma$ at 1. 
	
	In order to analyse $\iota_z(\tilde{G}_{\bH})$, it suffices to observe that $\iota_z$ leaves both $P$ and $P_\eta$ invariant. Hence, from the standard properties of the wavefront set \cite{Hormander1},
	\begin{align*}
	WF(\iota_z \tilde{G}_{\bH}) = \left\{ (x,k;x',k') \in T^*\big(\mathring{\bH}^{d+1}\times\mathring{\bH}^{d+1}\big) \setminus \{0\} : 
	(x,k) \sim_- (x',k') \right\} \, ,
	\end{align*}
	where $\sim_-$ means that there exist null geodesics $\gamma^{(-)} : [0,1] \to \bR^{d+1}$ with 
	$\gamma^{(-)}(0) = x^{(-)} = (\underline{x}, -z)$ and $\gamma^{(-)}(1) = x'$; 
	$k^{(-)} = (k_{\underline{x}}, -k_z)$ is coparallel to $\gamma^{(-)}$ at 0; and	$-k'$ is the parallel transport of $k^{(-)}$ along $\gamma^{(-)}$ at 1. 
	
	To conclude the proof, observe that $P_\eta$ and $G_{\bH,\alpha}$ are constructed respectively from $P$ and $G_{\alpha}$ via a conformal transformation which, thus, does not change the form of the wavefront set of $G_{\alpha}$.
\end{proof}

This corollary proves also that the wavefront set of the causal propagator is the same for every possible boundary condition and its form agrees both with the one in \cite{Wrochna:2016ruq}, valid for all values of $\nu$ and for the Dirichlet boundary condition, and with the one used in \cite{Dappiaggi:2014gea} for the special case of $\nu=\frac{1}{2}$, still with the Dirichlet boundary condition. 

Since our goal is to study both the space of classical, dynamical configurations and that of observables, our next step consists of fixing the kinematic arena. In this case it is more convenient to work with functions on $\mathring{\bH}^{d+1}$. Recall that their counterpart on $\PAdS_{d+1}$ can be obtained multiplying the pre-factor $z^{\frac{d-1}{2}}$.

\begin{definition}\label{kinematic_configuration}
	We call space of \emph{kinematic/off-shell configurations} with an $\alpha$-boundary condition 
	$$ \mathcal{C}_{\alpha} \big(\mathring{\bH}^{d+1}\big) \doteq \left\{\Phi_{\alpha} \in C^\infty\big(\mathring{\bH}^{d+1}\big)\;\big|\;\widehat{\Phi}_{\underline{k},\alpha} \in D_{\rm max}(L;\alpha) \right\} \, , $$
	where
	$$\widehat{\Phi}_{\underline{k},\alpha} \equiv \widehat{\Phi}_{\underline{k},\alpha}(z) = \int_{\bR^n} \frac{\dd^d\underline{x}}{(2\pi)^\frac{d}{2}} \, e^{-i\underline{k}\cdot_d\underline{x}} \, \Phi_{\alpha}(\underline{x},z) \, , $$
	and $\underline{x}=(t,x_1,...,x_{d-1})$ and $\underline{k}=(\omega,k_1,...,k_{d-1})$. A convenient subspace is
	\begin{align*}
	\tilde{\mathcal{C}}_{\alpha} \big(\mathring{\bH}^{d+1}\big) &\doteq \left\{\Phi_{\alpha} \in \mathcal{C}_\alpha \big(\mathring{\bH}^{d+1}\big)\;\big|\; \exists \, F_1,F_2\in C^\infty\big(\bH^{d+1}\big) \, , \right. \\
	&\qquad\! \left.\Phi_{\alpha}(\underline{x}, z) = \cos(\alpha) z^{\nu+\frac{1}{2}} F_1(\underline{x},z) + \sin(\alpha) z^{-\nu+\frac{1}{2}} F_2(\underline{x},z)\right\} \, ,
	\end{align*}
	where $\alpha\in (0,\pi]$, $\bH^{d+1}\subset\bR^{d+1}$ is the collection of all points in $\bR^{d+1}$ for which $z\geq 0$ and, in the second line, both functions $F_1,F_2$ are restricted to $\mathring{\bH}^{d+1}$.
\end{definition}

Additionally, we introduce distinguished subspaces of $\tilde{\mathcal{C}}_\alpha(\mathring{\bH}^{d+1})$, the first of which adapts to the case in hand the concept of timelike compact functions --- see, for example, \cite{Baernew} or \cite[Ch. 3]{Brunetti:2015vmh}. Let $\pi_{\perp}:\bH^{d+1}\to\bH^d$ be the canonical projection on the hyperplane orthogonal to the $z$-direction. We call
\begin{align*}
\tilde{\mathcal{C}}_{\alpha,{\rm fc}} \big(\mathring{\bH}^{d+1}\big) &\doteq \left\{\Phi_{\alpha} \in \tilde{\mathcal{C}}_\alpha \big(\mathring{\bH}^{d+1}\big)\,\big|\,\right.\\
&\qquad \left. \pi_{\perp}\!\left(\textrm{supp}(\Phi_{\alpha})\cap J^+_{\mathring{\bH}^{d+1}}(p)\right)\textrm{is compact or}\;\emptyset\;\;\forall p\in\mathring{\bH}^{d+1} \right\} \, , \\
\tilde{\mathcal{C}}_{\alpha,{\rm pc}} \big(\mathring{\bH}^{d+1}\big) &\doteq \left\{\Phi_{\alpha} \in \tilde{\mathcal{C}}_\alpha \big(\mathring{\bH}^{d+1}\big)\,\big|\,\right.\\
&\qquad \left. \pi_{\perp}\!\left(\textrm{supp}(\Phi_{\alpha})\cap J^-_{\mathring{\bH}^{d+1}}(p)\right)\textrm{is compact or}\;\emptyset\;\;\forall p\in\mathring{\bH}^{d+1} \right\} \, , 
\end{align*}
while
\begin{align*}
\tilde{\mathcal{C}}_{\alpha,{\rm tc}} \big(\mathring{\bH}^{d+1}\big) &\doteq \tilde{\mathcal{C}}_{\alpha,{\rm fc}} \big(\mathring{\bH}^{d+1}\big) \cap \tilde{\mathcal{C}}_{\alpha,{\rm pc}} \big(\mathring{\bH}^{d+1}\big) \, ,
\end{align*}
where the subscripts pc, fc and tc stand respectively for past compact, future compact and timelike compact. Here $J^\pm_{\mathring{\bH}^{d+1}}$ are the causal future/past with respect to the Minkowski metric. Secondly, we define 
\begin{align}
\tilde{\mathcal{C}}_{\alpha,0} \big(\mathring{\bH}^{d+1}\big) &\doteq \left\{\Phi_{\alpha} \in \mathcal{C}_\alpha \big(\mathring{\bH}^{d+1}\big)\;\big|\; \exists \, F_1,F_2\in C^\infty_0\big(\bH^{d+1}\big) \,\textrm{such that}\;\forall z>0 , \right. \notag \\
&\qquad\! \left. \Phi_{\alpha}(\underline{x}, z) = \cos(\alpha) z^{\nu+\frac{1}{2}} F_1(\underline{x},z) + \sin(\alpha) z^{-\nu+\frac{1}{2}} F_2(\underline{x},z)\right\} \, .\label{eq:observables}
\end{align}
Observe that $C_0^{\infty}\big(\mathring{\bH}^{d+1}\big) \subset \tilde{\mathcal{C}}_{\alpha,0}\big(\mathring{\bH}^{d+1}\big)\subset\tilde{\mathcal{C}}_\alpha\big(\mathring{\bH}^{d+1}\big)$. An example of the support of an element of $\tilde{\mathcal{C}}_{\alpha,0}\big(\mathring{\bH}^{d+1}\big) \setminus C_0^{\infty}\big(\mathring{\bH}^{d+1}\big)$ is in Fig.~\ref{fig:suppf-boundary}.

%
\begin{figure}
	\centering
	{\small
		\begin{tikzpicture}[scale=1]
		
		\path [fill=lightgray] (0,-2) -- (0,2) to (0,-2) arc(-90:90:2); 
		\draw[gray] (0,-2) arc(-90:90:2) --cycle;
		\draw[-Latex] (-0.5,0) -- (5,0)
		node[pos=1, below, xshift=-0.5ex, yshift=-1ex]{$z$}
		node[pos=0.275, above]{${\rm supp}(f_{\alpha})$}
		;			;
		\draw[-Latex] (0,-3) -- (0,3)
		node[pos=1, left, xshift=-0.5ex, yshift=-1ex]{$t$}	
		;			
		\end{tikzpicture}
	}
	\caption{\label{fig:suppf-boundary}Support of some $f_{\alpha} \in \tilde{\mathcal{C}}_{\alpha,0}\big(\mathring{\bH}^{d+1}\big) \setminus C_0^{\infty}\big(\mathring{\bH}^{d+1}\big)$.}
\end{figure}
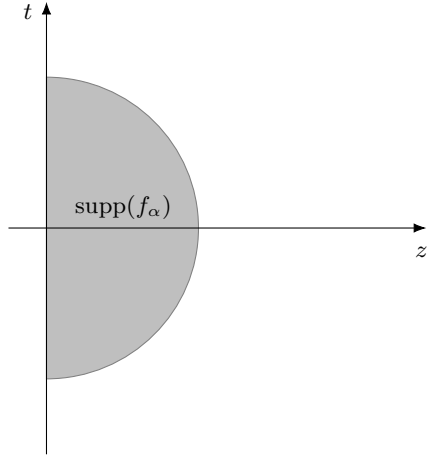

We have all ingredients to prove the structural properties of $G_{\bH,\alpha}$. We start by looking at a distinguished domain, namely smooth and compactly supported functions on $\mathring{\bH}^{d+1}$. In particular, we observe that $C^\infty_0\big(\mathring{\bH}^{d+1}\big) \subset \tilde{\mathcal{C}}_{\alpha,{\rm tc}}\big(\mathring{\bH}^{d+1}\big)$ for all admissible values of $\alpha$. 

\begin{lemma} \label{lemma:KerGH0}
	Let $G_{\bH,\alpha}$ be the bidistribution defined in Proposition \ref{prop:GHmodes}. Then, $G_{\bH,\alpha} : C^\infty_0\big(\mathring{\bH}^{d+1}\big) \to C^{\infty}\big(\mathring{\bH}^{d+1}\big)$,
	and ${\rm supp}(G^\pm_{\bH,\alpha}(f))\subseteq J^\pm_{\mathring{\bH}^{d+1}}({\rm supp}(f))$ for all $f\in C^\infty_0\big(\mathring{\bH}^{d+1}\big)$.
\end{lemma}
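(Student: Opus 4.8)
The plan is to treat the two assertions separately, since the regularity statement follows from microlocal data already in hand, whereas the support statement is where the geometry of the boundary genuinely enters.

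For the smoothness I would invoke the wavefront set computed in Corollary~\ref{WF_G_H}. Every covector $(x,k;x',k')\in WF(G_{\bH,\alpha})$ has both $k\neq 0$ and $k'\neq 0$, since $k$ is coparallel to a null geodesic and $-k'$ is its (necessarily non-vanishing) parallel transport; in particular $WF(G_{\bH,\alpha})$ contains no element of the form $(x,k;x',0)$. By the standard theorem on the composition of a distributional kernel with a test function (H\"ormander's bound $WF(G_{\bH,\alpha}f)\subseteq\{(x,k):(x,k;x',0)\in WF(G_{\bH,\alpha})\ \text{for some }x'\}$), the image $G_{\bH,\alpha}f$, a priori only a distribution for $f\in C^\infty_0(\mathring{\bH}^{d+1})$, then has empty wavefront set and is therefore smooth.

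For the support statement I would first reduce it to a single claim about the integral kernel. By \eqref{retarded}--\eqref{advanced} one has $G^\pm_{\bH,\alpha}=\mp\,\Theta(\pm(t-t'))\,G_{\bH,\alpha}$, so it suffices to show that the kernel $G_{\bH,\alpha}(x,x')$ vanishes whenever $x$ and $x'$ are spacelike separated for the Minkowski metric, i.e.\ $\sigma_{\bM}(x,x')>0$. Granting this, the retarded kernel is supported in $\{t\ge t'\}\cap\{\sigma_{\bM}\le 0\}=\{x\in J^+_{\rm Mink}(x')\}$; since $\mathring{\bH}^{d+1}=\{z>0\}$ is geodesically, hence causally, convex in Minkowski space, one has $J^+_{\mathring{\bH}^{d+1}}(p)=J^+_{\rm Mink}(p)\cap\mathring{\bH}^{d+1}$, and pairing with $f$ yields $\mathrm{supp}(G^+_{\bH,\alpha}f)\subseteq J^+_{\mathring{\bH}^{d+1}}(\mathrm{supp}\,f)$, symmetrically for $G^-$. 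To establish the spacelike vanishing I would work not with the decomposition of Lemma~\ref{lem:convenient_rewriting} (which is tailored to the singular support) but with the original representation \eqref{GD}--\eqref{GN}, equivalently the mode representation of Proposition~\ref{prop:GHmodes}. There the functions $u_\epsilon^{-d/2\mp\nu}F(\tfrac d2\mp\nu,\tfrac12\mp\nu;1\mp2\nu;u_\epsilon^{-1})$ are holomorphic in a neighbourhood of the spacelike region $u>1$ (their only branch points being at $u=0$ and $u=1$), so the two boundary values $\epsilon\to0^+$ and $(\epsilon\leftrightarrow-\epsilon)$ coincide there and the antisymmetric difference defining $G^{(\rm D)},G^{(\rm N)}$ vanishes identically. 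Alternatively, and more robustly, I would read the support off the mode kernel: $I_{\epsilon}(q,r,t,t')$ is, up to the harmless prefactor $q$, the Pauli--Jordan commutator of a mass-$q$ Klein--Gordon field on $d$-dimensional Minkowski space, hence supported in $\{r\le|t-t'|\}$; carrying out the $q$-integration against the Bessel mode kernel in $(z,z')$ reconstructs propagation in the $z$ direction and confines the support to the union of the direct cone $\{(t-t')^2\ge r^2+(z-z')^2\}$ and the reflected cone $\{(t-t')^2\ge r^2+(z+z')^2\}$. Since $(z+z')^2\ge(z-z')^2$ for $z,z'>0$, the reflected cone lies inside the direct one, so in either route the kernel is supported in $\{\sigma_{\bM}\le 0\}$, which is exactly what the reduction requires.

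The main obstacle is precisely this spacelike vanishing, and it is the point at which the failure of global hyperbolicity bites: one cannot simply quote finite propagation speed for a normally hyperbolic operator, both because of the singular potential $\frac{m^2}{z^2}$ and because of the timelike boundary $z=0$ at which a reflected signal is generated. Controlling that reflected contribution --- verifying that the $q$-integration against the Bessel kernels produces exactly the direct-plus-reflected cone structure and nothing acausal, uniformly in the boundary parameter $\alpha$ --- is the delicate analytic step; the elementary inequality $(z+z')^2\ge(z-z')^2$ is what keeps the reflected part inside the Minkowski causal future and makes the clean statement in terms of $J^\pm_{\mathring{\bH}^{d+1}}$ possible.
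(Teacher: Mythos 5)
Your proof is correct and follows essentially the same two-part strategy as the paper: smoothness via the wavefront set of Corollary~\ref{WF_G_H} combined with H\"ormander's theorem on the action of a kernel on a compactly supported test function (the paper cites \cite[Th.~8.2.12]{Hormander1}, which is exactly the bound you quote), and the support property via vanishing of the kernel at spacelike separation together with the Heaviside factors in \eqref{retarded}--\eqref{advanced}. The one place you diverge is that the paper does not prove the spacelike vanishing at all --- it simply recalls that this was established in \cite{Dappiaggi:2016fwc} --- whereas you supply an argument: the observation that spacelike separation corresponds to $u>1$, where $u_\epsilon^{-\frac d2\mp\nu}F(\cdot;u_\epsilon^{-1})$ is holomorphic (branch points only at $u=0,1$), so the two $\pm i\epsilon$ boundary values in \eqref{GD}--\eqref{GN} coincide and their difference vanishes. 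That argument is sound and self-contained, and is a genuine improvement in completeness over the paper's citation; your alternative mode-based route is more heuristic (the claim that the $q$-integration "confines the support to the direct-plus-reflected cone" is asserted rather than derived), so the holomorphy argument should be regarded as the actual proof and the mode picture as motivation. Your remark that the half-space $\{z>0\}$ is causally convex in Minkowski space, so that $J^\pm_{\mathring{\bH}^{d+1}}$ agrees with the restriction of the Minkowski causal future/past, is a detail the paper leaves implicit but which is needed to phrase the conclusion intrinsically in $\mathring{\bH}^{d+1}$.
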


\begin{proof}
		The proof is divided in two parts:
		
		\vskip .1cm
		
		\noindent{\em 1.} Since $G_{\bH,\alpha} \in \mcd'\big(\mathring{\bH}^{d+1}\times\mathring{\bH}^{d+1}\big)$, we need to check that partial evaluation on $C^\infty_0\big(\mathring{\bH}^{d+1}\big)$ yields a smooth function. To this end it suffices to observe that the wavefront set of $G_{\bH,\alpha}$ coincides with the one of $G_{\alpha}$  built in Corollary \ref{WF_G_H} since $G_{\alpha}$ and $G_{\bH,\alpha}$ are related by a rescaling smooth in $\mathring{\bH}^{d+1}\times\mathring{\bH}^{d+1}$. Hence, we can apply \cite[Th. 8.2.12]{Hormander1} to conclude that $G_{\alpha}\big[C^\infty_0\big(\mathring{\bH}^{d+1}\big)\big]\subset C^\infty\big(\mathring{\bH}^{d+1}\big)$.
		
		\vskip .2cm  
		
		\noindent {\em 2.} In the second part we show that $G^\pm_{\bH,\alpha}$ obey the sought support properties. To this end it suffices to recollect that in \cite{Dappiaggi:2016fwc} we have shown that the integral kernel of $G_{\alpha}$ vanishes for spacelike separated points. Combining this information with properties \eqref{retarded}, \eqref{advanced} and with the definition of the Heaviside function yields the desired result.
\end{proof}

The next step consists of showing that most of the structural properties of $G^\pm_{\bH,\alpha}$ hold true also when the domain is extended to include timelike compact functions.

\begin{lemma} \label{lemma:KerGH}
	Let $G_{\bH,\alpha}$ be the bidistribution defined in Proposition \ref{prop:GHmodes}. Then, $G_{\bH,\alpha} : \tilde{\mathcal{C}}_{\alpha,{\rm tc}}\big(\mathring{\bH}^{d+1}\big) \to  C^\infty\big(\mathring{\bH}^{d+1}\big)$,
	and ${\rm supp}(G^\pm_{\bH,\alpha}(\gamma))\subseteq J^\pm_{\mathring{\bH}^{d+1}}({\rm supp}(\gamma))$ for all $\gamma\in\tilde{\mathcal{C}}_{\alpha,{\rm tc}}\big(\mathring{\bH}^{d+1}\big)$. 
\end{lemma}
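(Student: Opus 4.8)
The plan is to bootstrap from Lemma~\ref{lemma:KerGH0} by using the causal support properties of $G^\pm_{\bH,\alpha}$ to localise the action of the propagator, thereby reducing any $\gamma\in\tilde{\mathcal{C}}_{\alpha,\mathrm{tc}}(\mathring{\bH}^{d+1})$ to compactly supported data to which the preceding lemma applies. Since $\tilde{\mathcal{C}}_{\alpha,\mathrm{tc}}=\tilde{\mathcal{C}}_{\alpha,\mathrm{fc}}\cap\tilde{\mathcal{C}}_{\alpha,\mathrm{pc}}$, I treat $G^+_{\bH,\alpha}$ and $G^-_{\bH,\alpha}$ separately and recombine them through $G_{\bH,\alpha}=G^+_{\bH,\alpha}-G^-_{\bH,\alpha}$; I describe the argument for $G^+_{\bH,\alpha}$, the advanced case being identical after reversing the time orientation. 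Fix $p\in\mathring{\bH}^{d+1}$ and a relatively compact neighbourhood $\mathcal{U}\ni p$. By~\eqref{retarded} and the support statement of Lemma~\ref{lemma:KerGH0}, the value of $G^+_{\bH,\alpha}(\gamma)$ on $\mathcal{U}$ depends only on $\gamma$ restricted to $J^-_{\mathring{\bH}^{d+1}}(\overline{\mathcal{U}})$, and past compactness of $\gamma$ forces $\pi_\perp\!\big(\mathrm{supp}(\gamma)\cap J^-_{\mathring{\bH}^{d+1}}(\overline{\mathcal{U}})\big)$ to be compact. Choosing a smooth time cutoff $\chi$ equal to $1$ on the corresponding time slab, the piece $(1-\chi)\gamma$ has support disjoint from $J^-_{\mathring{\bH}^{d+1}}(\overline{\mathcal{U}})$, so $G^+_{\bH,\alpha}((1-\chi)\gamma)$ vanishes on $\mathcal{U}$; hence $G^+_{\bH,\alpha}(\gamma)=G^+_{\bH,\alpha}(\chi\gamma)$ there, and $\chi\gamma$ has $(t,x_1,\dots,x_{d-1})$-support contained in a compact set.

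The only remaining obstruction is the behaviour as $z\to 0$: since $\mathring{\bH}^{d+1}$ is convex in Minkowski spacetime, within $J^-_{\mathring{\bH}^{d+1}}(\overline{\mathcal{U}})$ the coordinate $z$ is bounded above, but $\mathrm{supp}(\chi\gamma)$ may still reach the boundary, whereas $G_{\bH,\alpha}\in\mcd'(\mathring{\bH}^{d+1}\times\mathring{\bH}^{d+1})$ is a priori paired only with functions vanishing near $z=0$. I would resolve this by a boundary regularisation: pick cutoffs $\psi_n$ with $\psi_n\equiv 0$ near $z=0$ and $\psi_n\to 1$ pointwise on $z>0$, so that $\gamma_n\doteq\psi_n\chi\gamma\in C^\infty_0(\mathring{\bH}^{d+1})$ and Lemma~\ref{lemma:KerGH0} yields smooth $G^+_{\bH,\alpha}(\gamma_n)$ with support in $J^+_{\mathring{\bH}^{d+1}}(\mathrm{supp}\,\gamma)$. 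One then shows that $G^+_{\bH,\alpha}(\gamma_n)$ converges in $C^\infty(\mathcal{U})$, i.e.\ that the contribution of the boundary layer $(1-\psi_n)\chi\gamma$ is negligible. This is the heart of the matter and exploits the two ingredients specific to our problem: the admissible near-boundary expansion $\gamma=\cos(\alpha)\,z^{\nu+\frac{1}{2}}F_1+\sin(\alpha)\,z^{-\nu+\frac{1}{2}}F_2$ with $F_1,F_2$ smooth up to $z=0$, and the explicit integral kernel of $G_{\bH,\alpha}$ from Proposition~\ref{prop:GHmodes}, equivalently its decomposition $G_\alpha=\mathcal{N}_\alpha[A_\alpha\tilde{G}+B_\alpha\,\iota_z\tilde{G}]$ of Lemma~\ref{lem:convenient_rewriting}. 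Because this kernel was constructed in~\cite{Dappiaggi:2016fwc} precisely to implement the $\alpha$-boundary condition, its pairing against the corresponding admissible boundary behaviour of $\gamma$ is convergent, so the limit exists, is independent of the choices of $\chi$ and $\psi_n$ (again by the causal support property), and defines $G^+_{\bH,\alpha}(\gamma)$ on $\mathcal{U}$.

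Smoothness of the limit then follows locally from $WF(G_{\bH,\alpha})$ computed in Corollary~\ref{WF_G_H} together with \cite[Th.~8.2.12]{Hormander1}, exactly as in Lemma~\ref{lemma:KerGH0}, the causal localisation of the previous paragraph guaranteeing the properness needed to pair the wavefront set with $\chi\gamma$. The support property $\mathrm{supp}(G^+_{\bH,\alpha}(\gamma))\subseteq J^+_{\mathring{\bH}^{d+1}}(\mathrm{supp}(\gamma))$ passes to the limit because each approximant $G^+_{\bH,\alpha}(\gamma_n)$ is already supported there and $J^+_{\mathring{\bH}^{d+1}}(\mathrm{supp}(\gamma))$ is closed; the analogous statements for $G^-_{\bH,\alpha}$ follow symmetrically, and subtracting yields the claim for $G_{\bH,\alpha}$. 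The step I expect to be genuinely delicate is the boundary-layer estimate of the second paragraph, namely quantifying the decay of the pairing of $G_{\bH,\alpha}$ with the singular-at-$z=0$ profiles $z^{-\nu+\frac{1}{2}}F_2$; this is where the compatibility between the boundary condition encoded in $G_{\bH,\alpha}$ and the configuration space $\tilde{\mathcal{C}}_\alpha$ is indispensable, and where the explicit formulae inherited from \cite{Dappiaggi:2016fwc} cannot be avoided.
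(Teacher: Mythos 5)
Your proposal follows essentially the same route as the paper's proof: localise $G^{\pm}_{\bH,\alpha}(\gamma)$ near a point via the causal support properties and past/future compactness so as to reduce to data of the type $\tilde{\mathcal{C}}_{\alpha,0}$ (compactly supported in $\underline{x}$ and bounded in $z$, but reaching the boundary), then control the pairing as $z'\to 0$ by combining the explicit mode-expansion kernel of Proposition~\ref{prop:GHmodes} with the admissible near-boundary expansion of $\tilde{\mathcal{C}}_{\alpha}$, and finally deduce the support property from the vanishing of the kernel at spacelike separation. The paper carries out the boundary step directly --- cutting off in both $z$ and $\underline{x}$ and pairing the kernel with elements of $\tilde{\mathcal{C}}_{\alpha,0}$, citing the Bessel asymptotics for convergence of the $z'$-integral --- rather than through your $\psi_n$-regularisation and limit, but the substance, including the key estimate you rightly flag as the delicate point, is the same.
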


\begin{proof}
	As with the previous lemma we divide the proof in two parts:
	
	\vskip .1cm
	
	\noindent{\em 1.} It suffices to show that $G_{\bH,\alpha}$ is well defined on $\tilde{\mathcal{C}}_{\alpha,0}\big(\mathring{\bH}^{d+1}\big)$. 
	Following the same line of reasoning used in \cite{Baernew} on globally hyperbolic spacetimes, if $p\in\mathring{\bH}^{d+1}$, then, fixing any $\gamma\in\tilde{\mathcal{C}}_{\alpha,{\rm tc}}\big(\mathring{\bH}^{d+1}\big)$, $K_{\gamma,p}\doteq(\textrm{supp}(J^+(p))\cup \textrm{supp}(J^-(p)))\cap\textrm{supp}(\gamma)$, there exists $z_0\in(0,\infty)$ and a compact set $K^\prime\in\bR^d$ such that $K_{\gamma,p}\subset (0,z_0]\times K^\prime$. Hence, choose any cut-off function $\eta\in C^\infty\big(\mathring{\bH}^{d+1}\big)$ of the form $\eta=\eta_1(z)\eta_2(\underline{x})$ where $\eta_2\in C^\infty_0\big(\bR^d\big)$ is equal to $1$ in $K^\prime$, while $\eta_1\in C^\infty(0,\infty)$ is $1$ between $(0,z_0]$ and vanishes with all its derivatives for all $z$ greater than a certain value $z_1>z_0$. It holds that 
	 $G_{\bH,\alpha}(\gamma)(p)=G_{\bH,\alpha}(\eta\gamma)(p)$, though $\eta\gamma\in\widetilde{\mathcal{C}}_{\alpha,0}\big(\mathring{\bH}^{d+1}\big)$. Hence, it suffices to prove that  $G_{\bH,\alpha}\big[\widetilde{\mathcal{C}}_{\alpha,0}\big(\mathring{\bH}^{d+1}\big)\big] \subset C^\infty\big(\mathring{\bH}^{d+1}\big)$. 
	 
	 This can be achieved using the explicit expression for $G_{\bH,\alpha}$ constructed in Proposition \ref{prop:GHmodes} and testing it against $f(\underline{x}^\prime,z^\prime)\in \widetilde{\mathcal{C}}_{\alpha,0}\big(\mathring{\bH}^{d+1}\big)$. Since the explicit formula is not particularly enlightening, we highlight the two main points of the calculation. First, the integral along the $\underline{x}^\prime$-direction is controlled by the compactness of $f$ along these directions. And second, the integral in $z^\prime$ is convergent given the assumptions in \eqref{eq:observables} and the asymptotic behavior of the Bessel functions as $z^\prime$ tends to $0$.
	
	\vskip .2cm  
	
	\noindent {\em 2.} To show that $G^\pm_{\bH,\alpha}$ obey the sought support properties, one can use the same reasoning as in Lemma \ref{lemma:KerGH0}. Hence, we shall not repeat it. 
\end{proof}

In the last step we finally characterize the kernel of the causal propagator when acting on smooth and compactly supported functions.

\begin{proposition}\label{prop:KerG}
	Let $G_{\bH,\alpha}$ be the bidistribution defined in Proposition \ref{prop:GHmodes}. Then, $\ker(G_{\bH,\alpha})\big|_{\widetilde{C}_{\alpha,0}(\mathring{\bH}^{d+1})}=P_\eta\big[\widetilde{C}_{\alpha,0}\big(\mathring{\bH}^{d+1}\big)\big]$, while
	$$\ker_0(G_{\bH,\alpha}) \equiv \ker(G_{\bH,\alpha})\big|_{C^\infty_0(\mathring{\bH}^{d+1})} \doteq \left\{f\in C^\infty_0\big(\mathring{\bH}^{d+1}\big) \;|\;  f = P_\eta\phi, \; \phi \in \widetilde{\mathcal{C}}_{\alpha,0}\big(\mathring{\bH}^{d+1}\big) \right\}.$$
\end{proposition}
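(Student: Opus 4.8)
The plan is to reproduce, in the present non-globally-hyperbolic setting, the classical characterisation of the kernel of a causal propagator as the image of the field operator, proving the two inclusions separately and deducing the compactly supported statement (Part~2) from the general one, since $C_0^\infty(\mathring{\bH}^{d+1})\subset\tilde{\mathcal C}_{\alpha,0}(\mathring{\bH}^{d+1})$. The two structural inputs I would rely on are: that $G_{\bH,\alpha}$ is a bi-solution enforcing the $\alpha$-boundary condition, \eqref{defining_Green}, and that $G^\pm_{\bH,\alpha}$ are bona fide advanced/retarded fundamental solutions, $P_\eta\circ G^\pm_{\bH,\alpha}=\mathbb{I}$ by \eqref{advanced_minus_retarded}, with the mapping and support properties of Lemma~\ref{lemma:KerGH} on the extended domain $\tilde{\mathcal C}_{\alpha,\mathrm{tc}}(\mathring{\bH}^{d+1})$.

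For the inclusion $P_\eta[\tilde{\mathcal C}_{\alpha,0}]\subseteq\ker(G_{\bH,\alpha})$ I would take $\psi\in\tilde{\mathcal C}_{\alpha,0}(\mathring{\bH}^{d+1})$ and compute $G_{\bH,\alpha}(P_\eta\psi)$ by transferring $P_\eta$ onto the kernel, which annihilates it in the bulk by \eqref{defining_Green}. All contributions from $-\partial_t^2+\sum_i\partial_{x_i}^2$ integrate by parts without boundary term thanks to the (essentially) compact support of $\psi$ in the $\underline x$-directions together with the support properties of $G^\pm_{\bH,\alpha}$; the only term that can survive is the Wronskian $W_{z'}[\,G_{\bH,\alpha}(x,\cdot),\psi\,]$ produced by $\partial_{z'}^2$ and evaluated at $z'=0$. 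The crucial observation is that, mode by mode in $\underline k$, both $\widehat G_{\underline k,\alpha}(z,\cdot)$ and $\widehat\psi_{\underline k}$ lie in the maximal domain and satisfy the \emph{same} $\alpha$-boundary condition of Definition~\ref{def:boundary_condition}; since that condition selects a Lagrangian subspace for the boundary bilinear form, the mutual Wronskian of any two of its elements vanishes as $z'\to0$. Hence the boundary term drops, $G_{\bH,\alpha}(P_\eta\psi)=0$, and when $P_\eta\psi$ happens to be compactly supported this produces an element of $\ker_0(G_{\bH,\alpha})$.

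For the reverse inclusion I would start from $\phi\in\tilde{\mathcal C}_{\alpha,0}$ with $G_{\bH,\alpha}(\phi)=0$. The splitting of $G_{\bH,\alpha}$ into retarded and advanced parts, \eqref{retarded}--\eqref{advanced}, then forces $G^+_{\bH,\alpha}(\phi)=G^-_{\bH,\alpha}(\phi)=:\psi$. By Lemma~\ref{lemma:KerGH} the function $\psi$ is smooth, and its support lies in $J^+_{\mathring{\bH}^{d+1}}(\mathrm{supp}\,\phi)\cap J^-_{\mathring{\bH}^{d+1}}(\mathrm{supp}\,\phi)$, a compact causal diamond since $\phi$ has compact support in $\bH^{d+1}$; thus $\psi$ is compactly supported. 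The fundamental-solution identity yields $P_\eta\psi=P_\eta G^+_{\bH,\alpha}(\phi)=\phi$, so it remains to check $\psi\in\tilde{\mathcal C}_{\alpha,0}$. Membership in $\mathcal C_\alpha$ is automatic because $G^\pm_{\bH,\alpha}$ were built in \cite{Dappiaggi:2016fwc} to satisfy the $\alpha$-boundary condition in their free slot; to exhibit the explicit $z^{\nu+\frac12}$ and $z^{-\nu+\frac12}$ decomposition required by \eqref{eq:observables} I would extract the $z\to0$ asymptotics of $\psi$ from the Bessel/hypergeometric representation of the kernel in Proposition~\ref{prop:GHmodes}, whose two independent near-boundary behaviours are precisely these powers. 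Part~2 is then the specialisation $\phi=f\in C_0^\infty(\mathring{\bH}^{d+1})$, for which the identical construction applies.

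The main obstacle is the endpoint $z=0$, which is where the globally hyperbolic argument has no counterpart and must be supplied by the Sturm-Liouville structure. Concretely, the delicate points are the vanishing of the boundary Wronskian in the forward inclusion --- which I must justify uniformly in the mode variable and in the regime $\nu\in(0,1)$, where the second fundamental solution $\sim z^{-\nu+\frac12}$ is itself singular yet square-integrable --- and, in the reverse inclusion, the verification that the causally propagated $\psi$ genuinely admits the prescribed near-boundary expansion with $C_0^\infty(\bH^{d+1})$ coefficients rather than merely obeying the boundary condition in a weak sense. For the latter the explicit closed form of the kernel, and the reflected-geodesic structure of its wavefront set (Corollary~\ref{WF_G_H}) that dictates how the support of $\psi$ approaches the boundary, are exactly the tools I would bring to bear.
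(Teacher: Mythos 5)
Your proposal is correct and follows essentially the same route as the paper's proof: the forward inclusion from $G_{\bH,\alpha}\circ P_\eta=0$ (which you usefully spell out on the extended domain via the vanishing of the mutual boundary Wronskian of two functions obeying the same $\alpha$-boundary condition), and the reverse inclusion from $G^+_{\bH,\alpha}(\phi)=G^-_{\bH,\alpha}(\phi)$ combined with the smoothness and support properties of Lemma~\ref{lemma:KerGH}. The only point to watch is the phrase ``thus $\psi$ is compactly supported'': the causal diamond is compact only in the closed half-space $\bH^{d+1}$, not in $\mathring{\bH}^{d+1}$ --- precisely the subtlety the paper flags at the end of its proof to explain why $\ker_0(G_{\bH,\alpha})$ cannot be written as $P_\eta\big[C^\infty_0\big(\mathring{\bH}^{d+1}\big)\big]$ --- but since you only need $\psi\in\tilde{\mathcal{C}}_{\alpha,0}\big(\mathring{\bH}^{d+1}\big)$, whose coefficient functions are compactly supported in $\bH^{d+1}$, your argument still closes.
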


\begin{proof}
	Suppose that $\phi^\prime \in \widetilde{\mathcal{C}}_{\alpha,0}\big(\mathring{\bH}^{d+1}\big)$ is of the form $P_\eta\phi$, $\phi\in\widetilde{\mathcal{C}}_{\alpha,0}(\mathring{\bH}^{d+1})$. Then, Lemma~\ref{lemma:KerGH} guarantees that $G_{\bH,\alpha}$ can be applied to $P_\eta\phi$, while $G_{\bH,\alpha} \circ P_\eta=0$ entails that $f\in\ker(G_{\bH,\alpha})$. 
	
	Conversely, suppose that $\phi^\prime \in \widetilde{\mathcal{C}}_{\alpha,0}\big(\mathring{\bH}^{d+1}\big)$ is such that $G_{\bH,\alpha}(\phi^\prime)=G^+_{\bH,\alpha}(\phi^\prime)-G^-_{\bH,\alpha}(\phi^\prime)=0$. Then $G^+_{\bH,\alpha}(\phi^\prime)=G^-_{\bH,\alpha}(\phi^\prime)$ and, in view of the properties of the advanced and of the retarded fundamental solution, $\phi^\prime=P_\eta(G^+_{\bH,\alpha}(\phi^\prime))$. The same argument of Lemma \ref{lemma:KerGH} allows us to conclude that $G^+_{\bH,\alpha}(\phi^\prime)$ is smooth in $\mathring{\bH}^{d+1}$. Combining this information with the support properties of $G^\pm_{\bH,\alpha}$ yields that $G^+_{\bH,\alpha}(\phi^\prime)\in\widetilde{\mathcal{C}}_{\alpha,0}\big(\mathring{\bH}^{d+1}\big)$. The explicit expression for $\ker_0(G_{\bH,\alpha})$ descends per restriction from the one of $\ker(G_{\bH,\alpha})\big|_{\widetilde{C}_{\alpha,0}(\mathring{\bH}^{d+1})}$, taking into account that, if one starts from a smooth and compactly supported function $f$, then $\textrm{supp}(G^+_{\bH,\alpha}(f))\cap \textrm{supp}(G^-_{\bH,\alpha}(f))$ is not necessarily compact in view of the geometric properties of $\mathring{\bH}^{d+1}$.
\end{proof}


\subsection{Off-shell algebra}
\label{sec:offshellalgebra}

In this section, we construct the algebra of observables for a real, massive scalar field in $\PAdS_{d+1}$, studying at the same time its structural properties. We use the functional approach, see \cite[\S 3]{Brunetti:2015vmh}, which is well-suited to be adapted to the case of spacetimes with boundaries as proven in the recent investigation of the Casimir effect \cite{Dappiaggi:2014gea}. Furthermore, in view of subsection~\ref{sec:KG_equation}, we focus our attention directly to the upper-half plane $\mathring{\bH}^{d+1}$ and to equation~\eqref{eq:conformally_rescaled_dynamics}. First, we consider regular functionals on the space of off-shell configurations.

\begin{definition}
	Let $F_{\alpha} : \mathcal{C}_\alpha \big(\mathring{\bH}^{d+1}\big) \to \bC$ be any functional and let $U_{\alpha} \subset \mathcal{C}_\alpha \big(\mathring{\bH}^{d+1}\big)$ be an open set. We say that $F_{\alpha}$ is \emph{differentiable of order $k$} if, for all $m=1, \ldots, k$, the $m$-th order G\^ateaux derivatives
	\begin{equation*}
		F_{\alpha}^{(m)}(\phi)(\phi_1, \ldots, \phi_m) 
		\doteq \left.\frac{\partial^m}{\partial \lambda_1 \ldots \partial \lambda_m}\right|_{\lambda_1 = \ldots = \lambda_m = 0} F_{\alpha} \left(\phi + \sum_{j=1}^m \lambda_j \phi_j \right)
	\end{equation*}
	exist as jointly continuous maps from $U_{\alpha} \times \big(\mathcal{C}_\alpha \big(\mathring{\bH}^{d+1}\big)\big)^{\otimes m}$ to $\bC$. Moreover, $F_{\alpha}$ is \emph{smooth} if it is differentiable at all orders $k \in \bN$, and it is \emph{regular} if it is smooth, if for all $k \geq 1$ and for all $\phi \in \mathcal{C}_\alpha \big(\mathring{\bH}^{d+1}\big)$, $F_{\alpha}^{(k)}(\phi) \in \tilde{\mathcal{C}}_{\alpha,0} \big(\mathring{\bH}^{(d+1)k}\big)$ and if only finitely many functional derivatives do not vanish. We denote the set of regular functionals on $\mathring{\bH}^{d+1}$ by $\mathcal{F}_{\alpha} \big(\mathring{\bH}^{d+1}\big)$.
\end{definition}

Secondly, we introduce a suitable product on the set of regular functionals, given by $\star_\alpha : \mathcal{F}_{\alpha} \big(\mathring{\bH}^{d+1}\big) \times \mathcal{F}_{\alpha} \big(\mathring{\bH}^{d+1}\big) \to \mathcal{F}_{\alpha} \big(\mathring{\bH}^{d+1}\big)$,
\begin{equation} \label{eq:algebraprodG}
	(F_{\alpha} \star_\alpha F'_{\alpha})(\phi) = \big(\mathcal{M} \circ \exp(i \Gamma_{G_{\bH,\alpha}})(F_{\alpha} \otimes F'_{\alpha})\big)(\phi) \, ,
\end{equation}
where $F_{\alpha}, \, F'_{\alpha} \in \mathcal{F}_{\alpha} \big(\mathring{\bH}^{d+1}\big)$. Here, $\mathcal{M}$ stands for the pointwise multiplication, that is, $\mathcal{M} (F_{\alpha} \otimes F'_{\alpha})(\phi) \doteq F_{\alpha}(\phi) F'_{\alpha}(\phi)$, and
\begin{equation}
	\Gamma_{G_{\alpha}} \doteq \frac{1}{2} \int_{\mathring{\bH}^{d+1} \times \mathring{\bH}^{d+1}} G_{\bH,\alpha}(x,x') \, \frac{\delta}{\delta \phi(x)} \otimes \frac{\delta}{\delta \phi(x')} \, .
\end{equation}

\begin{definition}
	We call $\mathcal{A}_{\alpha}\big(\mathring{\bH}^{d+1}\big) \equiv \big(\mathcal{F}_{\alpha} \big(\mathring{\bH}^{d+1}\big), \star_\alpha \big)$ the \emph{off-shell $\ast$-algebra} of the system with complex conjugation as $\ast$-operation. It is generated by the functionals
	\begin{equation}\label{eq:generating_functional}
		F_{f_{\alpha}}(\phi) = \int_{\mathring{\bH}^{d+1}} \dd^{d+1}x \, \phi_{\alpha}(x) f_{\alpha}(x) \, ,
	\end{equation}
	where $f_{\alpha} \in \tilde{\mathcal{C}}_{\alpha,0} \big(\mathring{\bH}^{d+1}\big)$ and $\phi_{\alpha} \in \mathcal{C}_\alpha \big(\mathring{\bH}^{d+1}\big)$.
\end{definition}

Note that \eqref{eq:generating_functional} is well defined since Definitions~\ref{kinematic_configuration} and \eqref{eq:observables} guarantee that the behavior of both $\phi_{\alpha}$ and $f_{\alpha}$ as $z \to 0$ is such that the integral along this direction is not divergent. Another question concerns whether $(F_{f_{\alpha}}\star_\alpha F_{f^\prime_{\alpha}})(\phi_{\alpha})$ is finite for arbitrary choices of $\phi_{\alpha} \in \mathcal{C}_\alpha\big(\mathring{\bH}^{d+1}\big)$ and $f_{\alpha}, \, f_{\alpha}^\prime\in\tilde{\mathcal{C}}_{\alpha,0} \big(\mathring{\bH}^{d+1}\big)$. Also in this case one can show by direct computation that this is a by-product of the explicit form of the integral kernel of the causal propagator in Proposition~\ref{prop:GHmodes} together with the behavior at the boundary of the elements of $\tilde{\mathcal{C}}_{\alpha,0} \big(\mathring{\bH}^{d+1}\big)$.


\subsection{On-shell algebra}
\label{sec:onshellalgebra}

In order to construct the on-shell algebra we restrict the allowed configurations in $\mathcal{A}_{\alpha}\big(\mathring{\bH}^{d+1}\big)$ from $\mathcal{C}_\alpha \big(\mathring{\bH}^{d+1}\big)$ to the space of dynamical configurations, $\mathcal{S}_\alpha \big(\mathring{\bH}^{d+1}\big) \doteq \{\phi_{\alpha} \in \mathcal{C}_\alpha \big(\mathring{\bH}^{d+1}\big) \, | \, P_{\eta} \phi_{\alpha} = 0 \}$.

Let $\mathcal{F}^{\rm on}_{\alpha}\big(\mathring{\bH}^{d+1}\big)$ denote the set of all functionals $F_{[f_{\alpha}]}: \mathcal{S}_\alpha \big(\mathring{\bH}^{d+1}\big) \to \bC$, with $[f_{\alpha}] \in \frac{\tilde{\mathcal{C}}_{\alpha,0} (\mathring{\bH}^{d+1})}{P_\eta[\widetilde{C}_{\alpha,0}(\mathring{\bH}^{d+1})]}$, $\ker(G_{\bH,\alpha})$ being characterized in Proposition \ref{prop:KerG}, such that
\begin{equation*}
F_{[f_{\alpha}]}(\phi_{\alpha}) = \int_{\mathring{\bH}^{d+1}} \dd^{d+1}x \,  f_{\alpha}(x) \phi_{\alpha}(x) \, .
\end{equation*}
One can prove by integrating by parts that the right-hand side does not depend on the choice of representative of $[f_{\alpha}]$ on account of the $\alpha$-boundary condition. With a slight abuse of notation, we will identify every generator $F_{[f_{\alpha}]}$ directly with its label $[f_{\alpha}]$, hence writing $[f_{\alpha}] \in \mathcal{F}^{\rm on}_{\alpha}(\mathring{\bH}^{d+1})$.

\begin{proposition}
	The set $\mathcal{F}^{\rm on}_{\alpha}\big(\mathring{\bH}^{d+1}\big)$ is
	\begin{enumerate}
		\item \emph{separating}, that is, for every pair $\phi_{\alpha}, \, \phi'_{\alpha} \in \mathcal{S}_\alpha \big(\mathring{\bH}^{d+1}\big)$ with $\phi_{\alpha}\neq\phi'_{\alpha}$, there exists $[f_{\alpha}] \in \frac{\tilde{\mathcal{C}}_{\alpha,0} (\mathring{\bH}^{d+1})}{P_\eta[\widetilde{C}_{\alpha,0}(\mathring{\bH}^{d+1})]}$ such that $F_{[f_{\alpha}]}(\phi_{\alpha}) \neq F_{[f_{\alpha}]}(\phi'_{\alpha})$.
		\item \emph{symplectic} if endowed with the following weakly non-degenerate symplectic form $\sigma_{\bH} : \mathcal{F}_{\alpha}^{\rm on}\big(\mathring{\bH}^{d+1}\big) \times \mathcal{F}_{\alpha}^{\rm on}\big(\mathring{\bH}^{d+1}\big) \to \bR$,
		\begin{equation}\label{eq:symplectic_form}
		\sigma_{\bH} \big(F_{[f_{\alpha}]},F_{[f'_{\alpha}]}\big) = \int_{\mathring{\bH}^{d+1}} \dd^{d+1}x \, f_{\alpha}(x) G_{\bH,\alpha}(f'_{\alpha})(x) \, .
		\end{equation}
		\item  \emph{optimal}, that is, for every pair $[f_{\alpha}], \, [f'_{\alpha}] \in \frac{\tilde{\mathcal{C}}_{\alpha,0} (\mathring{\bH}^{d+1})}{P_\eta[\widetilde{C}_{\alpha,0}(\mathring{\bH}^{d+1})]}$ with $[f_{\alpha}] \neq [f'_{\alpha}]$, there exists $\phi_{\alpha} \in \mathcal{S}_\alpha \big(\mathring{\bH}^{d+1}\big)$ such that $F_{[f_{\alpha}]}(\phi_{\alpha}) \neq F_{[f'_{\alpha}]}(\phi_{\alpha})$.

	\end{enumerate}
\end{proposition}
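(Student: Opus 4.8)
The plan is to establish the three properties in turn, exploiting the structural results of Proposition~\ref{prop:KerG} and the support/smoothness properties of $G_{\bH,\alpha}$ collected in Lemmas~\ref{lemma:KerGH0} and \ref{lemma:KerGH}. The underlying principle throughout is that the pairing in \eqref{eq:symplectic_form} descends from the causal propagator, whose kernel on $\widetilde{\mathcal{C}}_{\alpha,0}\big(\mathring{\bH}^{d+1}\big)$ is exactly $P_\eta\big[\widetilde{\mathcal{C}}_{\alpha,0}\big(\mathring{\bH}^{d+1}\big)\big]$; this is what makes the quotient in the definition of $\mathcal{F}^{\rm on}_{\alpha}\big(\mathring{\bH}^{d+1}\big)$ the natural object to pair against dynamical configurations.

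For the \emph{separating} property, I would argue contrapositively. Suppose $\phi_{\alpha}, \phi'_{\alpha} \in \mathcal{S}_\alpha\big(\mathring{\bH}^{d+1}\big)$ are such that $F_{[f_{\alpha}]}(\phi_{\alpha}) = F_{[f_{\alpha}]}(\phi'_{\alpha})$ for every class $[f_{\alpha}]$. By linearity of $F_{[f_{\alpha}]}$ in $\phi$, setting $\psi \doteq \phi_{\alpha} - \phi'_{\alpha}$ gives $\int_{\mathring{\bH}^{d+1}} \dd^{d+1}x\, f_{\alpha}(x)\,\psi(x) = 0$ for all $f_{\alpha} \in \widetilde{\mathcal{C}}_{\alpha,0}\big(\mathring{\bH}^{d+1}\big)$. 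Since $\psi$ is a smooth solution of $P_\eta\psi=0$ obeying the $\alpha$-boundary condition, and since $C^\infty_0\big(\mathring{\bH}^{d+1}\big) \subset \widetilde{\mathcal{C}}_{\alpha,0}\big(\mathring{\bH}^{d+1}\big)$ (as noted after \eqref{eq:observables}), testing against all compactly supported $f_{\alpha}$ already forces $\psi = 0$ pointwise in the interior, whence $\phi_{\alpha} = \phi'_{\alpha}$.

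For the \emph{symplectic} property, I would first check that $\sigma_{\bH}$ is well defined on equivalence classes: replacing $f'_{\alpha}$ by $f'_{\alpha} + P_\eta\chi$ with $\chi \in \widetilde{\mathcal{C}}_{\alpha,0}\big(\mathring{\bH}^{d+1}\big)$ changes $G_{\bH,\alpha}(f'_{\alpha})$ by $G_{\bH,\alpha}(P_\eta\chi)=0$ on account of $G_{\bH,\alpha}\circ P_\eta = 0$ (Proposition~\ref{prop:KerG}), and symmetrically in the first slot using the integration-by-parts identity noted before the statement; antisymmetry of $\sigma_{\bH}$ follows from the antisymmetry of $G_{\bH,\alpha}$ recorded in \eqref{defining_Green}. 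Weak non-degeneracy then amounts to: if $\sigma_{\bH}\big(F_{[f_{\alpha}]},F_{[f'_{\alpha}]}\big)=0$ for all $[f'_{\alpha}]$, then $[f_{\alpha}]=0$. The pairing is $\int f_{\alpha}\, G_{\bH,\alpha}(f'_{\alpha})$, which after using antisymmetry reads $-\int G_{\bH,\alpha}(f_{\alpha})\, f'_{\alpha}$; vanishing against all $f'_{\alpha}$ forces $G_{\bH,\alpha}(f_{\alpha})=0$, i.e. $f_{\alpha} \in \ker(G_{\bH,\alpha})$, and by Proposition~\ref{prop:KerG} this means $f_{\alpha} \in P_\eta\big[\widetilde{\mathcal{C}}_{\alpha,0}\big(\mathring{\bH}^{d+1}\big)\big]$, that is $[f_{\alpha}]=0$ in the quotient.

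The \emph{optimal} property is the sharpest and where I expect the main obstacle, since it is a surjectivity-type statement: I must produce, for each nonzero class $[f_{\alpha}]$, an honest dynamical configuration separating it. Given $[f_{\alpha}]\neq 0$, by Proposition~\ref{prop:KerG} we have $f_{\alpha}\notin\ker(G_{\bH,\alpha})$, so $\phi_{\alpha}\doteq G_{\bH,\alpha}(f_{\alpha})$ is not identically zero. By Lemma~\ref{lemma:KerGH} this $\phi_{\alpha}$ is smooth and lies in $\mathcal{S}_\alpha\big(\mathring{\bH}^{d+1}\big)$, since $P_\eta\,G_{\bH,\alpha}=0$. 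The natural candidate separating $[f_{\alpha}]$ from $[f'_{\alpha}]$ is then $\phi_{\alpha}=G_{\bH,\alpha}(f_{\alpha}-f'_{\alpha})$, for which $F_{[g_{\alpha}]}(\phi_{\alpha}) = \int g_{\alpha}\,G_{\bH,\alpha}(f_{\alpha}-f'_{\alpha})$, and the residual work is to show this does not vanish for a suitable choice of $g_{\alpha}$; but this is precisely weak non-degeneracy of $\sigma_{\bH}$ from part~(2), so optimality is essentially dual to the symplectic statement. The delicate point to verify carefully is that $G_{\bH,\alpha}(f_{\alpha}-f'_{\alpha})$, while a solution of $P_\eta=0$, genuinely belongs to $\mathcal{S}_\alpha\big(\mathring{\bH}^{d+1}\big)$ — that is, its modes lie in $D_{\rm max}(L;\alpha)$ and it obeys the correct $\alpha$-boundary condition — which again relies on the boundary asymptotics of the Bessel functions used in the convergence argument of Lemma~\ref{lemma:KerGH}.
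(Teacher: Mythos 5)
Your parts (1) and the non-degeneracy half of (2) coincide with the paper's argument: separation is the statement that integration against $C^\infty_0\big(\mathring{\bH}^{d+1}\big)\subset\tilde{\mathcal{C}}_{\alpha,0}\big(\mathring{\bH}^{d+1}\big)$ separates smooth functions, and non-degeneracy reduces, via the arbitrariness of the second entry, to $G_{\bH,\alpha}(f'_{\alpha})=0$ together with the kernel characterization of Proposition~\ref{prop:KerG}. The genuine gap is the antisymmetry of $\sigma_{\bH}$. You assert that it follows from the antisymmetry of $G_{\bH,\alpha}$ recorded after \eqref{defining_Green}, but that property is stated only for $f,f'\in C^\infty_0\big(\mathring{\bH}^{d+1}\big)$, whereas the generators of $\mathcal{F}^{\rm on}_{\alpha}\big(\mathring{\bH}^{d+1}\big)$ are labelled by elements of $\tilde{\mathcal{C}}_{\alpha,0}\big(\mathring{\bH}^{d+1}\big)$, whose supports may reach the boundary $z=0$ (Fig.~\ref{fig:suppf-boundary}). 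On that larger space antisymmetry of \eqref{eq:symplectic_form} is equivalent to formal self-adjointness of $P_\eta$ on the domain singled out by the $\alpha$-boundary condition, and it is not automatic: the paper writes $f_{\alpha}=P_\eta G^{\mp}_{\bH,\alpha}(f_{\alpha})$, integrates by parts in $z$, and must show that the boundary term $B_{\alpha}=W_z\big[\Phi^+_{f'_{\alpha}},\Phi^-_{f_{\alpha}}\big]\big|_{z=0^+}^{z=\infty}$ vanishes, which happens precisely because both $\Phi^{\pm}$ obey the same $\alpha$-boundary condition of Definition~\ref{def:boundary_condition}. This Wronskian cancellation is the technical heart of the proof and the one place where the boundary condition actually enters; your proposal skips it entirely (and the same caveat applies to the "integration-by-parts identity" you invoke for well-definedness in the first slot).

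On (3) there is also a slip in the set-up: optimality requires producing a \emph{solution} $\phi$ with $F_{[f_{\alpha}-f'_{\alpha}]}(\phi)\neq 0$, so the solution is the object to be varied while the functional is fixed. Your candidate $\phi_{\alpha}=G_{\bH,\alpha}(f_{\alpha}-f'_{\alpha})$ with varying $[g_{\alpha}]$ answers the transposed question; the correct candidate is $\phi=G_{\bH,\alpha}(g_{\alpha})$ for a suitable $g_{\alpha}$, for which the nonvanishing of $\int (f_{\alpha}-f'_{\alpha})\,G_{\bH,\alpha}(g_{\alpha})$ is exactly non-degeneracy. By antisymmetry the two pairings agree up to sign, so your version can be repaired --- but the repair itself relies on the antisymmetry you have not established. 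Once that is fixed, your reduction of optimality to non-degeneracy is the paper's own argument, which evaluates $F_{[f_{\alpha}]}$ on $G_{\bH,\alpha}(h)$, $h\in C^\infty_0\big(\mathring{\bH}^{d+1}\big)$, and concludes $G_{\bH,\alpha}(f_{\alpha})=0$, a contradiction.
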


\begin{proof}
	Starting with (1), consider $\phi_{\alpha}, \, \phi'_{\alpha} \in \mathcal{S}_\alpha \big(\mathring{\bH}^{d+1}\big)$ such that $\phi_{\alpha} \neq \phi'_{\alpha}$. Then, since $\phi_{\alpha}-\phi'_{\alpha} \in C^{\infty}\big(\mathring{\bH}^{d+1}\big)$ is not vanishing, Hahn-Banach theorem guarantees that $C^{\infty}_{0}(\mathring{\bH}^{d+1})$ is separating for $C^{\infty}\big(\mathring{\bH}^{d+1}\big)$ with respect to the pairing defined via integration. Hence, there exists $f \in C^{\infty}_{0}\big(\mathring{\bH}^{d+1}\big)$ such that $\int_{\mathring{\bH}^{d+1}}d^{d+1}x\, f \, (\phi_{\alpha}-\phi'_{\alpha})$ does not vanish. As $C^{\infty}_{0}\big(\mathring{\bH}^{d+1}\big) \subset \mathcal{C}_{\alpha,0}\big(\mathring{\bH}^{d+1}\big)$, the result follows.
	
	
	Concerning (2), note that, per construction, $\sigma_{\bH}$ is bilinear. Suppose that there exists $F_{[f'_{\alpha}]} \in \mathcal{F}_{\alpha}^{\rm on}\big(\mathring{\bH}^{d+1}\big)$ such that $\sigma_{\bH} \big(F_{[f_{\alpha}]},F_{[f'_{\alpha}]}\big) = 0$ for all $F_{[f_{\alpha}]} \in \mathcal{F}_{\alpha}^{\rm on}\big(\mathring{\bH}^{d+1}\big)$. Then, in view of the separating properties of the integrals over $\mathring{\bH}^{d+1}$, it descends $G_{\bH,\alpha}(f'_{\alpha}) = 0$. On account of Lemma \ref{lemma:KerGH}, since $[f'_{\alpha}] \in \frac{\tilde{\mathcal{C}}_{\alpha,0} (\mathring{\bH}^{d+1})}{P_\eta[\widetilde{C}_{\alpha,0}(\mathring{\bH}^{d+1})]}$, one has $[f'_{\alpha}] = 0$. In other words $\sigma_\bH$ is non-degenerate.
	
	As for the antisymmetry property, first note that, using $P_\eta\circ G_{\bH,\alpha}^\pm=G_{\bH,\alpha}^\pm\circ P_\eta=\mathbb{I}$, $\mathbb{I}$ being the identity on $\widetilde{\mathcal{C}}_{\alpha,0}\big(\mathring{\bH}^{d+1}\big)$, it holds
	\begin{align*}
		\sigma_{\bH} \big(F_{[f_{\alpha}]},F_{[f'_{\alpha}]}\big) &= \int_{\bR^{d}} \dd^{d}x \int_0^{\infty} \dd z \, f_{\alpha}(x) \left[ G_{\bH,\alpha}^+(f'_{\alpha})(x) - G_{\bH,\alpha}^-(f'_{\alpha})(x) \right] \\
		&= \int_{\bR^{d}} \dd^{d}x \int_0^{\infty} \dd z \left[ P_{\eta} G_{\bH,\alpha}^- (f_{\alpha})(x) G_{\bH,\alpha}^+(f'_{\alpha})(x) - (+\leftrightarrow-) \right] \, .
	\end{align*}
	Write $P_{\eta} = \Box_{(d)} + \frac{\dd^2}{\dd z^2} - \frac{m^2}{z^2}$ and define $\Phi^{\pm}_{f_{\alpha}} \doteq G_{\bH,\alpha}^{\pm} (f_{\alpha})$. Then,
	\begin{multline*}
		\int_{\bR^{d}} \dd^{d}x \int_0^{\infty} \dd z \, \left(\Box_{(d)} + \frac{\dd^2}{\dd z^2} - \frac{m^2}{z^2} \right) \Phi^-_{f_{\alpha}}(x) \Phi^+_{f'_{\alpha}}(x) \\
		= \int_{\bR^{d}} \dd^{d}x \left[ \int_0^{\infty} \dd z \, \Phi^-_{f_{\alpha}}(x) \left(\Box_{(d)} + \frac{\dd^2}{\dd z^2} - \frac{m^2}{z^2} \right)  \Phi^+_{f'_{\alpha}}(x)
		+ B_{\alpha} \right] \, ,
	\end{multline*}
	where
	\begin{align*}
		B_{\alpha} \doteq \left. \left( \Phi^+_{f'_{\alpha}}(x) \frac{\dd \Phi^-_{f_{\alpha}}(x)}{\dd z} - \Phi^-_{f_{\alpha}}(x) \frac{\dd \Phi^+_{f'_{\alpha}}(x)}{\dd z} \right) \right|_{z=0^+}^{z=\infty}
		= \left. W_z \big[\Phi^+_{f'_{\alpha}}, \Phi^-_{f_{\alpha}}\big]\right|_{z=0^+}^{z=\infty} \, .
	\end{align*}
	Recall that, since $f_{\alpha} \in \tilde{\mathcal{C}}_{\alpha,0} \big(\mathring{\bH}^{d+1}\big)$, $\lim_{z \to \infty} W_z \big[\Phi^+_{f'_{\alpha}}, \Phi^-_{f_{\alpha}}\big] = 0$ and
	\begin{align*}
		\lim_{z \to 0^+} \left[ \cos(\alpha) W_z \big[\Phi^-_{f_{\alpha}}, \Phi_1 \big] + \sin(\alpha) W_z \big[\Phi^-_{f_{\alpha}}, \Phi_2 \big] \right] &= 0 \, , \\
		\lim_{z \to 0^+} \left[ \cos(\alpha) W_z \big[\Phi^+_{f'_{\alpha}}, \Phi_1 \big] + \sin(\alpha) W_z \big[\Phi^+_{f'_{\alpha}}, \Phi_2 \big] \right] &= 0 \, ,
	\end{align*}
	where we used Definition~\ref{def:boundary_condition}.	After a straightforward calculation we obtain that $\lim_{z \to 0^+} W_z \big[\Phi^+_{f'_{\alpha}}, \Phi^-_{f_{\alpha}}\big] = 0$ and thus $B_{\alpha}=0$. Therefore,
	\begin{align*}
	    \sigma_{\bH} \big(F_{[f_{\alpha}]},F_{[f'_{\alpha}]}\big) &= \int_{\bR^{d}} \dd^{d}x \int_0^{\infty} \dd z \left[ G_{\bH,\alpha}^- f_{\alpha}(x) P_{\eta} G_{\bH,\alpha}^+(f'_{\alpha})(x) - (+\leftrightarrow-) \right] \\
	    &= - \sigma_{\bH} \big(F_{[f'_{\alpha}]},F_{[f_{\alpha}]}\big) \, .	
	\end{align*}
	%
	
	
	As for (3), since $\mathcal{F}_{\alpha}^{\rm on}\big(\bH^{d+1}\big)$ is generated by linear functions, it suffices to show a contradiction in assuming that there exists a non trivial $[f_{\alpha}] \in \frac{\tilde{\mathcal{C}}_{\alpha,0} (\mathring{\bH}^{d+1})}{P_\eta[\widetilde{C}_{\alpha,0}(\mathring{\bH}^{d+1})]}$ such that $F_{[f_{\alpha}]}(\phi) = 0$ for all $\phi_{\alpha} \in \mathcal{S}_\alpha(\mathring{\bH}^{d+1})$. If this were the case, using the antisymmetry property proven in the preceding point, one has $0 = \int_{\mathring{\bH}^{d+1}} f_{\alpha} \, G_{\bH,\alpha}(h) = - \int_{\mathring{\bH}^{d+1}} \, G_{\bH,\alpha}(f_{\alpha}) \, h$ for all $h\in C^\infty_0(\mathring{\bH}^{d+1})$. Since $G_{\bH,\alpha}(f_{\alpha})\in\mcd^\prime(\mathring{\bH}^{d+1})$ and $h$ is an arbitrary test function, $G_{\bH,\alpha}(f_{\alpha})=0$, which is a contradiction.
\end{proof}

It is worth observing that, as a by-product of the proof of the preceding Proposition, we have also shown that $P_\eta$ is a formally self-adjoint operator, a property which often plays a key, technical role. In addition, the following definition is now justified.

\begin{definition}\label{on-shell_algebra}
	We call $\mathcal{A}^{\rm on}_{\alpha}\big(\mathring{\bH}^{d+1}\big) \equiv \big(\mathcal{F}^{\rm on}_{\alpha}\big(\mathring{\bH}^{d+1}\big), \star_\alpha \big)$	the \emph{on-shell $\ast$-algebra} of the system, generated by the functionals $\mathcal{F}^{\rm on}_{\alpha}\big(\mathring{\bH}^{d+1}\big)$, where the $\star_\alpha$-product is the same as in \eqref{eq:algebraprodG}.
\end{definition}

\noindent We now prove some structural properties of the on-shell $\ast$-algebra.

\begin{proposition}\label{Prop:time_slice}
	The algebra $\mathcal{A}^{\rm on}_{\alpha}\big(\mathring{\bH}^{d+1}\big)$ is 
	\begin{enumerate}
		\item causal, that is, algebra elements, supported in spacelike separated regions commute. 
		\item fulfils the {\bf time-slice axiom}, i.e.~let $O_{\epsilon,\bar{t}} \doteq (\bar{t}-\epsilon,\bar{t}+\epsilon) \times \mathring{\bH}^{d}$, $\epsilon>0$ and $\bar{t}\in\bR$ arbitrary and let $\mathcal{A}^{\rm on}_{\alpha}(O_{\epsilon, \bar{t}})$ be the on-shell algebra restricted to $O_{\epsilon, \bar{t}}$, then $\mathcal{A}^{\rm on}_{\alpha}\big(\mathring{\bH}^{d+1}\big) \simeq \mathcal{A}^{\rm on}_{\alpha}(O_{\epsilon,\bar{t}})$.
	\end{enumerate}
\end{proposition}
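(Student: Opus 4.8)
The plan is to establish the two properties separately, each leveraging the structural results already proven for the causal propagator $G_{\bH,\alpha}$. For causality, I would argue directly from the $\star_\alpha$-product. Two generators $F_{[f_\alpha]}$ and $F_{[f'_\alpha]}$ with representatives $f_\alpha, f'_\alpha \in \tilde{\mathcal{C}}_{\alpha,0}(\mathring{\bH}^{d+1})$ supported in spacelike separated regions will commute provided their $\star_\alpha$-commutator, which is governed entirely by the antisymmetric combination built from $\Gamma_{G_{\bH,\alpha}}$, vanishes. Concretely, the commutator of two linear generators is proportional to the symplectic form $\sigma_{\bH}(F_{[f_\alpha]}, F_{[f'_\alpha]}) = \int f_\alpha \, G_{\bH,\alpha}(f'_\alpha)$. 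The key input is that the integral kernel of $G_{\bH,\alpha}$ vanishes for spacelike separated points, a fact recalled in the proof of Lemma~\ref{lemma:KerGH0}. Hence if $\mathrm{supp}(f_\alpha)$ and $\mathrm{supp}(f'_\alpha)$ are spacelike separated, $\sigma_{\bH}$ vanishes and the generators commute; since the algebra is generated by such linear functionals, causality extends to the whole algebra.

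For the time-slice axiom, I would follow the standard strategy adapted to this non-globally-hyperbolic setting: the inclusion $\mathcal{A}^{\rm on}_{\alpha}(O_{\epsilon,\bar t}) \hookrightarrow \mathcal{A}^{\rm on}_{\alpha}(\mathring{\bH}^{d+1})$ is manifest, so the content is surjectivity, namely that every generator $[f_\alpha]$ with $f_\alpha \in \tilde{\mathcal{C}}_{\alpha,0}(\mathring{\bH}^{d+1})$ is equivalent, modulo $P_\eta[\tilde{C}_{\alpha,0}]$, to one supported in the slab $O_{\epsilon,\bar t}$. The construction proceeds by choosing two cutoff functions $\chi_+, \chi_-$ in the time variable, with $\chi_+ + \chi_- = 1$, where $\chi_+ \equiv 1$ to the future of $\bar t + \epsilon$ and $\chi_- \equiv 1$ to the past of $\bar t - \epsilon$, so that $\chi_\pm$ transition only inside $O_{\epsilon,\bar t}$. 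Given $f_\alpha$, I would set $\phi \doteq G^-_{\bH,\alpha}(f_\alpha)$ (or an appropriate fundamental-solution image) and then consider $\tilde f_\alpha \doteq f_\alpha - P_\eta(\chi_+ \phi)$. Because $P_\eta \phi = f_\alpha$ away from the relevant region, the term $P_\eta(\chi_+\phi)$ differs from $f_\alpha$ only where derivatives of $\chi_+$ are nonzero, i.e.\ inside the slab; thus $\tilde f_\alpha$ is supported in $O_{\epsilon,\bar t}$, and $[\tilde f_\alpha] = [f_\alpha]$ in the quotient.

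The main obstacle, and the place requiring genuine care, is ensuring that all the functions manufactured in this cutoff argument actually lie in the correct configuration space $\tilde{\mathcal{C}}_{\alpha,0}(\mathring{\bH}^{d+1})$ and respect the $\alpha$-boundary condition at $z=0$. The image $G^-_{\bH,\alpha}(f_\alpha)$ is smooth and has past-compact support by Lemmas~\ref{lemma:KerGH0} and~\ref{lemma:KerGH}, but I must verify that multiplying by the time cutoff $\chi_+$ and re-applying $P_\eta$ preserves membership in $\tilde{\mathcal{C}}_{\alpha,0}$—in particular that the characteristic boundary fall-off $z^{\nu+\frac12}$ and $z^{-\nu+\frac12}$ encoded in Definition~\ref{eq:observables} is unaffected, which holds because $\chi_\pm$ depend only on $t$ and leave the $z$-behavior untouched. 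One also needs that $P_\eta(\chi_+\phi)$ is genuinely an element of $P_\eta[\tilde{C}_{\alpha,0}]$ with compactly supported (in $z$ and $\underline x$) profile, so the support of $\chi_+\phi$ intersected with the relevant causal regions must be controlled using the timelike-compactness machinery of Lemma~\ref{lemma:KerGH}. Once these membership and support checks are in place, the quotient identity $[\tilde f_\alpha]=[f_\alpha]$ gives surjectivity, and combined with injectivity of the inclusion yields the claimed $\ast$-isomorphism $\mathcal{A}^{\rm on}_{\alpha}(\mathring{\bH}^{d+1}) \simeq \mathcal{A}^{\rm on}_{\alpha}(O_{\epsilon,\bar t})$.
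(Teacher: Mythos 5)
Your causality argument is sound and matches the paper's: the commutator of two linear generators reduces to $\sigma_{\bH}(F_{[f_{\alpha}]},F_{[f'_{\alpha}]})=\int f_{\alpha}\,G_{\bH,\alpha}(f'_{\alpha})$, which vanishes for spacelike separated supports by the support properties ${\rm supp}(G^\pm_{\bH,\alpha}(f))\subseteq J^\pm_{\mathring{\bH}^{d+1}}({\rm supp}(f))$ of Lemmas~\ref{lemma:KerGH0} and~\ref{lemma:KerGH}.

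The time-slice construction, however, contains a genuine error. You set $\tilde f_{\alpha} \doteq f_{\alpha} - P_\eta(\chi_+ G^-_{\bH,\alpha}(f_{\alpha}))$ and claim this is supported in the slab because ``$P_\eta(\chi_+\phi)$ differs from $f_{\alpha}$ only where derivatives of $\chi_+$ are nonzero.'' That is false: since $P_\eta(\chi_+\phi)=\chi_+ P_\eta\phi + (\text{terms in }\partial\chi_+)=\chi_+ f_{\alpha}+(\text{slab-supported terms})$, one gets $\tilde f_{\alpha}=\chi_- f_{\alpha}+(\text{slab-supported terms})$, and $\chi_- f_{\alpha}$ survives wherever $f_{\alpha}$ is supported to the past of $\bar t+\epsilon$. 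Your one-sided cutoff only annihilates $f_{\alpha}$ to the future of the slab. The standard fix is symmetric: take $u=\chi_+ G^-_{\bH,\alpha}(f_{\alpha})+\chi_- G^+_{\bH,\alpha}(f_{\alpha})$, so that $P_\eta u=f_{\alpha}$ both to the future (where $\chi_+=1$) and to the past (where $\chi_-=1$), whence $f_{\alpha}-P_\eta u$ is supported in $O_{\epsilon,\bar t}$ and $u$ is timelike compact. The paper reaches the same end more economically by applying the cutoff to the full solution: $g_{\alpha}\doteq P_\eta\big(\chi\, G_{\bH,\alpha}(f_{\alpha})\big)$ with $G_{\bH,\alpha}=G^+_{\bH,\alpha}-G^-_{\bH,\alpha}$; since $P_\eta G_{\bH,\alpha}(f_{\alpha})=0$ identically, $g_{\alpha}$ is automatically supported where $\chi$ is non-constant, i.e.\ in the slab. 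Your concerns about membership in $\tilde{\mathcal{C}}_{\alpha,0}$ and the boundary fall-off are well placed and are indeed where Lemma~\ref{lemma:KerGH} enters; but note also that the paper completes the argument by checking explicitly that the resulting bijection preserves the symplectic form \eqref{eq:symplectic_form} and the $*$-operation, which is needed to upgrade the vector-space isomorphism to the claimed $*$-isomorphism, a step you assert but do not carry out.
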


\begin{proof}
{\em (1)}: Since $\mathcal{A}^{\rm on}_{\alpha}\big(\mathring{\bH}^{d+1}\big)$ is generated by the functionals $\mathcal{F}^{\rm on}_{\alpha}\big(\mathring{\bH}^{d+1}\big)$, we can focus on this set. Causality is a direct consequence of \eqref{eq:algebraprodG}, together with the support properties of $G_{\bH,\alpha}$ as per Lemma \ref{lemma:KerGH}. Hence, algebra elements supported in spacelike separated regions do commute. 
	
\vskip .2cm

\noindent{\em (2)}: For the time-slice axiom to hold true, it suffices to prove it at the level of the generators $\mathcal{F}^{\rm on}_{\alpha}(O_{\epsilon, \bar{t}})$ and $\mathcal{F}^{\rm on}_{\alpha}\big(\mathring{\bH}^{d+1}\big)$. Let $[f_{\alpha}]\in\mathcal{F}^{\rm on}_{\alpha}(O_{\epsilon, \bar{t}})$ and let $\chi(t)$ be any smooth function which is $0$ for $t\geq\bar{t}+\epsilon$ and $1$ for $t\leq\bar{t}-\epsilon$. Consider $g_{\alpha} \doteq P_\eta \chi G_{\bH,\alpha}(f_{\alpha})$. In view of Lemma \ref{lemma:KerGH}, $\textrm{supp}(g_{\alpha})\subset O_{\epsilon, \bar{t}}$ and thus $h$ identifies an equivalence class $[g_{\alpha}]$ and thus an element of $\mathcal{F}_{\alpha}^{\rm on}(O_{\epsilon, \bar{t}})$. On account of the characterization of the kernel of $G_{\bH,_{\alpha}}$ in Lemma \ref{lemma:KerGH}, we have identified a map $\iota_{O,\bH}:\mathcal{F}^{\rm on}_{\alpha}\big(\mathring{\bH}^{d+1}\big)\to\mathcal{F}^{\rm on}_{\alpha}(O_{\epsilon, \bar{t}})$ such that $[f_{\alpha}] \mapsto \iota_{O,\bH}([f_{\alpha}])=[P_\eta\chi G_{\bH,\alpha}(f_{\alpha})]$, which is per construction injective. It is also surjective, since the inverse $\iota_{O,\bH}^{-1}:\mathcal{F}^{\rm on}_{\alpha}(O_{\epsilon, \bar{t}})\to\mathcal{F}^{\rm on}_{\alpha}\big(\mathring{\bH}^{d+1}\big)$ is nothing but $\iota^{-1}_{O, \bH}([g_{\alpha}])=[g_{\alpha}]$, where the right hand side (RHS) has to be interpreted as an equivalence class in $\mathcal{F}^{\rm on}_{\alpha}\big(\mathring{\bH}^{d+1}\big)$ generated by any representative of $[g_{\alpha}]\in\mathcal{F}^{\rm on}_{\alpha}(O_{\epsilon, \bar{t}})$. Hence $\iota_{O,\bH}$ is an isomorphism of vector spaces. In order to promote it to a $*$-isomorphism at the level of algebras, it suffices to notice that the $*$-operation is left untouched by this map and that it preserves the symplectic form \eqref{eq:symplectic_form}. As a matter of fact, calling $G_O$ the restriction of $G_{\bH,\alpha}$ to $O_{\epsilon, \bar{t}}$, then, for any $[f_{\alpha}],[f^\prime_{\alpha}]\in \mathcal{F}^{\rm on}_{\alpha}\big(\mathring{\bH}^{d+1}\big)$, choose two representatives $f_{\alpha}, \, f^\prime_{\alpha}$ whose support lies in $O_{\epsilon, \bar{t}}$. Then, from \eqref{eq:symplectic_form}, it descends
$$\sigma_{\bH}([f_{\alpha}],[f^\prime_{\alpha}]) = \int_{\mathring{\bH}^{d+1}} \dd^{d+1}x \, f_{\alpha} \, G_{\bH,\alpha}(f^\prime_{\alpha}) = \int_{O_{\epsilon,\bar{t}}} \dd^{d+1}x f_{\alpha} \, G_O(f^\prime_{\alpha}) \, , $$
where the RHS is the symplectic form generating the $\star_\alpha$-product of $\mathcal{A}_{\alpha}^{\rm on}(O_{\epsilon,\bar{t}})$.
\end{proof}

Motivated by the analysis in \cite{Kay:1992es}, we investigate an additional property of $\mathcal{A}^{\rm on}_\alpha\big(\mathring{\bH}^{d+1}\big)$. In this paper it is remarked that a natural requirement for a quantum field theory in a non-globally hyperbolic spacetime is the existence of a $*$-isomorphism between the global algebra of observables, restricted to any globally hyperbolic subregion and the counterpart built directly in such a region from the equations of motion, following the standard procedure. This property is also known as {\em F-locality}. To this end, let us consider any globally hyperbolic subregion $D \subset \mathring{\bH}^{d+1}$ and let $G_D$ be the unique causal propagator associated to $P_\eta$. The existence of $G_D$ is guaranteed by the fact that $P_\eta$ is normally hyperbolic in $D$. Let $\mathcal{S}(D)$ be the set of all smooth solutions in $D$ of \eqref{eq:conformally_rescaled_dynamics} and, for any $[\tilde{f}]\in\frac{C^\infty_0(D)}{P_\eta[C^\infty_0(D)]}$, let 
$$F_{[\tilde{f}]}(\phi^\prime) = \int_D \dd^{d+1}x \, \tilde{f}(x) \, \phi^\prime(x) \, , \quad \phi^\prime\in\mathcal{S}(D) \, .$$
We call $\mathcal{A}(D)$ the $*$-algebra generated by these functionals together with a $\star$-product of the same form of \eqref{eq:algebraprodG} though with $G_{\bH,\alpha}$ replaced by $G_D$.

\begin{proposition}
	The algebra $\mathcal{A}_\alpha^{\rm on}(\mathring{\bH}^{d+1})$ is {\bf F-local}, namely it is $*$-isomorphic to $\mathcal{A}(D)$.
\end{proposition}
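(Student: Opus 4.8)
The plan is to establish F-locality at the level of the generating symplectic spaces, exploiting the fact that on the globally hyperbolic subregion $D$ the operator $P_\eta$ is normally hyperbolic, so that its causal propagator $G_D$ is \emph{unique}. Both $\mathcal{A}(D)$ and the restriction of $\mathcal{A}_\alpha^{\rm on}\big(\mathring{\bH}^{d+1}\big)$ to $D$ are generated by linear functionals labelled by test functions supported in $D$; since $C^\infty_0(D)\subset C^\infty_0\big(\mathring{\bH}^{d+1}\big)\subset\tilde{\mathcal{C}}_{\alpha,0}\big(\mathring{\bH}^{d+1}\big)$, the $\alpha$-boundary condition plays no role for such functions. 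It therefore suffices to compare the two $\star$-products, i.e.\ the two symplectic forms, on $C^\infty_0(D)$, and the whole statement reduces to the single claim that the restriction of $G_{\bH,\alpha}$ to $D\times D$ coincides with $G_D$.

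To prove this claim I would argue by uniqueness of the Cauchy problem in $D$. Fix $f\in C^\infty_0(D)$ and set $u\doteq G^+_{\bH,\alpha}(f)|_D$ and $u_D\doteq G^+_D(f)$; both solve $P_\eta(\cdot)=f$ in $D$ because $P_\eta\circ G^+_{\bH,\alpha}=\mathbb{I}$. By Lemma~\ref{lemma:KerGH0} one has $\textrm{supp}\big(G^+_{\bH,\alpha}(f)\big)\subseteq J^+_{\mathring{\bH}^{d+1}}(\textrm{supp}(f))$, and, using that $D$ is causally convex in $\mathring{\bH}^{d+1}$ (which is in turn causally convex in Minkowski spacetime, since straight causal segments joining points with $z>0$ stay in $z>0$), this intersects $D$ inside $J^+_D(\textrm{supp}(f))$, which also controls the support of $u_D$. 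Hence $w\doteq u-u_D$ is a smooth solution of $P_\eta w=0$ in $D$ supported in $J^+_D(\textrm{supp}(f))$, so it has vanishing Cauchy data on any Cauchy surface of $D$ lying to the past of $\textrm{supp}(f)$; global hyperbolicity of $D$ then forces $w\equiv 0$, giving $G^+_{\bH,\alpha}(f)|_D=G^+_D(f)$. The same argument applied to the advanced solutions yields $G^-_{\bH,\alpha}(f)|_D=G^-_D(f)$, whence $G_{\bH,\alpha}(f)|_D=G_D(f)$ for all $f\in C^\infty_0(D)$. This identification of the ambient propagator with the intrinsic one is the crux of the argument, and the point requiring care is precisely the support control: causal convexity of $D$ is what prevents the reflected-geodesic contribution detected in Corollary~\ref{WF_G_H} from entering $D\times D$.

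With the claim in hand I would define the candidate isomorphism as the descent of the identity map on $C^\infty_0(D)$. It is well defined on the quotients because the two relevant kernels agree: for $g\in C^\infty_0(D)$, if $G_D(g)=0$ then, by the standard characterization of $\ker G_D$ on the globally hyperbolic $D$ \cite{BGP}, $g=P_\eta\psi$ with $\psi\in C^\infty_0(D)\subset\tilde{\mathcal{C}}_{\alpha,0}\big(\mathring{\bH}^{d+1}\big)$, so $G_{\bH,\alpha}(g)=(G_{\bH,\alpha}\circ P_\eta)(\psi)=0$ by Proposition~\ref{prop:KerG}; conversely $G_{\bH,\alpha}(g)=0$ implies $G_D(g)=G_{\bH,\alpha}(g)|_D=0$ by the claim. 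Thus $g\mapsto g$ induces a vector space isomorphism between $C^\infty_0(D)/P_\eta[C^\infty_0(D)]$ and the label space of the $D$-localised generators of $\mathcal{A}_\alpha^{\rm on}\big(\mathring{\bH}^{d+1}\big)$, and the claim gives directly $\sigma_{\bH}\big(F_{[f]},F_{[f']}\big)=\int_{\mathring{\bH}^{d+1}}f\,G_{\bH,\alpha}(f')=\int_D f\,G_D(f')=\sigma_D\big(F_{[f]},F_{[f']}\big)$ for $f,f'\in C^\infty_0(D)$, so the map intertwines the symplectic forms and hence the two $\star$-products built from them. Finally, since the $\ast$-operation is complex conjugation on both sides and is left untouched by the map, this vector space isomorphism is promoted to a $\ast$-isomorphism between $\mathcal{A}(D)$ and the restriction of $\mathcal{A}_\alpha^{\rm on}\big(\mathring{\bH}^{d+1}\big)$ to $D$, which is the assertion of F-locality.
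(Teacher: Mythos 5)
Your proof is correct, but it follows a genuinely different route from the paper. The paper argues microlocally: it invokes Corollary~\ref{WF_G_H} to assert that $G_{\bH,\alpha}$ and $G_D$ have the same wavefront set over $D\times D$, concludes via \cite{Radzikowski:1996pa,Radzikowski:1996ei} that they differ by a smooth kernel $W_\alpha\in C^\infty(D\times D)$, and then adapts the deformation argument of \cite[\S 5.1]{Brunetti:2009pn} to produce the $*$-isomorphism. You instead prove the \emph{stronger} statement that $G^\pm_{\bH,\alpha}(f)|_D=G^\pm_D(f)$ exactly for $f\in C^\infty_0(D)$, by uniqueness of the retarded/advanced problem on the globally hyperbolic $D$ together with the support control coming from causal convexity; the $*$-isomorphism is then simply the identity on generators and no deformation is needed. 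This buys you two things the paper's argument leaves implicit: first, equality of wavefront sets of two distributions does not by itself force their difference to be smooth, so the paper's reduction to a smooth $W_\alpha$ really rests on the kind of uniqueness argument you supply; second, for the symplectic forms (hence the $\star$-products) to be intertwined by the identity map one wants the propagators to actually agree, not merely to differ by a smooth antisymmetric bisolution. What the paper's route buys in exchange is the conceptual link between F-locality and the singularity structure, namely that the reflected-geodesic singularities of Corollary~\ref{WF_G_H} cannot connect two points of a causally convex, globally hyperbolic $D$ --- a fact your uniqueness argument handles automatically without having to mention it.

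One small correction: your parenthetical claim that $\mathring{\bH}^{d+1}$ is causally convex in Minkowski spacetime is false (a timelike curve joining two points with $z>0$ may dip into $z\leq 0$; only geodesic segments are controlled by your observation), but it is also not needed. The support property of Lemma~\ref{lemma:KerGH0} is formulated with $J^\pm_{\mathring{\bH}^{d+1}}$, so the only hypothesis your argument uses is causal convexity of $D$ \emph{within} $\mathring{\bH}^{d+1}$, which is the standard convention for a globally hyperbolic subregion and is in any case implicitly required for the paper's own proof to go through. You should simply delete the parenthetical and state that hypothesis explicitly.
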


\begin{proof}
	In view of Corollary \eqref{WF_G_H}, we know that in every globally hyperbolic subregion $D \subset \mathring{\bH}^{d+1}$ $G_{\bH,\alpha}$ and $G_D$ have the same wavefront set. Hence, in view of \cite{Radzikowski:1996pa,Radzikowski:1996ei}, they differ only by a smooth term, i.e.~on $D$ there exists $W_{\alpha}\in C^\infty(D\times D)$ such that $G_{\bH,\alpha}=G_D+W_{\alpha}$. Since, as vector spaces $\mathcal{A}_\alpha^{\rm on}\big(\mathring{\bH}^{d+1}\big)$ and $\mathcal{A}(D)$ coincide, one can adapt to this context the standard analysis of \cite[\S 5.1]{Brunetti:2009pn} to conclude that the sought isomorphism exists.
\end{proof}


\subsection{Hadamard states}
\label{sec:hadamardstates}

We now discuss algebraic states for the system in hand, that is, linear functionals $\omega_{\alpha}^{\bH} : \mathcal{A}^{\rm on}_{\alpha}\big(\mathring{\bH}^{d+1}\big) \to \bC$ for which
\begin{equation*}
	\omega_{\alpha}^{\bH}(1) = 1 \, , \qquad \omega_{\alpha}^{\bH}(a^*a) \geq 0 \, , \, \forall a \in \mathcal{A}^{\rm on}_{\alpha}\big(\mathring{\bH}^{d+1}\big) \, .
\end{equation*}
Notable are the states which are defined in terms of their $n$-point correlation functions and especially useful are the Gaussian/quasifree ones. These are those states for which the odd $n$-point functions vanish and the even ones are of the form
\begin{equation*}
\omega^{\bH}_{\alpha,2n}(f_1 \otimes \ldots \otimes f_n) = \sum_{\pi_{2n} \in S'_{2n}} \prod_{i=1}^n \omega^{\bH}_{\alpha,2} \left(f_{\pi_{2n}(i-1)} \otimes f_{\pi_{2n}(i)}\right) \, ,
\end{equation*}
where $S'_{2n}$ is the set of ordered permutations of $2n$ elements. In addition, we look for those $\omega^{\bH}_{\alpha,2}$ which identify a distribution in $\mcd^\prime\big(\mathring{\bH}^{d+1}\times\mathring{\bH}^{d+1}\big)$ such that
\begin{equation*}
	\omega^{\bH}_{\alpha,2}(P_{\eta}f, f') = \omega^{\bH}_{\alpha,2}(f, P_{\eta}f') = 0 \, , \quad
	\omega^{\bH}_{\alpha,2}(f,f') - \omega^{\bH}_{\alpha,2}(f',f) = i G_{\bH,\alpha}(f,f') \, ,
\end{equation*}
for $f, \, f' \in C_0^{\infty}\big(\mathring{\bH}^{d+1}\big)$. Here, $G_{\bH,\alpha}$ is the causal propagator built in Proposition~\ref{prop:GHmodes}. Observe that, contrary to what happens for the algebra of free scalar fields on globally hyperbolic spacetimes, one has also to make sure that $\omega_{\alpha,2}$ is not only well-defined on $C^\infty_0(\mathring{\bH}^{d+1})$, but also on $\mathcal{C}_{\alpha,0}(\mathring{\bH}^{d+1})$ which is the space labelling all generators of $\mathcal{A}^{\rm on}_{\alpha}(\mathring{\bH}^{d+1})$. In \cite{Dappiaggi:2016fwc}, we have constructed via a mode expansion the ground state for a massive scalar field in $\PAdS_{d+1}$ or, equivalently, for a scalar field in $\mathring{\bH}^{d+1}$ obeying \eqref{eq:conformally_rescaled_dynamics}. In the following proposition we recollect the results already proven in \cite{Dappiaggi:2016fwc}, studying subsequently the properties of the ground state in the algebraic framework.

\begin{proposition} \label{prop:DNGAdS}
	The two-point function $\omega_{\alpha,2} \in \mathcal{D}^\prime(\PAdS_{d+1}\times\PAdS_{d+1})$ associated with the ground state for $\nu\in(0,1)$ and for $\alpha\in[\frac{\pi}{2},\pi]$ has integral kernel given by $\omega_{\alpha,2}(x,x') =  \lim_{\epsilon \to 0^+} \omega_{\alpha,2}(u_{\epsilon})$, with
	\begin{equation}\label{eq:state}
	\omega_{\alpha,2}(u_\epsilon) = \mathcal{N}_{\alpha} \left[ \cos(\alpha) \, \omega_2^{({\rm D})}(u_{\epsilon}) + \sin(\alpha) \, \omega_2^{({\rm N})}(u_{\epsilon}) \right] \, ,
	\end{equation}
	where $\mathcal{N}_{\alpha}$ is a normalization constant,
	\begin{equation*}
	\mathcal{N}_{\alpha} \doteq \frac{\Gamma \left(\frac{d-1}{2}\right)}{2^{d+1}\pi^{\frac{d+1}{2}} \left[\cos(\alpha)+\sin(\alpha)\right]} \, ,
	\end{equation*}

	\begin{subequations} \label{eq:Ghyp}
		\begin{align}
		\omega_2^{({\rm D})}(u_{\epsilon}) &= u_{\epsilon}^{-\frac{d}{2}-\nu} \, \frac{F \big(\tfrac{d}{2}+\nu, \tfrac{1}{2}+\nu; 1+2\nu; u_{\epsilon}^{-1}\big)}{\Gamma(1+2\nu)} \, , \label{eq:solution1} \\
		\omega_2^{({\rm N})}(u_{\epsilon}) &= u_{\epsilon}^{-\frac{d}{2}+\nu} \, \frac{F \big(\tfrac{d}{2}-\nu, \tfrac{1}{2}-\nu; 1-2\nu; u_{\epsilon}^{-1}\big)}{\Gamma(1-2\nu)} \, , \label{eq:solution2}
		\end{align}
	\end{subequations}
	$F$ is the Gaussian hypergeometric function and $u_{\epsilon} \doteq u(\sigma + 2i \epsilon (t-t') + \epsilon^2)$, with
	\begin{equation}
	u = u(\sigma) \doteq \cosh^2 \left(\frac{\sqrt{2\sigma}}{2}\right) \, .
	\end{equation}
	If $\nu\geq 1$, the only admissible option for the two-point function is \eqref{eq:state} with $\alpha=\pi$.
\end{proposition}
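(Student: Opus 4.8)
The plan is to recognize $\omega_{\alpha,2}$ as the positive-frequency bidistribution whose antisymmetric part reproduces the causal propagator $G_{\alpha}$ of \eqref{causal_propagator}, and then to verify the three defining properties of a ground-state two-point function: that it solves the equation of motion in each entry, that its commutator equals $iG_{\bH,\alpha}$, and that it is positive. First I would recall from \cite{Dappiaggi:2016fwc} the mode construction: selecting the modes of positive frequency with respect to the timelike Killing field $\partial_t$ and integrating them against the spectral measure associated with the Sturm-Liouville operator $L$ subject to the $\alpha$-boundary condition, one obtains $\omega_{\alpha,2}$ in the mode form
\[
\omega_{\alpha,2}(x,x') = \lim_{\epsilon \to 0^+}\sqrt{zz'}\int_0^\infty \dd q\, \tilde{I}_\epsilon(q,r,t,t')\,\frac{\psi_{c_\alpha}(z)\psi_{c_\alpha}(z')}{c_\alpha^2 - 2c_\alpha q^{2\nu}\cos(\nu\pi)+q^{4\nu}}\,,
\]
where $\tilde{I}_\epsilon$ is the positive-frequency counterpart of the kernel $I_\epsilon$ entering Proposition~\ref{prop:GHmodes} and $\psi_{c_\alpha}(z)=c_\alpha J_\nu(qz)-q^{2\nu}J_{-\nu}(qz)$, with $c_\alpha=\cot\alpha$.

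The second step is the resummation into closed form. Performing first the integral over the transverse Minkowski momentum $k$ produces a function of the Minkowski world function $\sigma_{\bM}$, and the subsequent $q$-integral of products $J_{\pm\nu}(qz)J_{\pm\nu}(qz')$ is a Weber--Schafheitlin-type integral that evaluates to Gaussian hypergeometric functions of the maximally symmetric variable $u=\cosh^2(\sqrt{2\sigma}/2)$ through \eqref{eq:sigmasigmaM}. The diagonal terms $J_\nu J_\nu$ and $J_{-\nu}J_{-\nu}$ assemble into $\omega_2^{({\rm D})}$ and $\omega_2^{({\rm N})}$ respectively, while the cross terms reorganize by the connection formula Eq.~(9.132) of \cite{Gradshteyn}, exactly as in Lemma~\ref{lem:convenient_rewriting}, yielding \eqref{eq:state}. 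Comparing with \eqref{causal_propagator}--\eqref{GN} then shows immediately that $G^{({\rm D})}(u)=\lim_{\epsilon\to0^+}[\omega_2^{({\rm D})}(u_\epsilon)-\omega_2^{({\rm D})}(u_{-\epsilon})]$, and analogously for $({\rm N})$; since swapping $x\leftrightarrow x'$ flips $t-t'$ and hence sends $\epsilon\to-\epsilon$ in the prescription $u_\epsilon=u(\sigma+2i\epsilon(t-t')+\epsilon^2)$, the antisymmetric part of $\omega_{\alpha,2}$ reproduces $G_{\alpha}$ up to the rescaling \eqref{rescaling} and the factor $i$ supplied by the $i\epsilon$ prescription, establishing the canonical commutation relation.

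For the equation of motion I would use that, in a maximally symmetric space, the Klein-Gordon operator acting on a bi-scalar depending only on $\sigma$ reduces to an ordinary differential operator in $u$, namely the hypergeometric equation solved by both $\omega_2^{({\rm D})}$ and $\omega_2^{({\rm N})}$; the $i\epsilon$ prescription then upgrades this pointwise identity to the distributional statement $\omega_{\alpha,2}(P_\eta f,f')=\omega_{\alpha,2}(f,P_\eta f')=0$. The case $\nu\geq1$ forces $\alpha=\pi$ because, as noted after \eqref{sol2}, only $\Phi_1$ remains square-integrable there, so the sole admissible boundary condition is the Dirichlet one.

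The main obstacle is positivity, i.e.\ proving that $\omega_{\alpha,2}$ is a genuine state. I would deduce this from the mode representation above: for $\alpha\in[\frac{\pi}{2},\pi]$ one has $c_\alpha\le0$, so for $\nu\in(0,1)$ the spectral denominator $c_\alpha^2-2c_\alpha q^{2\nu}\cos(\nu\pi)+q^{4\nu}$ is strictly positive on $(0,\infty)$ and the positive-frequency measure carried by $\tilde{I}_\epsilon$ has the correct sign, whence $\omega_{\alpha,2}(\bar{f},f)\ge0$. This is precisely where the hypothesis $\alpha\in[\frac{\pi}{2},\pi]$ is indispensable: for $c_\alpha>0$ the resolvent develops a pole at imaginary $q$, signalling the bound-state mode of the Remark following Proposition~\ref{prop:GHmodes}, which destroys positivity of the candidate ground state; this is why no ground state exists in that range. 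A secondary technical point is to ensure that the $\epsilon\to0^+$ limits are taken in $\mcd'(\PAdS_{d+1}\times\PAdS_{d+1})$ and that $\omega_{\alpha,2}$ extends continuously to the generators labelled by $\tilde{\mathcal{C}}_{\alpha,0}(\mathring{\bH}^{d+1})$, which follows from the boundary behaviour built into Definition~\ref{kinematic_configuration} together with the asymptotics of the Bessel functions as $z\to0$.
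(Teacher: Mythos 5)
The paper offers no proof of this proposition: it is explicitly a recollection of results established in \cite{Dappiaggi:2016fwc}, so there is no internal argument to compare against. Your reconstruction follows the same overall strategy as that reference --- build the ground state as a positive-frequency mode sum weighted by the spectral data of $L$ with the $\alpha$-boundary condition, resum into hypergeometric functions of $u$, and check the equation of motion, the commutator and positivity --- but two of your steps do not hold up. The first is the resummation, which is the entire content of the proposition. Because the spectral density carries the $q$-dependent denominator $c_\alpha^2-2c_\alpha q^{2\nu}\cos(\nu\pi)+q^{4\nu}$, expanding $\psi_{c_\alpha}(z)\psi_{c_\alpha}(z')$ and integrating ``diagonal'' and ``cross'' terms separately against Weber--Schafheitlin formulae is not available: none of the four terms has the product form those formulae handle once the denominator is attached, and it is far from obvious that the full $q$-integral collapses to the rigid, $q$-independent combination $\cos(\alpha)\,\omega_2^{({\rm D})}+\sin(\alpha)\,\omega_2^{({\rm N})}$ built from \eqref{eq:solution1}--\eqref{eq:solution2}. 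Invoking Eq.~(9.132) of \cite{Gradshteyn} ``as in Lemma~\ref{lem:convenient_rewriting}'' does not help here, since that lemma is a connection formula applied to the already-established closed form, not a tool for performing the mode integral; this computation is precisely the nontrivial part carried out in \cite{Dappiaggi:2016fwc} and cannot be waved through.

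The second problem is your positivity argument, which misattributes the role of the hypothesis $\alpha\in[\frac{\pi}{2},\pi]$. The denominator equals $\left(c_\alpha-q^{2\nu}\cos(\nu\pi)\right)^2+q^{4\nu}\sin^2(\nu\pi)$ and is therefore strictly positive for \emph{every} real $c_\alpha$ and $q>0$; its sign is not what distinguishes $c_\alpha\leq 0$ from $c_\alpha>0$. The genuine obstruction for $c_\alpha>0$ is that $L$ acquires a negative eigenvalue $\lambda=-c_\alpha^{1/\nu}$ (the bound state in the Remark after Proposition~\ref{prop:GHmodes}), so that $\omega^2=\lambda+|k|^2$ is negative for small transverse momenta and no positive-frequency splitting --- hence no ground state --- exists. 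You do mention the bound state, but the sentence tying the admissible range of $\alpha$ to the positivity of the denominator should be removed. The remaining items (equation of motion via the hypergeometric ODE in $u$, the commutator as the antisymmetric part under $\epsilon\to-\epsilon$, the restriction to $\alpha=\pi$ for $\nu\geq 1$ from square-integrability of the fundamental solutions, and the extension to test functions in $\tilde{\mathcal{C}}_{\alpha,0}$) are consistent with what the paper asserts.
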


Observe that $G_{\alpha} \in \mathcal{D}^\prime(\PAdS_{d+1}\times\PAdS_{d+1})$ is nothing but the antisymmetric part of \eqref{eq:state}. We rewrite conveniently the two-point function:

\begin{proposition}
	The two-point function $\omega_{\alpha,2} \in \mathcal{D}^\prime(\PAdS_{d+1}\times\PAdS_{d+1})$ for $\nu \in (0,1) \setminus \{\frac{1}{2}\}$ has integral kernel given by $\omega_{\alpha,2}(x,x') =  \lim_{\epsilon \to 0^+} \omega_{\alpha,2}(u_{\epsilon})$, with
	
	\begin{equation} \label{eq:2pfsymmform}
	\omega_{\alpha,2}(u_\epsilon) = \mathcal{N}_{\alpha} \left[ A_{\alpha} \, \tilde{\omega}_2(u_{\epsilon}) + B_{\alpha} \, \iota_z \, \tilde{\omega}_2(u_{\epsilon}) \right] \, ,
	\end{equation}
	where
	\begin{align*}
	\tilde{\omega}_2(u_{\epsilon}) = F \big(\tfrac{d}{2}+\nu, \tfrac{d}{2}-\nu; \tfrac{d+1}{2}\nu; u_{\epsilon}\big) \, ,
	\end{align*}
	and 
	\begin{align*}
	A_{\alpha} &= \cos(\alpha) \Upsilon_A(\nu) + \sin(\alpha) \Upsilon_A(-\nu) \, , \\
	B_{\alpha} &= \cos(\alpha) \Upsilon_B(\nu) + \sin(\alpha) \Upsilon_B(-\nu) \, ,
	\end{align*}
	and
	\begin{align*}
	\Upsilon_A(\nu) \doteq \frac{(-1)^{\frac{d}{2}}}{2^{\frac{d-1}{2}}} 
	\frac{\Gamma \left(\frac{d}{2}+\nu\right) \Gamma \left(\frac{d}{2}-\nu\right) \Gamma \left(1+2\nu\right)}{\Gamma \left(\frac{d+1}{2}\right) \Gamma \left(\frac{d-1}{2}\right)} 
	= - (-1)^{\frac{d}{2}-\nu} \, \Upsilon_B(\nu) \, .
	\end{align*}
\end{proposition}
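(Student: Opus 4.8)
The plan is to recognize that this statement is the non-antisymmetrized counterpart of Lemma~\ref{lem:convenient_rewriting}, so that the very same hypergeometric connection formulae apply verbatim at the level of the positive-frequency kernel. Indeed, comparing \eqref{eq:solution1}--\eqref{eq:solution2} with \eqref{GD}--\eqref{GN}, one sees that $\omega_2^{(\mathrm{D})}(u_\epsilon)$ and $\omega_2^{(\mathrm{N})}(u_\epsilon)$ are precisely the two terms whose antisymmetrization $(\epsilon\leftrightarrow-\epsilon)$ produced $G^{(\mathrm{D})}$ and $G^{(\mathrm{N})}$. Since $G_\alpha$ is the antisymmetric part of $\omega_{\alpha,2}$, it suffices to repeat the argument of Lemma~\ref{lem:convenient_rewriting} without discarding the symmetric contribution.

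Concretely, first I would invoke Eq.~(9.132) of \cite{Gradshteyn} to expand both $F\big(\tfrac{d}{2}+\nu,\tfrac{d}{2}-\nu;\tfrac{d+1}{2};u_\epsilon\big)$ and $F\big(\tfrac{d}{2}+\nu,\tfrac{d}{2}-\nu;\tfrac{d+1}{2};1-u_\epsilon\big)$ into the Frobenius basis around $u_\epsilon^{-1}=0$, obtaining a linear system identical to the one displayed in the proof of Lemma~\ref{lem:convenient_rewriting}, with the same coefficients $A_{\pm\nu}$. Next I would use the identification, established in the proof of Corollary~\ref{WF_G_H}, that the reflection $\iota_z$ corresponds to $u\mapsto 1-u$, so that $\iota_z\tilde\omega_2(u_\epsilon)=F\big(\tfrac{d}{2}+\nu,\tfrac{d}{2}-\nu;\tfrac{d+1}{2};1-u_\epsilon\big)$ and the two expansions become the action of $\iota_z$ on a single function.

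Inverting this $2\times 2$ system then expresses each Frobenius solution in terms of $\tilde\omega_2(u_\epsilon)$ and $\iota_z\tilde\omega_2(u_\epsilon)$, namely $\omega_2^{(\mathrm{D})}(u_\epsilon)=\Upsilon_A(\nu)\,\tilde\omega_2(u_\epsilon)+\Upsilon_B(\nu)\,\iota_z\tilde\omega_2(u_\epsilon)$ together with the analogous relation for $\omega_2^{(\mathrm{N})}$ obtained by $\nu\to-\nu$. This is exactly the inversion already performed in Lemma~\ref{lem:convenient_rewriting}, so the coefficients $\Upsilon_A,\Upsilon_B$ are unchanged. Substituting into \eqref{eq:state} and collecting the coefficients of $\tilde\omega_2$ and of $\iota_z\tilde\omega_2$ yields $A_\alpha$ and $B_\alpha$ as stated.

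The computation carries no genuine obstacle, since every ingredient has already been assembled for the causal propagator. The only point requiring care is the bookkeeping of the phase factors $(-1)^{\pm d/2\mp\nu}$ that distinguish the $u_\epsilon$ and $1-u_\epsilon$ expansions, together with the verification that the limit $\lim_{\epsilon\to0^+}$ may be taken after, rather than before, the algebraic rearrangement — which holds because the connection formulae are identities in the complex variable $u_\epsilon$ uniformly away from the cut. The singular value $\nu=\tfrac{1}{2}$, where the two Frobenius exponents collide and logarithms appear, is excluded from the statement and, if needed, would be handled separately via Eq.~(9.154) of \cite{Gradshteyn}, exactly as in Lemma~\ref{lem:convenient_rewriting}.
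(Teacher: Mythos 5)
Your proposal is correct and matches the paper's approach exactly: the paper's own proof of this proposition simply states that it is identical to that of Lemma~\ref{lem:convenient_rewriting}, i.e.\ one applies the same connection formulae from Eq.~(9.132) of \cite{Gradshteyn} to the kernels \eqref{eq:solution1}--\eqref{eq:solution2} without the $(\epsilon\leftrightarrow-\epsilon)$ antisymmetrization and inverts the resulting $2\times 2$ system. Your additional remarks on the identification of $\iota_z$ with $u\mapsto 1-u$, the phase bookkeeping, and the excluded case $\nu=\tfrac12$ are all consistent with what the paper does.
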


\begin{proof}
The proof is identical to the one of Lemma \ref{lem:convenient_rewriting}, hence we omit it.
\end{proof}

\begin{remark}
	Recall that the two-point functions $\omega_{\alpha,2} \in \mathcal{D}^\prime(\PAdS_{d+1}\times\PAdS_{d+1})$ and $\omega^{\bH}_{\alpha,2} \in \mathcal{D}^\prime\big(\mathring{\bH}^{d+1}\times\mathring{\bH}^{d+1}\big)$ are related by $\omega_{\alpha,2}(x,x') = (zz')^{\frac{d-1}{2}} \omega^{\bH}_{\alpha,2}(x,x')$. In particular, the analogue to \eqref{eq:2pfsymmform} may be written as
	\begin{equation}
	\omega_{\alpha,2}^{\bH}(u_\epsilon) = \mathcal{N}_{\alpha} \left[ A_{\alpha} \, \omega_2^{(+)}(u_{\epsilon}) + B_{\alpha} \, \iota_z \, \omega_2^{(-)}(u_{\epsilon}) \right]  \, ,
	\end{equation}
	with $\omega_2^{(\pm)}(u_{\epsilon}) \doteq (\pm zz')^{\frac{1-d}{2}} \tilde{\omega}_2(u_{\epsilon})$.
\end{remark}

We want to investigate whether the algebraic state whose two-point function is \eqref{eq:state} can be considered physically meaningful. The definition of a Hadamard state \cite{Radzikowski:1996ei,Radzikowski:1996pa} is valid for field theories in a globally hyperbolic spacetime and it cannot be straightforwardly generalized to $\mathring{\bH}^{d+1}$. In fact, as we saw in Section~\ref{sec:causalpropagator}, the causal propagator $G_{\bH,\alpha}$ has a richer singularity structure than that of one in a globally hyperbolic spacetime. In \cite{Dappiaggi:2016fwc} we have already investigated the local singular structure of \eqref{eq:state} in the case of $d=2,3$, comparing it with the one displayed in \cite{Dappiaggi:2014gea}. Here, we go one step beyond by looking for a microlocal characterization of the singularities of the two-point function. In particular, we show that there is full agreement with the results of \cite{Wrochna:2016ruq}. The reasons for pursuing this analysis are manifold, but one is especially relevant in our opinion. In globally hyperbolic spacetimes the definition of Hadamard state, as physically acceptable, yields that the wavefront set of its two-point function is the same as that of the Poincar\'e vacuum. This turns out to be not only of conceptual relevance, but also of practical use, especially for spacetime with no symmetries. We expect that, much in the same spirit, a physically acceptable state for a field theory in an asymptotically AdS spacetime will have the same singularities of the ground state in $\PAdS_{d+1}$. For this reason it is important to know the wavefront set of \eqref{eq:state}.

\begin{theorem}[Wavefront set of the two-point function]\label{th:WF}
	The wavefront set of the two-point function $\omega_{\alpha,2}$ coincides with that of $\omega_{\alpha,2}^\bH$, which in turn is given by
	\begin{equation} \label{WF}
	WF(\omega^{\bH}_{\alpha,2}) = \left\{ (x,k;x',k') \in T^*\big(\mathring{\bH}^{d+1} \times \mathring{\bH}^{d+1}\big) \setminus \{0\} : 
	(x,k) \sim_{\pm} (x',k'), \, k \triangleright 0 \right\} \, ,
	\end{equation}
	where $\sim_{\pm}$ means that there exist null geodesics with respect to the Minkowski metric $\gamma, \gamma^{(-)} : [0,1] \to\bR^{d+1}$ such that either
	$\gamma(0) = x = (\underline{x}, z)$ and $\gamma(1)=x'$ or $\gamma^{(-)}(0) = x^{(-)} = (\underline{x}, -z)$ and $\gamma^{(-)}(1) = x'$. Moreover,
	$k = (k_{\underline{x}}, k_z)$ ($k^{(-)} = (k_{\underline{x}}, -k_z)$) is coparallel to $\gamma$ ($\gamma^{(-)}$) at 0; and
	$-k'$ is the parallel transport of $k$ ($k^{(-)}$) along $\gamma$ ($\gamma^{(-)}$) at 1 (see Fig.~\ref{fig:wavefrontset}).
\end{theorem}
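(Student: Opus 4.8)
The strategy is to reduce the computation of $WF(\omega^{\bH}_{\alpha,2})$ to the two ingredients already in hand: the wavefront set of the causal propagator from Corollary~\ref{WF_G_H}, and a positivity-of-frequency argument that selects the $k \triangleright 0$ half of the cone. The key observation is that the decomposition \eqref{eq:2pfsymmform}, $\omega_{\alpha,2} = \mathcal{N}_{\alpha}[A_{\alpha}\tilde{\omega}_2 + B_{\alpha}\iota_z\tilde{\omega}_2]$, is structurally identical to the decomposition \eqref{eq:GG0} of $G_\alpha$; the two distributions have the \emph{same} singular support, sitting at $u=1$ (direct null geodesics, $\sigma_{\bH}=0$) and at $u=0$ (reflected null geodesics, $\sigma^{(-)}_{\bH}=0$). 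Thus the \emph{location} of the singularities and the geometry of the associated covectors will follow verbatim from the proof of Corollary~\ref{WF_G_H}, once I establish that the direct part $\tilde{\omega}_2$ is the two-point function of a Hadamard-type state on a globally hyperbolic subregion and that the reflected part is its image under $\iota_z$. The extra content of the theorem, beyond Corollary~\ref{WF_G_H}, is entirely the frequency-positivity condition $k \triangleright 0$, which is absent for the antisymmetric $G_{\bH,\alpha}$ and present for the symmetric $\omega^{\bH}_{\alpha,2}$.

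First I would treat the direct piece $\tilde{\omega}_2(u_\epsilon)$. As in Corollary~\ref{WF_G_H}, its singular support lies on $\{\sigma_{\bH}=0\}$, and any such pair of points together with the connecting null geodesic lies in a globally hyperbolic subregion of $\mathring{\bH}^{d+1}$. The crucial point is the role of the $i\epsilon$ prescription: the argument $u_\epsilon \doteq u(\sigma + 2i\epsilon(t-t') + \epsilon^2)$ encodes exactly the positive-frequency boundary value that characterizes a ground/Hadamard state. I would argue that on each globally hyperbolic subregion this prescription forces $\tilde{\omega}_2$ to satisfy the microlocal spectrum condition of Radzikowski \cite{Radzikowski:1996pa,Radzikowski:1996ei}, so that its wavefront set is the positive-frequency half $\sim_+$ with $k\triangleright 0$, rather than the full $\sim_\pm$ cone of the commutator. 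Equivalently, one may invoke that $\tilde{\omega}_2$ and $\frac{i}{2}\tilde{G}_{\bH}$ differ by a symmetric (hence real) distribution, together with the antisymmetric relation $\omega^{\bH}_{\alpha,2}(f,f')-\omega^{\bH}_{\alpha,2}(f',f)=iG_{\bH,\alpha}(f,f')$, to pin the frequency sign.

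Next I would handle the reflected piece $\iota_z\tilde{\omega}_2(u_\epsilon)$, whose singular support is at $\{\sigma^{(-)}_{\bH}=0\}$. Since $\iota_z$ is a diffeomorphism of $\bR^{d+1}$ leaving both $P$ and $P_\eta$ invariant, pushing forward through $\iota_z$ transforms the wavefront set of the direct part by the cotangent lift of $\iota_z$, as in the corresponding step of Corollary~\ref{WF_G_H}. The essential check is that $\iota_z$ \emph{preserves} the positive-frequency condition: because $\iota_z$ acts trivially on the time coordinate $t$ and only flips $z\mapsto -z$, it leaves the sign of the time component of the covector—and hence the relation $k\triangleright 0$—unchanged. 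This yields the reflected branch $\sim_-$ with $k^{(-)}=(k_{\underline{x}},-k_z)$ still of positive frequency, matching the statement. Finally I would combine the two branches: since the direct and reflected singular supports are disjoint (away from the measure-zero overlap where both $\sigma_{\bH}$ and $\sigma^{(-)}_{\bH}$ vanish, handled by standard wavefront-set additivity \cite{Hormander1}), the wavefront set of the sum is the union of the two, giving precisely \eqref{WF}. The transfer from $\omega^{\bH}_{\alpha,2}$ to $\omega_{\alpha,2}$ on $\PAdS_{d+1}$ is immediate, as the conformal rescaling factor $(zz')^{\frac{d-1}{2}}$ is smooth and nonvanishing on $\mathring{\bH}^{d+1}\times\mathring{\bH}^{d+1}$ and therefore does not alter the wavefront set.

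\textbf{Main obstacle.}
I expect the genuinely delicate step to be the rigorous justification of the frequency-positivity condition $k\triangleright 0$ from the $i\epsilon$ prescription in $u_\epsilon$, rather than the geometric identification of the singular support, which is essentially inherited from Corollary~\ref{WF_G_H}. The subtlety is that $u_\epsilon$ is a nonlinear function of $\sigma$, so the boundary value $\lim_{\epsilon\to 0^+}$ must be controlled near both $u=1$ and $u=0$, and one must verify that the prescription selects the same sign of frequency on every globally hyperbolic subregion in a manner compatible with the reflection $\iota_z$. Making this precise likely requires either a careful local analysis of the hypergeometric function $F$ near its branch point $u=1$ (and near $u=0$ for the reflected contribution), or an appeal to the fact that the ground state is by construction invariant under the isometry group and of positive frequency with respect to the timelike Killing field $\partial_t$, which fixes the half-cone globally and is preserved by $\iota_z$.
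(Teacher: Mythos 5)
Your proposal is correct and follows essentially the same route as the paper: the singular support is read off from the direct-plus-reflected decomposition, the condition $k\triangleright 0$ comes from the positive-frequency construction of the ground state (Radzikowski's microlocal spectrum condition on globally hyperbolic subregions), the reflected branch is obtained by pushing forward through the discrete isometry $\iota_z$ (which fixes $t$ and hence the frequency sign), and the reverse inclusion is pinned down via the relation to the causal propagator --- the paper cites \cite[Lemma 6.4]{Fewster} together with Corollary~\ref{WF_G_H} for exactly the step you phrase as ``equivalently, one may invoke\dots''. The only real difference is cosmetic: for $d=2,3$ the paper bypasses your ``main obstacle'' (controlling the $i\epsilon$ prescription near $u=0$ and $u=1$) by quoting from \cite{Dappiaggi:2016fwc} the identity $\omega_{\alpha,2}=H+c_{\alpha,\nu}\,\iota_z H+W_{\alpha}$ with $W_{\alpha}$ smooth, so Radzikowski's theorem applies verbatim, while for $d>3$ it argues exactly as you do; note also that the two singular supports $\{\sigma_{\bH}=0\}$ and $\{\sigma_{\bH}^{(-)}=0\}$ are in fact disjoint on $\mathring{\bH}^{d+1}\times\mathring{\bH}^{d+1}$ (their intersection would force $zz'=0$), so no overlap issue arises at all.
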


\begin{proof}
	First, we focus on the case $d=2,3$ for $\nu\in(0,1)$, recalling that $\alpha\in[\frac{\pi}{2},\pi]$. The case with $\nu\geq 1$ is equivalent to setting $\alpha=\pi$. It was shown in \cite{Dappiaggi:2016fwc} that the integral kernel of the two-point function $\omega_{\alpha,2} \in \mathcal{D}^\prime(\PAdS_{d+1}\times\PAdS_{d+1})$ reads
	$$\omega_{\alpha,2}(x,x') = H(x,x') + c_{\alpha,\nu} \, \iota_z H(x,x') + W_{\alpha}(x,x') \, , $$
	where $H$ is the Hadamard parametrix, $\iota_z$ the map defined in \eqref{inversion}, 
	\begin{equation} \label{eq:c_alphanu}
		c_{\alpha,\nu} \doteq i (-1)^{-\nu} \frac{\cos(\alpha) + (-1)^{-2\nu} \sin(\alpha)}{\cos(\alpha) + \sin(\alpha)} \, , 
	\end{equation}
	and $W_{\alpha}(x,x') \in C^{\infty}(\PAdS_{d+1}\times\PAdS_{d+1})$.
	From the analysis of Radzikowski \cite{Radzikowski:1996pa,Radzikowski:1996ei} and from the observation that $\iota_z$ implements the action of a discrete isometry of the Minkowski metric, the conclusion follows. For $d>3$, we start by observing from \eqref{eq:state} and from \eqref{eq:Ghyp} that the singular support of $\omega_{\alpha,2}$ contains those pair of points $x$, $x^\prime$ connected by a null geodesic either directly or after reflection at the boundary $z=0$. By adding the information that the state has been constructed by choosing positive frequencies, we can conclude that $WF(\omega_{\alpha,2})$, or more precisely, $WF(\omega_{\alpha,2}^{\bH})$ is contained or it coincides with the right hand side of \eqref{WF}. To infer that they are the same, one can follow the same argument of \cite[Lemma 6.4]{Fewster} together with \eqref{WF_G_H}.
\end{proof}

Hence, we propose the following definition for a Hadamard state for a scalar field in $\mathring{\bH}^{d+1}$. The definition for $\PAdS_{d+1}$ follows directly.

\begin{definition}[Hadamard state in $\mathring{\bH}^{d+1}$] \label{def:Hadamard_state}
	We call a state $\varpi^{\bH} : \mathcal{A}^{\rm on}_{\alpha}\big(\mathring{\bH}^{d+1}\big) \to \bC$ a quasifree, \emph{Hadamard state} for a scalar field in $\mathring{\bH}^{d+1}$ if its two-point function has a wavefront set of the form \eqref{WF}.
\end{definition}

In the next proposition we characterize the integral kernel of the two-point function of a quasifree, Hadamard state.\footnote{We are grateful to Klaus Fredenhagen for suggesting this line of reasoning.}

\begin{proposition} \label{prop:globaltolocal}
	Let $\varpi^{\bH} : \mathcal{A}^{\rm on}_{\alpha}\big(\mathring{\bH}^{d+1}\big) \to \bC$ a quasifree, Hadamard state for a scalar field in $\mathring{\bH}^{d+1}$. Then, the integral kernel of its two-point function reads
	\begin{equation} \label{eq:intker}
	\varpi^{\bH}_{\alpha,2}(x,x') = H^{\bH}(x,x') + c_{\alpha,\nu} \, \iota_z H^{\bH}(x,x') + W^{\bH}_{\alpha}(x,x') \, ,
	\end{equation}
	where $W^{\bH}_{\alpha}(x,x')\in C^\infty(\mathring{\bH}^{d+1}\times \mathring{\bH}^{d+1})$, $H^{\bH}(x,x')$ is the Hadamard parametrix for the operator $P_{\eta}$ in $\mathring{\bH}^{d+1}$ defined in \eqref{eq:conformally_rescaled_dynamics}, the map $\iota_z$ is defined in \eqref{inversion} and $c_{\alpha,\nu}$ in \eqref{eq:c_alphanu}. 
\end{proposition}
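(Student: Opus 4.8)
The plan is to reduce the statement to the corresponding decomposition of the ground state of Proposition~\ref{prop:DNGAdS}, for which the splitting into Hadamard parametrix, reflected parametrix and smooth remainder is already at our disposal, and then to show that \emph{any} other Hadamard state differs from it by a smooth kernel. Indeed, combining the explicit form obtained in the proof of Theorem~\ref{th:WF} with the conformal rescaling $\omega_{\alpha,2}(x,x') = (zz')^{\frac{d-1}{2}}\omega^{\bH}_{\alpha,2}(x,x')$ (the sign bookkeeping of the reflected term being handled by the functions $\omega_2^{(\pm)}$ of the preceding Remark), the ground state itself satisfies
$$\omega^{\bH}_{\alpha,2}(x,x') = H^{\bH}(x,x') + c_{\alpha,\nu}\,\iota_z H^{\bH}(x,x') + \widetilde{W}^{\bH}_\alpha(x,x'),$$
with $\widetilde{W}^{\bH}_\alpha \in C^\infty(\mathring{\bH}^{d+1}\times\mathring{\bH}^{d+1})$. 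Hence it suffices to prove that $\Delta \doteq \varpi^{\bH}_{\alpha,2} - \omega^{\bH}_{\alpha,2}$ is smooth, whereupon $W^{\bH}_\alpha \doteq \widetilde{W}^{\bH}_\alpha + \Delta$ furnishes the claimed smooth remainder.

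First I would record that $\Delta$ is symmetric. As $\varpi^{\bH}$ and the ground state are both states on $\mathcal{A}^{\rm on}_\alpha(\mathring{\bH}^{d+1})$, their two-point functions share the same antisymmetric part, fixed by the commutation relation $\omega_2(f,f') - \omega_2(f',f) = iG_{\bH,\alpha}(f,f')$; subtracting, the causal propagator cancels and $\Delta(x,x') = \Delta(x',x)$.

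The heart of the argument is a Radzikowski-type wavefront set estimate. By hypothesis $WF(\varpi^{\bH}_{\alpha,2})$ is the set $\Gamma$ appearing on the right-hand side of \eqref{WF}, and by Theorem~\ref{th:WF} the same is true of $WF(\omega^{\bH}_{\alpha,2})$; therefore $WF(\Delta) \subseteq \Gamma$. The symmetry of $\Delta$ forces $WF(\Delta)$ to be invariant under the exchange $(x,k;x',k') \mapsto (x',k';x,k)$, so $WF(\Delta)$ is also contained in the image $\Gamma^{\rm t}$ of $\Gamma$ under this involution. I would then check that $\Gamma \cap \Gamma^{\rm t} = \emptyset$: each covector in $\Gamma$ obeys the positive-frequency condition $k \triangleright 0$ at the first point, while the parallel-transport relation renders $-k'$ future-directed and hence $k'$ past-directed, for both the direct $(\sim_+)$ and the reflected $(\sim_-)$ branches, since $\iota_z$ flips only $k_z$ and leaves the time component untouched. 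Membership in $\Gamma^{\rm t}$ would instead require $k' \triangleright 0$, a contradiction. Thus $WF(\Delta) = \emptyset$, so $\Delta \in C^\infty(\mathring{\bH}^{d+1}\times\mathring{\bH}^{d+1})$ and the proof is complete.

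I expect the only delicate point to be the verification that $\Gamma \cap \Gamma^{\rm t} = \emptyset$ uniformly across the reflected branch, i.e.~that reflection through $z=0$ and parallel transport along null geodesics broken at the boundary preserve the time-orientation used in the positive-frequency bookkeeping. This is precisely the property underlying the applicability of \cite[Lemma~6.4]{Fewster}, already invoked in the proof of Theorem~\ref{th:WF}.
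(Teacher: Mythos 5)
Your proposal is correct and follows essentially the same route as the paper: write $\varpi^{\bH}_{\alpha,2}$ as the ground state plus a difference $\Delta$, observe that $\Delta$ is symmetric because both two-point functions share the causal propagator as antisymmetric part, and conclude $WF(\Delta)=\emptyset$ from the fact that its wavefront set is contained in a set satisfying the one-sided frequency condition $k\triangleright 0$, finally importing the explicit parametrix decomposition from the proof of Theorem~\ref{th:WF}. Your write-up is in fact slightly more careful than the paper's, since you make explicit the verification that $\Gamma\cap\Gamma^{\rm t}=\emptyset$ also on the reflected branch, a step the paper leaves implicit.
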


\begin{proof}
	Consider the ground state $\omega^{\bH}$ for a scalar field in $\mathring{\bH}^{d+1}$ satisfying \eqref{eq:conformally_rescaled_dynamics}. Write the two-point function associated to $\varpi^{\bH}$ as $\varpi_{2,\alpha}^{\bH} = \omega_{2,\alpha}^{\bH} + \Delta\omega_{2,\alpha}^{\bH}$, where $\Delta\omega_{2,\alpha}^{\bH} = \varpi_{2,\alpha}^{\bH} - \omega_{2,\alpha}^{\bH}$. By assumption, $WF(\varpi_{2,\alpha}^{\bH}) = WF(\omega_{2,\alpha}^{\bH})$ and since the antisymmetric part of $\varpi_{2,\alpha}^{\bH}$ and $\omega_{2,\alpha}^{\bH}$ coincide with the causal propagator, the integral kernel of $\Delta\omega_{2,\alpha}^{\bH}$ is symmetric. Combining this information with $WF(\Delta\omega_{2,\alpha}^{\bH}) \subseteq WF(\omega_{2,\alpha}^{\bH})$, it descends that $WF(\Delta\omega_{2,\alpha}^{\bH}) = \emptyset$. Hence, the integral kernel of $\varpi_{2,\alpha}^{\bH}$ differs from the one of $\omega_{2,\alpha}^{\bH}$ by a smooth term. The latter is of the sought form, as one can infer from the proof of Theorem~\ref{th:WF}.
\end{proof}

In the next proposition we show that this definition can be read as a generalization at the level of states of F-locality.

\begin{proposition} \label{prop:Flocality}
	Any quasifree, Hadamard state $\omega_{\alpha}^{\bH} : \mathcal{A}^{\rm on}_{\alpha}\big(\mathring{\bH}^{d+1}\big) \to \bC$ for a scalar field in $\mathring{\bH}^{d+1}$ obeying \eqref{eq:conformally_rescaled_dynamics}, in the sense of Definition~\ref{def:Hadamard_state}, is such that $\omega^{D}_{\alpha,2}$, the restriction to $D$ of the two-point function $\omega^{\bH}_{\alpha,2} \in \mcd^\prime\big(\mathring{\bH}^{d+1}\times\mathring{\bH}^{d+1}\big)$, has a wavefront set of Hadamard form
	\begin{align*}
	WF(\omega^{D}_{\alpha,2}) = \left\{ (x,k;x',k') \in T^*(D\times D) \setminus \{0\} : 
	(x,k) \sim (x',k'), \, k \triangleright 0 \right\}.
	\end{align*}
\end{proposition}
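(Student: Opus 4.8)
The plan is to reduce the statement to the singular structure already determined in Theorem~\ref{th:WF} and Proposition~\ref{prop:globaltolocal}, and then to exploit the fact that the \emph{reflected} contribution to the wavefront set cannot be detected inside a globally hyperbolic region. Recall that restriction of a distribution to an open set is a microlocal operation, so that $WF(\omega^{D}_{\alpha,2}) = WF(\omega^{\bH}_{\alpha,2}) \cap T^*(D\times D)$. Equivalently, by Proposition~\ref{prop:globaltolocal} the integral kernel of $\omega^{\bH}_{\alpha,2}$ equals $H^{\bH} + c_{\alpha,\nu}\,\iota_z H^{\bH} + W^{\bH}_{\alpha}$ with $W^{\bH}_{\alpha}$ smooth, so that on $D\times D$ only the parametrix $H^{\bH}$ and its reflected image $\iota_z H^{\bH}$ can produce singularities. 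I would therefore treat these two terms separately.

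For the direct term, the singular support of $H^{\bH}$ consists of the pairs $(x,x')$ with $\sigma_{\bM}(x,x')=0$, and its wavefront set is the $\sim_+$ part of \eqref{WF} with $k\triangleright 0$. Intersecting with $T^*(D\times D)$ and using that the Minkowski null geodesic joining two null related points of $D$ lies entirely in $D$ (as $D$ is causally convex), one recovers precisely the intrinsic Hadamard form claimed for $\omega^{D}_{\alpha,2}$; the inclusion is an equality because $H^{\bH}$ carries the full Radzikowski wavefront set on any such $D$.

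The crux is the reflected term $\iota_z H^{\bH}$, whose singular support is $\{\sigma_{\bM}^{(-)}(x,x')=0\}$, i.e.\ the pairs joined by a null geodesic reflected at $z=0$. I would show that no such pair lies in $D\times D$, whence $\iota_z H^{\bH}$ is smooth on $D\times D$ and the $\sim_-$ component of \eqref{WF} does not meet $T^*(D\times D)$. The key computation is that $\sigma_{\bM}^{(-)}(x,x')=0$ forces $\sigma_{\bM}(x,x') = -2zz' < 0$, so that $x$ and $x'$ are Minkowski-timelike separated while being joined by a broken null geodesic that necessarily touches the conformal boundary $z=0$. Since $D\subset\mathring{\bH}^{d+1}$ is globally hyperbolic, it cannot contain two points joined by a causal curve reaching $z=0$: such a curve amounts to causal influence propagating in from the boundary and would violate the compactness of the causal diamonds of $D$.

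The main obstacle is precisely this last geometric exclusion. Establishing it rigorously for an arbitrary globally hyperbolic $D$ (rather than, say, a causal diamond) requires arguing that the reflected null geodesic cannot be deformed into $D$ with fixed endpoints, which I would phrase through the Cauchy surfaces and domains of dependence of $D$. I note that the very same fact is used implicitly in the proof of F-locality, where $G_{\bH,\alpha}$ and $G_D$ are claimed to share their wavefront set on $D$; the present proposition is its counterpart at the level of states, and once the geometric exclusion is in hand the Hadamard form of $WF(\omega^{D}_{\alpha,2})$ follows at once.
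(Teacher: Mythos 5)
Your proposal is correct and follows essentially the same route as the paper, whose entire proof consists of the single assertion that no two points of $D$ can satisfy $x\sim_- x'$, so that by inspection of \eqref{WF} only the direct part of the wavefront set survives the restriction to $T^*(D\times D)$. The ``geometric exclusion'' you flag as the main obstacle is therefore precisely the step the paper leaves unargued, and your instinct about where the difficulty lies is sound: if $x,x'\in D$ were joined by a reflected null geodesic, its reflection point $x_0=(\underline{x}_0,0)$ would lie in the closure of $J^+(x)\cap J^-(x')$, and the points of the two null segments near $x_0$ stay at $z>0$, are causally related to both $x$ and $x'$ through curves contained in $\mathring{\bH}^{d+1}$ (a null segment on one side, the straight timelike chord on the other), hence belong to $J^+_D(x)\cap J^-_D(x')$ once $D$ is causally convex, and accumulate only at $x_0\notin\mathring{\bH}^{d+1}$ --- contradicting compactness of the causal diamonds of $D$. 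Some such convexity hypothesis on the globally hyperbolic subregion is genuinely needed (a disjoint union of two small diamonds around timelike-related points with $\sigma_{\bM}^{(-)}=0$ would otherwise be a counterexample), so your hesitation points at an implicit assumption in the paper's one-line proof rather than at a defect of your own argument.
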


\begin{proof}
	Let $D$ be any globally hyperbolic region of $\mathring{\bH}^{d+1}$ and let $\omega^{D}_{\alpha,2}$ be as per hypothesis. In $D$ there cannot exist two points $x, \, x^\prime$ such that $x\sim_- x^\prime$ in the sense of Theorem~\ref{th:WF}. Hence, per direct inspection of \eqref{WF}, the wavefront set of $\omega^{D}_{\alpha,2}$ reduces to the one sought.
\end{proof}


\subsection{Wick ordering}
\label{sec:wickordering}

One of the key applications of Hadamard states in globally hyperbolic spacetimes is the construction of Wick polynomials. We investigate shortly this issue in the wake of Definition \ref{def:Hadamard_state} and Proposition~\ref{prop:globaltolocal}. Let $\varpi_{\alpha}^{\bH} : \mathcal{A}^{\rm on}_{\alpha}\big(\mathring{\bH}^{d+1}\big) \to \bC$ be a quasifree, Hadamard state, hence with integral kernel of the two-point function of the form \eqref{eq:intker}.

Next, consider the collection of regular functionals $\mathcal{F}_\alpha\big(\mathring{\bH}^{d+1}\big)$ and endow it with a new product $\star_{H,\alpha}:\mathcal{F}_\alpha\big(\mathring{\bH}^{d+1}\big)\times\mathcal{F}_\alpha\big(\mathring{\bH}^{d+1}\big)\to\mathcal{F}_\alpha\big(\mathring{\bH}^{d+1}\big)$ such that
\begin{equation}\label{deformed_product}
F_{\alpha}\star_{H,\alpha}F'_{\alpha} = \alpha_{H,\alpha}\left(\alpha_{H,\alpha}^{-1}(F_{\alpha})\star\alpha_{H,\alpha}^{-1}(F'_{\alpha}) \right) \, , \quad 
\forall F_{\alpha},F'_{\alpha} \in \mathcal{F}_\alpha\big(\mathring{\bH}^{d+1}\big) \, ,
\end{equation}
where $\star$ is the product \eqref{eq:algebraprodG}, while 
$$\alpha_{H,\alpha}\doteq\sum\limits_{n=0}^\infty\frac{\Gamma_{H,\alpha}^n}{n!}:\mathcal{F}_\alpha\big(\mathring{\bH}^{d+1}\big)\to\mathcal{F}_\alpha\big(\mathring{\bH}^{d+1}\big) $$
is defined in terms of 
$$\Gamma_{H,\alpha} \doteq -i\int_{\mathring{\bH}^{d+1}\times\mathring{\bH}^{d+1}} \left[H^{\bH}(x,x') + c_{\alpha,\nu} \, \iota_z H^{\bH}(x,x')\right] \frac{\delta}{\delta\phi(x)}\otimes\frac{\delta}{\delta \phi(x')} \, . $$
The algebra $\big(\mathcal{F}_{\alpha} \big(\mathring{\bH}^{d+1}\big), \star_{H,\alpha} \big)$ is isomorphic to $\mathcal{A}_{\alpha}\big(\mathring{\bH}^{d+1}\big) = \big(\mathcal{F}_{\alpha} \big(\mathring{\bH}^{d+1}\big), \star_\alpha \big)$, but  the new product allows for the inclusion of local non linear functionals, the so-called microcausal functionals. We recall here the definition adapted to our context:

\begin{definition}
	We call $F_{\alpha} : \mathcal{C}_{\alpha}\big(\mathring{\bH}^{d+1}\big) \to \bC$ a \emph{microcausal functional} if, for all $n \geq 1$ and for all $\Phi_{\alpha} \in \mathcal{C}_{\alpha}\big(\mathring{\bH}^{d+1}\big)$, $F_{\alpha}^{(n)}(\Phi_{\alpha}) \in \mathcal{E}'\big(\mathring{\bH}^{d+1}\big)^{\otimes n}$. Only a finite number of functional derivatives do not vanish and $WF(F_{\alpha}^{(n)}) \subset \Xi_n$, with
	\begin{equation*}
		\Xi_n \doteq T^*\big(\mathring{\bH}^{d+1}\big)^n \setminus \left\{ (x_1, \ldots, x_n, k_1, \ldots, k_n) | (k_1, \ldots, k_n) \in \left.\big(\bar{V}_+^n \cup \bar{V}_-^n\big)\right|_{(x_1, \ldots, x_n)} \right\}
	\end{equation*}
	where $\bar{V}_\pm$ are the subsets of $T^*\big(\mathring{\bH}^{d+1}\big)$ formed by elements $(x_i,k_i)$ in which each $k_i$ lies in the closed future $(+)$ and in the closed past $(-)$ light cone. The set of microcausal functionals is denoted by $\mathcal{F}_{\mu,\alpha}\big(\mathring{\bH}^{d+1}\big)$.
\end{definition}

\begin{definition}\label{Wick-polynomials}
	We call $\mathcal{A}_{\mu,\alpha} \big(\mathring{\bH}^{d+1}\big) \equiv \big(\mathcal{F}_{\mu}\big(\mathring{\bH}^{d+1}\big), \star_{H,\alpha} \big)$ the \emph{extended algebra of Wick polynomials} in $\mathring{\bH}^{d+1}$.
\end{definition}

\begin{remark}
	In view of Definition \ref{Wick-polynomials}, one might wonder whether, choosing any globally hyperbolic subregion $D\subset\mathring{\bH}^{d+1}$, the algebra of Wick polynomials in $\mathring{\bH}^{d+1}$ restricted to $D$, $\mathcal{A}_{\mu,\alpha}(D)$, agrees with $\mathcal{A}_{\mu}(D)$. This is the collection of microcausal functionals supported in $D$ with the product $\star_{H}$. This is defined as $\star_{H,\alpha}$ though with $\Gamma_{H,\alpha}$ replaced by 
	$$\Gamma_H = -i\int_{\mathring{\bH}^{d+1}\times\mathring{\bH}^{d+1}}  H^{\bH}(x,x') \frac{\delta}{\delta\phi(x)}\otimes\frac{\delta}{\delta \phi(x')} \, . $$
	By direct inspection one can realize that the two algebra differ, since in $\Gamma_{H,\alpha}$ there is the additional contribution $c_{\alpha,\nu} \, \iota_z H^{\bH}(x,x')$. Yet, in view of Proposition \ref{prop:Flocality}, this new term is smooth and thus the algebra in $D$ stemming from Definition \ref{Wick-polynomials} and the one constructed only using $H^{\bH}(x,x')$ are $*$-isomorphic \cite{Hollands:2001nf}.  
	
	The reason for sticking to Definition \ref{Wick-polynomials} and not to the standard one is due to the fact that the local definition of Wick polynomials fails to intercept the singularities due to the light rays reflected at the boundary $\partial\bH^{d+1}$. In other words, one cannot start from each $\mathcal{A}_\mu(D)$ to reconstruct a global algebra of Wick polynomials. Nonetheless, it is worth mentioning that, in some specific scenarios, it might be physically meaningful to consider the algebra of Wick polynomials constructed locally. A special instance was investigated in \cite{Dappiaggi:2014gea}.
\end{remark}


\section{Conclusions}

We have analysed the algebraic quantization of a real, massive scalar field in $\PAdS_{d+1}$ in terms of an equivalent theory in $\mathring{\bH}^{d+1}$. Although it is customary to summarize the results of the paper, we prefer instead only to highlight two directions for future investigations which we deem noteworthy. 
The first concerns the generalization of our construction to stationary spacetimes with a timelike boundary, the second to the so-called asymptotically AdS spacetimes. Concerning the latter, there are many inequivalent definitions in the literature and it is unclear to us which is the best possible choice. In addition, all our results, ranging from the study of admissible boundary conditions and of the associated causal propagator to the existence of a suitable notion of Hadamard states need to be investigated from scratch. Especially if one has the long term goal of further applying the algebraic approach to the realm of the AdS/CFT correspondence and  more in general of the holographic principle, this is in our opinion the first question to answer.


\section*{Acknowledgments}  
We are grateful to Nicol\`o Drago, Klaus Fredenhagen, Igor Khavekine, Jorma Louko, Gabriele Nosari, Pedro Lauridsen Ribeiro and Nicola Pinamonti for useful comments and discussions, and especially to Micha{\l} Wrochna for enlightening remarks concerning Section 4. The work of C.D. was supported by the University of Pavia. The work of H.~F. was supported by the INFN postdoctoral fellowship ``Geometrical Methods in Quantum Field Theories and Applications''.


\end{document}